%% file: main.tex
\documentclass{article}

\usepackage{arxiv}

\usepackage{graphicx}
\usepackage{hyperref}
\usepackage{amsmath, amssymb}
\usepackage{amsfonts}
\usepackage{amsthm}
\usepackage{dsfont}
\usepackage{mathrsfs}
\usepackage{algpseudocode}
\usepackage{subcaption}
\usepackage{multirow}
\usepackage{soul}
\usepackage{float}
\usepackage{listings}
\DeclareCaptionStyle{ruled}{labelfont=normalfont,labelsep=colon,strut=off} 
\floatstyle{ruled}
\newfloat{listing}{tb}{lst}{}
\floatname{listing}{Listing}

\usepackage[cal=rsfs]{mathalfa}
\usepackage[table]{xcolor}
\definecolor{lightgray}{rgb}{0.83, 0.83, 0.83}

\newtheorem{definition}{Definition}
\newtheorem{theorem}{Theorem}
\newtheorem{lemma}{Lemma}
\newtheorem{proposition}{Proposition}
\newtheorem{remark}{Remark}

\newcommand{\ourmech}{\texttt{2-Prop}}
\newcommand{\defi}{\texttt{DeFi}}

\newcommand{\mev}{\texttt{MEV}}
\newcommand{\curmech}{$\Xi$}
\newcommand{\thd}{\textsf{K}}

\title{Let Leaders Play Games: Improving Timing in Leader-based Consensus}

\author{
 Rasheed M\\
  Machine Learning Lab\\
  IIIT Hyderabad\\
  Hyderabad, India\\
  \texttt{mohammad.ahmed@research.iiit.ac.in}\\
   \And
 Parth Desai \\
  Machine Learning Lab\\
  IIIT Hyderabad\\
  Hyderabad, India\\
  \texttt{parth.desai@research.iiit.ac.in}\\
  \And
Sujit Gujar \\
  Machine Learning Lab\\
  IIIT Hyderabad\\
  Hyderabad, India\\
  \texttt{sujit.gujar@iiit.ac.in}\\
}

\begin{document}
\maketitle
\begin{abstract}
Propagation latency is inherent to any distributed network, including blockchains. Typically, blockchain protocols provide a timing buffer for block propagation across the network. In leader-based blockchains, the leader -- block \emph{proposer} --  is known in advance for each slot. A fast (or low-latency) proposer may delay the block proposal in anticipation of more rewards from the transactions that would otherwise be included in the subsequent block. Deploying such a strategy by manipulating the timing is known as \emph{timing games}. It increases the risk of missed blocks due to reduced time for other nodes to vote on the block, affecting the overall efficiency of the blockchain.
 Moreover, proposers who play timing games essentially appropriate \mev\ (additional rewards over transaction fees and the block reward) that would otherwise accrue to the next block, making it unfair to subsequent block proposers. We propose a double-block proposal mechanism, \ourmech\, to curtail timing games. \ourmech\ selects two proposers per slot to propose blocks and confirms one of them. We design a reward-sharing policy for proposers based on how quickly their blocks propagate to avoid strategic deviations. In the induced game, which we call the \emph{Latency Game}, we show that it is a Nash Equilibrium for the proposers to propose the block without delay under homogeneous network settings. Under heterogeneous network settings, we study many configurations, and our analysis shows that a faster proposer would prefer not to delay unless the other proposer is extremely slow. Thus, we show the efficacy of \ourmech\ in mitigating the effect of timing games. 
\end{abstract}

\section{Introduction}
\label{sec:intro}

\textbf{Consensus Protocols and Blockchains.}
\label{ssec:con_proto}
\emph{Blockchain} is a distributed database that stores data in a sequence of blocks, primarily consisting of transactions. A distributed collection of nodes, known as miners or validators, maintains the blockchain database. \emph{Consensus protocols} are critical in distributed systems, as they ensure all participating nodes agree on a single, consistent state despite failures or malicious behavior. This agreement is crucial in blockchains, where a distributed network must validate and append new data without relying on a central authority/administrator. Nakamoto proposed Bitcoin~\cite{nakamoto2008bitcoin} based on the \emph{Proof-of-Work} (PoW) mechanism, in which nodes must solve a cryptographic puzzle to propose a new block; the process of solving the puzzle is called \emph{mining}. The probability of mining the next block is proportional to the computational power of the \emph{miner} (node). Following Bitcoin's success, more energy-efficient \emph{leader-based} blockchain protocols have been proposed.

Leader-based consensus protocols progress in rounds, also known as \emph{slots}. In each slot, the protocol selects a committee of nodes, referred to as \emph{validators}, to add a new block to the ledger. The node that proposes the next block is referred to as \emph{proposer}, and the other nodes that validate and attest to the proposed block are referred to as \emph{attestors}. Popular consensus protocols include  \emph{Proof-of-Stake} (PoS) \cite{buterin2013ethereum, buchman2016tendermint, ourborous, ouroborosGenesis, ouroborosPraos, yin2019hotstuff}, Proof-of-Space~\cite{ProofofSpace}, Proof-of-Reputation~\cite{repuCoin}. In these protocols, all the validators, particularly the proposer for the next slot, are known in advance. These protocols specify the expected progression within each slot, along with rewards for both proposers and attestors. Typically, the proposer gathers transactions after the previous block is proposed and publishes its block at the beginning of the next slot, $t=0$, so that it reaches all the attestors by the attestation deadline, $t=\tau_1$. Attestors then submit their attestations, which are aggregated by the aggregation deadline ($t=\tau_2$), and the block is confirmed/rejected at the end of the slot, \(t=\tau\) (see Figure~\ref{fig:slotProgEth}). Upon confirmation, the proposer receives transaction fees and block rewards, while attestors receive rewards for correct attestations, i.e., those that align with the majority.

Within the blockchain ecosystem, \emph{Decentralized Finance} (\defi) has reshaped traditional financial structures. A key phenomenon in this transformation was \emph{Maximal Extractable Value} (\mev), which refers to the maximum value that can be extracted from block production over the standard block reward and transaction fees by including, excluding, and reordering of transactions in a block \cite{ethereum2024b}. 
\mev\ thrives in environments with any DeFi activity and has profound implications on decentralization \cite{sokMEV, RegMEV}. 

A strategic proposer with access to a faster network can delay the block proposal until $0<\delta<\tau_1$, in anticipation of capturing more transaction fees and extracting more \mev\ from the additional transactions from $t=0$ to $t=\delta$. Deploying such a strategy by manipulating the timing of block proposal is known as \emph{timing games}~\cite{timingTimingpics}. The concept of a timing game is prevalent in the Ethereum blockchain~\cite{BurakTiming2023}, and therefore, we outline the main ideas of this paper in this context. However, our results hold for any leader-based blockchain protocol where the leader for each slot is known in advance. 

\textbf{Criticality of Resolving Timing Games.}\label{ssec:timing_game_imp}
The emergence of \mev\ highlighted how timing games can reduce the time for attestation and result in an increased risk of missed slots, affecting the overall health and efficiency of the network 
\cite{wahrstatter2023timebribemeasuringblock}. In addition, honest proposers can find the optimal block (or \mev\ bundles) from high-valued transactions available only after $\tau-\delta$ time (as the previous proposer would have captured such transactions). Thus, it may contain fewer or less profitable \mev\ opportunities. Essentially, the proposers playing timing games steal the transaction fees/\mev\ that the next proposer could have earned. This incentive misalignment creates a centralizing pressure, as only sophisticated validators, such as those connected to relay services (more on this in Section \ref{sec:rel_work}) with fast networks, can avoid losses. Due to this unfair environment, honest proposers may drop out of the system. Thus, the timing game is a serious concern for fairness and centralization. This paper focuses on resolving timing games in leader-based consensus protocols for blockchains. 

\begin{figure}[t]
    \centering
    \includegraphics[scale=1.8]{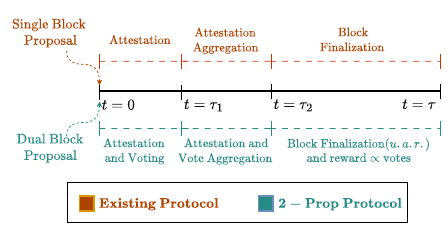}
    \caption{Progression within Slot}
    \label{fig:slotProgEth}
\end{figure}

\textbf{Challenges.}\label{ssec:timing_game_chall}
If all proposers shift to a fast network and delay the block proposal by some $\delta_{new}$, the result is a new system where each event is right-shifted by $\delta_{new}$. Proposers in the new system can further delay close to the new attestation deadline, resulting in the same problem. 

One may reasonably expect attestors to stop timing games. However, there are two problems with it: (i) An attestor cannot distinguish intentional proposer delays from natural network delays; (ii) Even if attestors are highly confident in the timing of the block proposal, it is a Nash equilibrium (NE) for them to attest for a block received within $\tau_1$. Attestors receive rewards only when their attestation aligns with the majority threshold attestations and gets confirmed on the blockchain \cite{eth2bookUpgradingEthereum}. Alternatively, the fees from one proposer can be given to the proposer of the next block. This approach has adverse implications for the consensus. The proposer can always produce empty blocks irrespective of the rewards earned from the previous block, affecting throughput. Thus, there is a need for a better solution to curtail timing games. 

\textbf{Our Approach.}
\label{ssec:our_approach}
The challenge in addressing timing games lies in distinguishing intentional proposer delays from natural network delays. To enforce early block publication, we introduce a double block-proposal mechanism, which we refer to as \ourmech. The protocol selects two proposers per slot and allows each to propose a block. Similar to \curmech, the attestors sign a valid block within $\tau_1$, and a block is confirmed only when it achieves at least a threshold number of votes ($\thd$). The issues to consider are: (i) Which block should the attestors attest to? (ii) Which blocks should be confirmed? and (iii) Who should get the rewards? In other words, how should incentives be designed?

\ourmech\ uses randomization when both blocks obtain \thd\ attestations and uses a reward-sharing policy that incentivizes both fast and slow proposers to mitigate the effect of timing games. We analyze \ourmech\ in a game-theoretic framework. Specifically, we model it as a two-player game, \emph{Latency Game}, where the strategies are the amounts of delay in the block proposal. We derive a closed-form solution for their expected utilities as functions of block delays and network parameters. We show that in the \emph{homogeneous proposer} setting (both the proposers have the same network parameters), it is a \emph{Nash Equilibrium} (NE) not to delay the proposal. In the \emph{heterogeneous proposers} setting, analytically computing NE for general network parameters is challenging. Hence, we model it as a bi-matrix game with a discretized strategy space ($\delta$s here). We analyze such a bi-matrix game for many network configurations. Our analysis shows that a slower proposer never delays the block proposal, and a faster proposer delays the block only when the expected propagation delay of a slower proposer is at least $\tau_1-\epsilon$, the attestation deadline. Note that, if the expected propagation delay of a proposer is closer to $\tau_1$, effectively, there is no competition, and the instance is the same as the existing protocol. In summary, we curtail timing games in \ourmech\ by introducing competition between two proposers to reach more attestors as early as possible, while providing the right incentives. To the best of our knowledge, this is the first work to mitigate timing games in blockchains via a game-theoretic approach.

\textbf{Contributions.}
\label{ssec:contributions}
Our key contributions are as follows:
(i) We address the timing games problem through a game-theoretic approach.
(ii) We propose \ourmech\, where two proposers are selected for each slot each proposing a block; (iii) We propose a simple yet effective method to confirm block based on the attestations; (iv) We propose reward sharing between the proposers that incentives faster block proposal; (iv) We model block proposal in \ourmech\ as a 2-player game, \emph{Latency Game}; (v) Under homogeneous proposers setting, we show that it is Nash Equilibrium for the proposers not to delay block proposal; (vi) Under heterogeneous proposers setting, we show that a faster proposer delays only when the expected time for block to reach an attestor is close to $\tau_1$.

\textbf{Organization of the paper.}
\label{ssec:paper_org}
The rest of the paper is organized as follows: Section \ref{sec:prelims} provides a brief overview of blockchain, game theory, and timing games in blockchain. Section \ref{sec:our_approach} explains the proposed approach to mitigate timing games. Section \ref{sec:thry_analysis} discusses properties of \ourmech\ and Section \ref{sec:discussion} discusses the possible implications of \ourmech. Section \ref{sec:rel_work} discusses the related work around timing games in blockchains, and Section \ref{sec:conclusion} concludes the work with directions for future work.

\section{The Model and Preliminaries}
\label{sec:prelims}
We present our blockchain framework, network model, and relevant game theory concepts to formally define the timing game problem.

\subsection{Blockchain}
\label{ssec:block_proto}

\paragraph{Blockchains.}
\label{sssec:model}
Let $\mathfrak{L}$ be the distributed database, blockchain, consisting of a sequence of blocks. The state of the blockchain at beginning of slot $\ell$ is denoted by sequence $\mathfrak{L}^\ell = (B^{\ell-1}, \ldots, B^{1}, B^{0})$, where $B^i$ is the block confirmed at slot $i$. The state changes at each slot when a new (confirmed) block gets added, \(\mathfrak{L}^{\ell+1} \leftarrow B^\ell\cdot\mathfrak{L}^\ell\), according to the underlying consensus protocol $\Xi$.  

\paragraph{Leader-based Consensus Protocol for Blockchains.}
\label{ssec:con}
We assume $\Xi$ is a leader-based consensus protocol that uses a validator selection function \(\mathscr{V}(\mathfrak{L}^{\ell},n)\) to select the committee of nodes for each slot. For any slot $\ell$, $\Xi$ selects (i) $n + 1$  nodes as \emph{validators}, \(V^{\ell}\), (ii) out of which one node is selected as the \emph{block proposer}, $P^{\ell}$ and the remaining as \emph{attestors} $A^{\ell} = \{A^\ell_0, \ldots A^\ell_{n-1}\}$. \footnote{The selection of attestors and the proposer is through some cryptographic functions such as verifiable random functions (VRF) \cite{micali1999verifiable} or distributed random beacons (DRBs)~\cite{randao}. Our description of $\Xi$ is motivated by Ethereurm. Note that many protocols, such as PBS \cite{ethereum2024a}, DFINITY's ICC protocol~\cite{camenisch2022internet}, Harmony~\cite{avarikioti2022harmony}, Snowwhite~\cite{bentov2016snow}, etc., follow the same abstraction; their underlying security primitives are different.}  
Formally, we have \(V^{\ell} \leftarrow \mathscr{V}(\mathfrak{L}^{\ell},n)\) where $A^\ell$, $P^\ell$ forms a partition on $V^\ell$. 
The proposer $P^{\ell}$ proposes the block at the beginning of the slot $\ell$ ($t=0$ in the slot timeline). $A^{\ell}$ wait till $t=\tau_1$ (\emph{attestation deadline}), to receive the block. The attestors that receive the block within $t=\tau_1$, validate and sign it, if it is valid. This signature is referred to as \emph{attestation}. $A^\ell$ broadcasts their attestations to the network, and all attestations are aggregated by the end of $\tau_2$ (\emph{aggregation deadline}). The block is confirmed if at least \thd\ attestors attest to it. $\Xi$ allows time till $\tau$ for the network to determine if the new block is confirmed and the next proposer to propose a block for the next slot $\ell+1$.  

If $P^{\ell}$'s block achieves at least \thd\ attestations, $P^\ell$ gets \emph{block reward} and attestors $A^{\ell}$ get \emph{attestation reward} if its attestation aligns with the majority and the block is confirmed. Additionally, $P^{\ell}$ collects transaction fees and can extract additional value by reordering transactions. The latter is known as \emph{maximal extractable value}, \mev~\cite{ethereum2024b}. 
Let $\mathfrak{U}$ denote the expected reward for $P^{\ell}$ over the block reward collected just before the start of the slot. That is, $\mathfrak{U}$ is the expected transaction fees and \mev\ from the block that $P^{\ell}$ can accumulate from $t=-\tau$ to $t=0$ (w.r.t. to the start of the slot). Note that each proposer's block reward is constant and does not change with delay; hence, we do not model it in the utility/rewards.

\paragraph{Objectives of Proposers}
\label{sssec:prop_attestor_objectives}
A proposer with low-latency network access can propose the block at a delay of $0 < \delta \le \tau_1$ into the slot instead of $t=0$, and capture additional transaction fees and \mev\ of $v(\delta)$. Note that $v(\delta)$ is monotonically non-decreasing in $\delta$. Thus, if its block is confirmed, its expected reward is $\mathbb{E}[\mathfrak{U}$ + $v(\delta)]$. 

Let the probability of reaching at least \thd\ attestors when the block is delayed by $\delta$ be $M_{\delta}^{\thd}$, then the objective of a proposer is to select a delay that maximizes $\mathbb{E}[\mathfrak{U}$ + $v(\delta)]$.
\begin{equation}
    \delta^{\star} =\max_{\delta \in [0,\tau_1]} \mathbb{E}[\mathfrak{U} + v(\delta)] = \max_{\delta \in [0,\tau_1]} (\mathfrak{U}+v(\delta)) \cdot M_{\delta}^{\thd}\label{eq:delta}
\end{equation}

\paragraph{Practical Parameters} The Ethereum protocol specification parameters are as follows: $\tau=12$ seconds, $\tau_1=4,\tau_2=8$, and $n=127,\thd=\lfloor\frac{2n}{3}\rfloor+1$ \cite{ethereum_validator_phase0}. The observed value of $\delta^\star$ in practice is between $2.5$ and $3$ seconds \cite{blazquez2023relays}.

Due to the delayed block proposal, either: (i) The block fails to get $\thd$ attestations with probability $1-M_\delta^{\thd}$, affecting the throughput, or (ii) The block gets confirmed and $P^{\ell}$ steals  $v(\delta)$ from $P^{\ell+1}$. We aim to modify $\Xi$ such that the maximum delay in block proposal at equilibrium $\delta^{NE}$ is as close as possible to zero. We drop $\ell$ from the notation unless required. To analyze $M_{\delta}^{\thd}$, we make certain reasonable network assumptions. Additionally, we borrow a few concepts from game theory to model proposers' strategic behavior. We elaborate on these in the following section.

\begin{table}[ht]
    \centering
    \begin{tabular*}{\linewidth}{l l}
         \hline
         \textbf{Notation} & \textbf{Intrepretation}\\
         \hline 
          &\\
         \multicolumn{2}{l}{\emph{Protocol}}\\
         $\Xi$ & A leader-based protocol\\
         $n$ & Number of attestors\\
         $\tau$  & Duration of slot\\
         $\tau_1$ & Duration of sub-slot for attestation\\
         $\thd$ & Minimum attestation required for block finalization\\
                   &\\
         \hline
                   &\\
          \multicolumn{2}{l}{\emph{Wrt. to Slot} $\ell$}\\
         $\mathfrak{L}^{\ell}$  & State of the blockchain\\
         $V^l$ & Set of validators\\
         $A^\ell $& Set of attestators, $A^\ell = \{A_0,\ldots A_{n-1}\}$\\
         $\mathbf{P}^\ell $& Set of proposers, \(\mathbf{P}^\ell= \{P_0,P_1\}\)\\
         $B^\ell_i$ & Block proposed by proposer $P^l_i$\\
         $X_i$ & \#Attestors that received $B_i$ within time $\tau_1$\\
         $Y_i$ & \#Attestors $B_i$ received as first block\\
         $\prec_j$ & Order of block reception for attestor $j$\\
         $\delta_i$ & delay in block proposal by proposer $P_i$\\
         $q_i(\delta_i)$ & Probability that $B_i$ reaches an attestor when block proposal is delayed by $\delta_i$\\
         $p_i(\delta_i,\delta_{i^+})$ & Probability that $B_i \prec B_{i^+}$  when block proposal is delayed by $\delta_i, \delta_{i^+}$ respectively\\
         $v_P(\delta)$ & $P$'s block valuation at time $\delta \in [0,\tau_1]$, \\
         $M_{\delta_i}^{\thd}$ & Probability of reaching $\ge\thd$ attestors \\
         & when block $B_i$ is delayed by $\delta_i$\\
        \(i^{+}\) & \((1-i)\)\\
         \hline
    \end{tabular*}
    \label{tab:notations}
\end{table}

\subsection{Network Model}
\label{ssec:network}
Due to network randomness, the block may reach an attestor at different times. For a proposer \(P\)
, let $f_P$ be the probability density function (PDF) with mean $\mu_P$ that represents the probability that a block sent by $P$ reaches an attestor. We assume $f_P$ to be unimodal with support \([0,\infty)\) as often considered in standard literature \cite{misic2019block,kiffer2021under,cheng2022fault, hwerbi2024delay} and observed in analysis of block propagation time in Ethereum by Kiraly and Leonardo~\cite{codexBlockDiffusion} and study on Ethereum Gossip Protocol~\cite{ethGossip2021} (More details in Appendix~\ref{assec:block_propg}) with the most of the blocks arrive between 1 and 4 seconds after the beginning of the slot. Let $q$ denote the probability that an attestor receives the block within $\tau_1$. Let $M_{\delta}^{\thd}$ be the probability that at least $\thd$ number of attestors receive the block within $\tau_1$. Then $M_{\delta}^{\thd}$ can be computed by the cumulative binomial probability, as given in Equation~\ref{eq:prdth}.

\begin{equation}
M_{\delta}^{\thd} = \sum_{k=\thd}^n \binom{n}{k}q^{k}{(1-q)}^{n-k}\\
\label{eq:prdth}
\end{equation} 

Further, we assume that the $f_P$s are reasonably peaked. To quantify this, we propose a restricted $ L^2$-norm that captures how concentrated the PDF is in a given interval -- the more ``peaked” the PDF is, the larger the $ L^2$-Norm.  
\begin{definition}[Restricted $L^2$ norm]
    A \emph{restricted $L^2$ norm} of $f_P$ on interval $[a,b]$, $L^2_{f_P}[a,b]$ is given by 
\begin{equation*}
L_{f_P}^2[a,b] = \left( \int_{a}^{b} f_P(x)^2 \, dx \right)^{1/2}\label{eqn:l2restricted}    
\end{equation*}

\end{definition}
In this work we assume that \( L^2_{f_P}[0,\tau_1] \ge \frac{1}{2\sqrt{\tau_1}}\). It is reasonable to make this assumption, as otherwise $f_p$ would have maximum mass between $[\tau_1,\infty)$, which implies most of the blocks will be missed. Importantly, our protocol works without this assumption; however, analytical guarantees about $\delta^{\star}$ become challenging.
\subsection{Game Theory}
\label{ssec:game_theory}
A strategic form game~\cite{myerson1997} is a tuple $\Gamma = \langle N,(S_i)_{i\in N},(u_i)_{i\in N}\rangle$, where $N$ is the set of players, $S_i$ is the strategy/action space for player $i$, and $u_i: \prod_{j\in N} S_j \rightarrow \mathbb{R}$ is its utility function. The most celebrated solution concept in game theory is \emph{Nash Equilibrium}, which is a strategic profile where each player’s strategy is a best response against the best response strategies of the other players.

\begin{definition}[{Nash Equilibrium}~\cite{nash1950equilibrium}]
        \label{def:ne}
            Given a game $\Gamma = \langle N,(S_i)_{i\in N},(u_i)_{i\in N}\rangle$, a strategy profile $(s_1^*, \ldots, s_n^*)$ is called a \emph{Nash equilibrium} (NE), if $\forall i \in N$, the strategy $s_i^*$ satisfies,
            \[u_i(s_i^*, s_{-i}^*) \geq u_i(s_i, s_{-i}^*), \forall s_i \in S_i,\]
            This means that unilateral deviation would not help any agent achieve higher utility.
\end{definition}

\subsection{Timing Game}
\label{sssec:strategy_space}

\emph{Timing game} in leader-based blockchain protocols is single-agent decision problem~\cite{tadelis2012game} with agent $P$ with strategy $S_P=[0,\tau_1]$, and expected utility $U_P(\delta) = \left(\mathfrak{U}+v(\delta)\right) \cdot M_{\delta}^{\thd}$. The proposer chooses a $\delta^\star$ given by Equation~\ref{eq:delta}. As attestors cannot be easily involved to mitigate the timing game, we need better policing of such strategic proposers. 
 
\section{Our Approach}
\label{sec:our_approach}
We propose \ourmech\ as a modification to $\Xi$. The key idea in \ourmech\ is that for each slot of $\Xi$, two proposers are selected who compete to win the slot. The incentives are aligned to reach attestors as fast asquickly as possible, inducing competition among proposers to publish blocks faster.
\subsection{Proposed Solution \ourmech}
We assume that \ourmech\ can securely select two proposers for each slot. For example, by invoking $\mathscr{V}(\mathfrak{L}^{\ell},n)$ of $\Xi$ twice. Note that the two-proposer selection is exogenous to \ourmech\ and as long as it is secure, our mechanism works. 
In case of \ourmech\, $\mathscr{V}(\mathfrak{L}^{\ell},n)$ outputs (i) $n+2$ nodes as validators $\mathbf{V}^\ell$ and (ii) two among these validator as the block proposers, $\mathbf{P}^\ell=\{{P}_0,{P}_1\}$. The remaining validators will be attestors $\mathbf{A}^\ell=\{A_0^\ell,\dots,A_{n-1}^\ell\}$. With two proposers, we need to update $\Xi$ for (i) Block proposition, (ii) Attestation, (iii) Winning block determination (Block finalization), and (iv) Rewards for attestors and proposers.

\textbf{Block Proposition.} In \ourmech\ both the proposers are expected to construct the blocks independently and announce their blocks at $t=0$. Let $B_0, B_1$ be the blocks proposed by $P_0, P_1$. Similar to $\Xi$, no proposer can propose more than one block in a slot. We assume that the blocks are valid for the analysis. Any such violations can be handled by \ourmech\ in the same way as $\Xi$.

\textbf{Attestation with Voting.} Note that each attestor waits till $t=\tau_1$ to receive blocks and attests to \emph{all} valid blocks received within this duration. In $\Xi$, the attestors attest to a single block in a slot. \ourmech\ allows each attestor to provide one attestation per block per proposer.\footnote{Such restrictions are in place for $\Xi$ and can be enforced similarly in \ourmech.} In addition to attestation, each attestor provides a vote to indicate the order of reception of blocks. The vote can be provided in various ways, including adding timestamps when the attestor attests to the block, or requiring the attestor to always send a vote (as additional data) along with their first attestation. Note that the presence of such votes does not incur additional voting rounds to count. Let the $\prec_j$ denote the order of reception of blocks at $A_j$. $B_i \prec_j B_{i^+}$ implies that $A_j$ received $B_i$ before $B_{i^+}$, where $i^+ = 1 -i$.  In case $B_{i^+}$ is not received at $A_j$ or in the absence of $A_j$'s attestation on $B_{i^+}$, we consider $B_i \prec_j B_{i^+}$. If an attestor fails to receive any block within $t=\tau_1$, then it makes an attestation on \emph{empty} similar to $\Xi$. If no block achieves $\thd$ attestations, the slot is considered to be missed. Note that \emph{empty} here could also be attesting to anything else or not attesting at all. For instance, in the case of Ethereum, \emph{empty} indicates that the attestor is attesting to the previous confirmed block~\cite{eth2bookUpgradingEthereum}, and the attestor skips voting in this case. We now discuss how the block is confirmed.

\textbf{Block Confirmation.} The attestations are aggregated by $\tau_2$, and at most one of the blocks is confirmed. Let \(X_i\) represent the number of attestations for \(B_i\). Similar to $\Xi$, for a block to be considered for confirmation, it has to achieve \(\thd\) attestations, i.e., $X_i \geq \thd$.

\begin{itemize}
    \item[i.] $\forall i\in\{0,1\},$ if $X_i < \thd$, then no block is confirmed for that slot
    \item[ii.] $\exists! i\in\{0,1\}$ s.t. $ X_i \ge \thd$, then only $B_i$ is confirmed.
\end{itemize} 

If both blocks obtain $\thd$ attestations and are confirmed, then the following concerns arise: (i) a proposer involved in timing games can continue to do it and get confirmed; (ii) Fair distribution of the profit from transactions common in both blocks will be challenging.

To this end, we propose selecting one of these two blocks uniformly at random (u.a.r.) using on-chain randomness. Many protocols \cite{VRaas2024, damle2024no} have used on-chain verifiable randomness, and one such technique tailored for \ourmech\ is provided in the Appendix \ref{assec:onChain_random} for reference. Note that the same mechanism used for random proposer selection from the $\Xi$ may be used, but we assume that it could be any oracle that cannot be manipulated.
$$\forall i\in\{0,1\}, \text{if } X_i \ge \thd, \text{then confirm } B_k,\text{ where } k \sim \text{Uniform}(\{0,1\})$$

Random block selection is adopted over a rule-based deterministic selection as it minimizes the attack surface for transacting entities, such as censorship of any particular transaction or block by a proposer, bribery attacks~\cite{karakostas2024blockchain}, and also ensures that honest vanilla (or non-MEV) proposers do not consistently lose to low-latency proposers (\emph{discouragement attack}~\cite{buterin2018discouragement}). If fast proposers' blocks are confirmed consistently, honest proposers who are relatively slower than fast proposers may not have their blocks confirmed, which can discourage them from skipping the block proposal. Hence, on-chain randomization prevents \ourmech\ from centralization through such attacks. 

Note that security guarantees of \ourmech\ remain similar to $\Xi$, i.e., if $\Xi$ is secure with $k$ nodes controlled by the adversary, then \ourmech\ is also secure under the same condition. The following section discusses how validator rewards in \ourmech.

\textbf{Attestor Rewards.}
The attestor rewards are similar to those of $\Xi$, where attestors receive a reward only when their attestation aligns with $\thd$ attestations and gets confirmed on the blockchain.

\textbf{Proposer Rewards}
With randomization in block finalization, a faster proposer still faces the same system as $\Xi$ with a 50\% chance. Even with randomization, the faster proposer still delays the block announcement, though earlier than $\Xi$. Reward from additional delay up to some optimal $\delta^\star$ still outweighs the decreased probability of obtaining $\thd$ attestations. 
To encourage proposers to announce their blocks as early as possible, we propose a reward-sharing policy among the proposers to encourage early block proposals.

Let \(Y_i\) represent the number of votes received for \(B_i\), i.e., the number of attestors who has $B_i \prec B_{i^+}$. A naive approach of rewarding only the proposer with a higher $Y_i$ may result in a consistent loss of rewards to honest proposers, especially vanilla proposers. This is because an honest proposer $P_i$ with $X_i \ge\thd$ gets zero reward when $Y_i < Y_{i^+}$ even if $Y_i \approx Y_{i^+}$. Furthermore, an equal split among proposers only ensures reduced rewards but does not guarantee faster block proposals. Hence, we propose a reward-sharing policy for proposers obtaining at least $\thd$ attestations $R_i \propto Y_i$ as described in Equation~\ref{eq:reward-sharing}. $R_i$ indicates the fraction of the rewards obtained from the confirmed block

\begin{equation}
\label{eq:reward-sharing}
    R_i = \begin{cases}
            \quad\ 1 & \text{if } X_i \ge \thd\ \text{and } X_{i^+} < \thd\\
            \frac{Y_i}{Y_i + Y_{i^+}}& \text{if } X_i \ge \thd\ \text{and } X_{i^+} \ge \thd\ \\\
            \quad\ 0 & \text{otherwise} 
          \end{cases}
\end{equation}

By delaying the block proposal in anticipation of a higher value block, a proposer only increases its risk of reaching attestors much later than the other proposer. Hence, the reward share from the winning block is lower. Figure~\ref{fig:slotProgEth} shows the sequence of events in \ourmech\ within a slot.

\subsection{Latency Game in \ourmech}
\label{ssec:timing_2_prop}
The competitive nature of the block proposition in \ourmech\ naturally leads to analyzing it in a game-theoretic framework. We refer to the induced game as a \emph{Latency Game}. We assume rational attestors follow the protocol, as it is in $\Xi$. With slight abuse of notation, the Latency Game is described as follows: For each slot, players are $\mathbf{P}=\{P_0, P_1\}$. Each can delay proposing their block by $\delta_i\in [0,\tau_1]$. Thus, the strategy space of $P_i$ is $[0,\tau_1]$. Let $U_i(\delta_0,\delta_1)$ be $P_i$'s expected utility when $P_0$, $P_{1}$ delay their block announcement by $\delta_0, \delta_{1}$ respectively. Thus, we define Latency Game as a tuple $\Gamma^{\mathscr{L}}=\langle\mathbf{P}, (S_i)_{i\in\mathbf{P}},(U_i(\cdot))_{i\in\mathbf{P}}\rangle$.
To determine utilities, we first analyze the block valuation.

\textbf{Block Valuation}
\label{ssec:block_valuation}
Transactions are typically announced between $[\tau_2,\tau]$ for the next slot~\cite{daian2019flash,payloadEthereumData}. To model the increase in block value over time, we analyze the timing information for blocks 21720000-21750648. For each block, we find the block value from $t\in [\tau_2,\tau]$ in the previous slot to $t\in[0,\tau_1]$ in the current slot. Figure \ref{fig:block_valuation} shows the average block growth across consecutive slots. The block value for $0\le \delta\leq\tau_1$ increases linearly, specifically, $v_P^l(s) = (1+0.03\delta)\mathfrak{U}$ for $\delta\geq0$.

\begin{figure}[ht]
    \centering
        \includegraphics[scale=0.5]{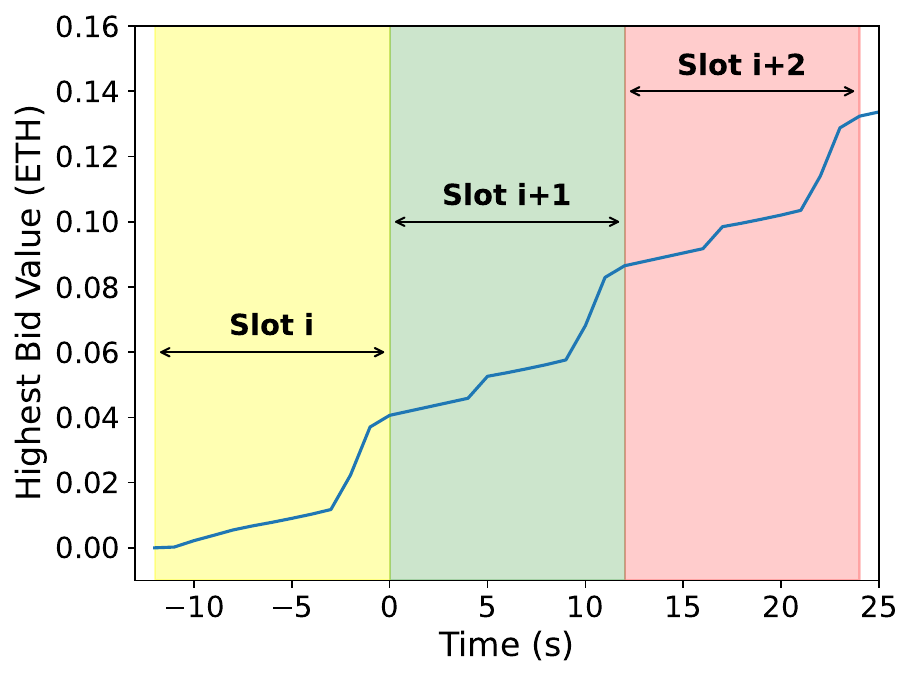}
        \caption{Average Cumulative Block Valuation across three Consecutive Slots}
    \label{fig:block_valuation}
\end{figure}

The expected reward that a proposer can generate from transactions announced in the previous slot is $\mathfrak{U}$. If a proposer delays the block by $\delta$, it can earn an additional utility of $v(\delta)$. From the data, we can safely upper-bound it as a linear function:
\begin{equation}
    v(\delta) = c\cdot\frac{\mathfrak{U}\cdot \delta}{\tau_1},\quad c\leq 1 \mbox{ and } \delta \in [0,\tau_1]
    \label{eqn:vdelta}
\end{equation}

\textbf{Proposer Utility}
The expected utility of proposer $P_i$, denoted by $U_i(\delta_0,\delta_{1})$ consists of two components. $\mathfrak{R}^1_i(\cdot)$ captures the expected block valuation when both the proposers obtain at least $\thd$ attestations. $\mathfrak{R}^2_i(\cdot)$ is the expected block valuation when only $P_i$ obtains $\thd$ attestations and $P_{i^+}$ fails. The expected utility of proposer $P_i$ is given by Equation~\ref{eq:utility_high_level}.

\begin{equation}
    U_i(\delta_i,\delta_{i^+}) = \mathfrak{R}^1_i(\delta_i,\delta_{i^+}) + \mathfrak{R}^2_i(\delta_i,\delta_{i^+})
    \label{eq:utility_high_level}
\end{equation} 

\section{Analysis of Latency Game, \texorpdfstring{$\Gamma^\mathscr{L}$}{}}
\label{sec:thry_analysis}

\subsection{Basic Results}
\label{ssec:basic_results}
The analysis of $\Gamma^{\mathscr{L}}$ relies on the following probabilities. Let $q_i(\delta_i)$ be the probability that $B_i$ reaches an attestor within $t=\tau_1$ when $P_i$ delays the block proposal by $\delta_i$. Furthermore, let $p_i(\delta_i,\delta_{i^+})$ be the probability that $B_i \prec_j B_{i^+}$ given $P_i, P_{i^+}$ delay the block proposal by $\delta_i,\delta_{i^+}$, respectively. Formally, we have  

\begin{align}
q_i(\delta_i)&= \int_0^{\tau_1 -\delta_i} f_{P_i}(x) dx\label{eq.q(delta)}\\
\hat{q}_i(\delta_i) &= 1 - q_i(\delta_i)\\
p_i(\delta_i,\delta_{i^+}) &= \int_{0}^{\tau_1-\delta_i}{f_{P_i}(x)\int_{x+\delta_i -\delta_{i^+}}^{\tau_1-\delta_{i^+}}{f_{P_{i^+}}(y)}}\,dy\,dx \label{eq.p(delta)}   
\end{align}

From Equations \ref{eq.q(delta)} and \ref{eq.p(delta)}, we state the following lemmas:

\begin{lemma}\label{lemma:1}
    $\forall i \in \{0,1\}, q_i(\delta_i)$ is monotonically decreasing with $\delta_i$ 
\end{lemma}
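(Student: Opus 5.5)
The plan is to apply the fundamental theorem of calculus to the definition in Equation~\ref{eq.q(delta)}. There $q_i(\delta_i)=F_{P_i}(\tau_1-\delta_i)$, where $F_{P_i}(s)=\int_0^s f_{P_i}(x)\,dx$ is the cumulative distribution function of the propagation delay of $P_i$'s block. Because $f_{P_i}$ is a probability density, it is nonnegative, so $F_{P_i}$ is non-decreasing in $s$. The map $\delta_i\mapsto\tau_1-\delta_i$ is strictly decreasing on $[0,\tau_1]$. Composing the two immediately gives that $q_i$ is non-increasing in $\delta_i$.

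To obtain strict monotonicity (the intended reading of ``monotonically decreasing''), I will differentiate. Assuming $f_{P_i}$ is continuous, as is implicit for the unimodal densities considered in Section~\ref{ssec:network}, the Leibniz rule gives
\begin{equation*}
\frac{d q_i(\delta_i)}{d\delta_i} = -f_{P_i}(\tau_1-\delta_i)\le 0 .
\end{equation*}
Without a continuity assumption I would instead compare directly. For $\delta<\delta'$ in $[0,\tau_1]$,
\begin{equation*}
q_i(\delta)-q_i(\delta')=\int_{\tau_1-\delta'}^{\tau_1-\delta} f_{P_i}(x)\,dx\ \ge 0 .
\end{equation*}

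The only real subtlety, and the step I expect to be the main obstacle, is upgrading ``$\ge$'' to ``$>$''. This requires $f_{P_i}$ to carry positive mass on every subinterval of $[0,\tau_1]$. I would argue this from the modelling assumptions: $f_{P_i}$ has support $[0,\infty)$ and is unimodal, so it cannot vanish on an interval of positive length inside its support. A unimodal density is non-decreasing up to its mode and non-increasing after it. If it vanished on some interval $(a,b)\subset[0,\infty)$, then either it would be zero on all of $[0,b)$ if that interval lies before the mode, or zero on all of $(a,\infty)$ if it lies after the mode. Both contradict the support being all of $[0,\infty)$, so the integral over $[\tau_1-\delta',\tau_1-\delta]$ is strictly positive. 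If one prefers not to lean on this, the lemma still holds in the weak sense, which is all that later arguments using $q_i$ as a non-increasing success probability need.
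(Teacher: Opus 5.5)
Your proof is correct and follows the paper's own argument. The paper applies the Leibniz rule to get $q_i'(\delta_i) = -f_{P_i}(\tau_1-\delta_i)\le 0$ from $f_{P_i}\ge 0$ and stops there, so it only establishes the weak (non-increasing) version; your continuity-free integral comparison and your unimodality-plus-full-support argument for strict decrease are valid additions that go beyond what the paper proves.
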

\begin{proof}
    \begin{align*}
    q_i(\delta_i) &= \int_{0}^{\tau_1-\delta_i}{f_{P_i}(x)}\,dx\\
    q_i'(\delta_i) &= \frac{d}{d\delta_i}n{\int_{0}^{\tau_1-\delta_i}{f_{P_i}(x)}\,dx}\\
    &=\bigg(\frac{\partial}{\partial \delta_i}{(\tau_1-\delta_i)}\bigg)f_{P_i}(\tau_1-\delta_i)\\
    &= -f_{P_i}(\tau_1-\delta_i)\le 0 \quad(\because f_{P_i}\ge0)
    \end{align*}
    \end{proof}

\begin{lemma}\label{lemma:2}
    $\forall i \in \{0,1\}, p_i(\delta_i,\delta_{i^+})$ monotonically decreases with $\delta_i$. 
\end{lemma}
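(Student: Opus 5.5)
We differentiate $p_i$ with respect to $\delta_i$ using the Leibniz integral rule, exactly as in the proof of Lemma~\ref{lemma:1}, and show the derivative is nonpositive. It is convenient first to write
\[
p_i(\delta_i,\delta_{i^+})=\int_0^{\tau_1-\delta_i} f_{P_i}(x)\, G(x+\delta_i)\,dx,\qquad G(s)=\int_{s-\delta_{i^+}}^{\tau_1-\delta_{i^+}} f_{P_{i^+}}(y)\,dy .
\]
We take the usual convention that $f_{P_{i^+}}=0$ on negative arguments, so $G$ is well defined when $s<\delta_{i^+}$. The inner integrand does not depend on $\delta_i$, and $\delta_i$ enters only through the outer upper limit and the inner lower limit.

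Applying Leibniz's rule gives two contributions. The first comes from the outer upper limit:
\[
-f_{P_i}(\tau_1-\delta_i)\,G(\tau_1) = -f_{P_i}(\tau_1-\delta_i)\int_{\tau_1-\delta_{i^+}}^{\tau_1-\delta_{i^+}} f_{P_{i^+}}(y)\,dy = 0,
\]
because the inner integral is taken over a degenerate interval. The second comes from differentiating the inner lower limit under the outer integral:
\[
\frac{\partial p_i}{\partial \delta_i} = -\int_0^{\tau_1-\delta_i} f_{P_i}(x)\, f_{P_{i^+}}(x+\delta_i-\delta_{i^+})\,dx .
\]
Both densities are nonnegative, so this derivative is $\le 0$ and $p_i$ is monotonically (non-strictly) decreasing in $\delta_i$.

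The main obstacle is the regime $\delta_i<\delta_{i^+}$. For small $x$ the inner lower limit $x+\delta_i-\delta_{i^+}$ is negative, which lies outside the support $[0,\infty)$ of $f_{P_{i^+}}$. We handle this by splitting the outer integral at $x=\delta_{i^+}-\delta_i$.
\begin{itemize}
\item On the piece $x<\delta_{i^+}-\delta_i$, the inner integral equals $\int_0^{\tau_1-\delta_{i^+}} f_{P_{i^+}}$ and is constant in $\delta_i$.
\item The split point moves with $\delta_i$, but the two pieces agree at the split point, so the boundary terms cancel.
\end{itemize}
The derivative therefore reduces to the same nonpositive integral, now restricted to $x\ge \delta_{i^+}-\delta_i$. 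This is consistent with the zero-extension convention, so the sign conclusion holds in every case.

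A more intuitive alternative is a coupling argument. Raising $\delta_i$ shifts the arrival time of $B_i$ later along every sample path. This can only shrink the event $\{$arrival of $B_i \le \tau_1$ and $B_i$ arrives before $B_{i^+}$ (or $B_{i^+}$ misses the deadline)$\}$, so its probability is nonincreasing. We would include this as a sanity check alongside the calculus argument.
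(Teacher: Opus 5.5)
Your argument is correct and is the paper's own proof. You differentiate via Leibniz's rule, note that the outer-limit boundary term vanishes because the inner integral collapses to the degenerate interval $[\tau_1-\delta_{i^+},\tau_1-\delta_{i^+}]$ (you have these limits right, where the paper writes $\tau_1-\delta_i$), and conclude $\partial p_i/\partial\delta_i=-\int_0^{\tau_1-\delta_i} f_{P_i}(x)f_{P_{i^+}}(x+\delta_i-\delta_{i^+})\,dx\le 0$. Your handling of the regime $\delta_i<\delta_{i^+}$ and the coupling check are sound additions the paper omits. To match Equation~\ref{eq.p(delta)}, though, the coupled event should require $B_{i^+}$ to arrive by $\tau_1$ rather than allow it to miss the deadline; monotonicity holds for either event.
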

\begin{proof} 
   \hspace{1.38em}$p_i = \int_{0}^{\tau_1-\delta_i}{f_{P_i}(x)\int_{x+\delta_i-\delta_{i^+}}^{\tau_1-\delta_{i^+}}{f_{P_{i^+}}(y)}}\,dy\,dx$
\begin{align*}
    \frac{d}{d\delta_i}p_i(\delta_i,\delta_{i^+}) &= \frac{d}{d\delta_i}\int_{0}^{\tau_1-\delta_i}{f_{P_i}(x)\int_{x+\delta_i-\delta_{i^+}}^{\tau_1-\delta_{i^+}}{f_{P_{i^+}}(y)}}\,dy\,dx \\
    &= \bigg(\frac{\partial}{\partial \delta_i}(\tau_1-\delta_i)\bigg)f_{P_i}(\tau_1-\delta_i)\int_{\tau_1-\delta_i}^{\tau_1-\delta_i}f_{P_{i^+}}(y)\,dy + \int_{0}^{\tau_1-\delta_i}f_{P_i}(x)\frac{\partial}{\partial \delta_i}\int_{x+\delta_i-\delta_{i^+}}^{\infty}{f_{P_{i^+}}(y)}\,dy\,dx \\
    &= -\int_{0}^{\tau_1-\delta_i}{f_{P_i}(x)f_{P_{i^+}}(x+\delta_i-\delta_{i^+})}\,dx\\
\end{align*}
\end{proof}

\begin{lemma}\label{lemma:3}
    $p_{i}(\delta_i,\delta_{i^+}) +p_{i^+}(\delta_i,\delta_{i^+}) = q_i(\delta_i) q_{i^+}(\delta_{i^+})$ 
\end{lemma}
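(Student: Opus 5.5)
The plan is to read $p_i$ and $p_{i^+}$ probabilistically and show that the two events they describe partition the event that both blocks reach the attestor in time. Let $T_i$ be the propagation time of $B_i$ to a fixed attestor $A_j$, with density $f_{P_i}$. Let $T_{i^+}$ be the corresponding time for $B_{i^+}$, with density $f_{P_{i^+}}$. We take $T_i$ and $T_{i^+}$ to be independent, which is implicit in the product form of Equation~\ref{eq.p(delta)}. Block $B_i$ arrives at time $\delta_i+T_i$ and $B_{i^+}$ at time $\delta_{i^+}+T_{i^+}$. Substituting $x=T_i$ and $y=T_{i^+}$, the double integral in Equation~\ref{eq.p(delta)} is the probability of the event
\[E_i=\{T_i\le \tau_1-\delta_i,\ \delta_i+T_i\le \delta_{i^+}+T_{i^+}\le \tau_1\}.\]
This is the event that both blocks arrive by $\tau_1$ and $B_i$ arrives first. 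The symmetric expression $p_{i^+}(\delta_i,\delta_{i^+})$, obtained from Equation~\ref{eq.p(delta)} by swapping the roles of $i$ and $i^+$, is then $\Pr[E_{i^+}]$.

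The key step is to show that $E_i$ and $E_{i^+}$ partition $F=\{\delta_i+T_i\le\tau_1,\ \delta_{i^+}+T_{i^+}\le\tau_1\}$. Their union is $F$, because on $F$ one of the two arrival times is at most the other. They overlap only on the tie set $\{\delta_i+T_i=\delta_{i^+}+T_{i^+}\}$, which is a line in the $(x,y)$-plane. Since the joint law has density $f_{P_i}(x)f_{P_{i^+}}(y)$, this set has probability zero. Hence
\[p_i+p_{i^+}=\Pr[F]=\Pr[T_i\le\tau_1-\delta_i]\,\Pr[T_{i^+}\le\tau_1-\delta_{i^+}]=q_i(\delta_i)\,q_{i^+}(\delta_{i^+}),\]
using independence and Equation~\ref{eq.q(delta)}.

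For a purely analytic version, I would add the two iterated integrals directly. First rewrite $p_{i^+}$ by Fubini so that the outer variable is the $f_{P_i}$ variable $x\in[0,\tau_1-\delta_i]$ and the inner variable $y$ ranges over $[0,\,x+\delta_i-\delta_{i^+}]\cap[0,\tau_1-\delta_{i^+}]$. Adding the inner range of $p_i$, which is $[x+\delta_i-\delta_{i^+},\,\tau_1-\delta_{i^+}]$, gives the full interval $[0,\tau_1-\delta_{i^+}]$. The double integral then factors into the product of the two $q$'s.

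The main obstacle is the boundary bookkeeping. When $x+\delta_i-\delta_{i^+}<0$, the lower limit in Equation~\ref{eq.p(delta)} falls below the support $[0,\infty)$ of $f_{P_{i^+}}$. This is harmless because $f_{P_{i^+}}$ vanishes there, but I must state it explicitly so the inner ranges glue correctly. The Fubini swap requires tracking the region $\{0\le x\le\tau_1-\delta_i,\ 0\le y\le\tau_1-\delta_{i^+}\}$ split by the line $y=x+\delta_i-\delta_{i^+}$.
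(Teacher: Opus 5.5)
Your argument is correct. The paper proves the identity by direct calculation. It applies Fubini to $p_i$ so that the $f_{P_{i^+}}$ variable $y$ becomes the outer one, and splits the outer range of $p_{i^+}$ at $y=\delta_i-\delta_{i^+}$. It then glues the inner $x$-ranges $[0,\,y-\delta_i+\delta_{i^+}]$ and $[y-\delta_i+\delta_{i^+},\,\tau_1-\delta_i]$ into $[0,\tau_1-\delta_i]$. On the leftover strip $y\in[0,\delta_i-\delta_{i^+}]$ it silently replaces the negative lower limit by $0$ before factoring. Your analytic fallback is the mirror image of this (Fubini on $p_{i^+}$, with $x$ outer), so it is essentially the paper's proof.

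Your main argument takes a different route. You read $p_i$ and $p_{i^+}$ as the masses, under the product density $f_{P_i}(x)f_{P_{i^+}}(y)$, of the two pieces into which the line $y=x+\delta_i-\delta_{i^+}$ cuts the rectangle $[0,\tau_1-\delta_i]\times[0,\tau_1-\delta_{i^+}]$. You finish because the line is a null set. This buys you three things:
\begin{itemize}
\item It treats $\delta_i\ge\delta_{i^+}$ and $\delta_i<\delta_{i^+}$ uniformly. The paper's split is written as if $\delta_i\ge\delta_{i^+}$, and covers the other case only through oriented integrals and the vanishing of the density on $(-\infty,0)$.
\item It makes that support assumption explicit rather than silent.
\item It avoids the slip in the paper's final line, where the factors appear as $\int_0^{\tau_1}$ rather than $\int_0^{\tau_1-\delta_i}$ and $\int_0^{\tau_1-\delta_{i^+}}$.
\end{itemize}
What the paper's route buys is a closed-form manipulation that needs no measure-theoretic remark about ties. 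As you observe, independence costs nothing extra in your version, since it is already encoded in the product integrand.
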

\begin{proof}[Proof Sketch]
    Using Fubini's Theorem, we have
    \begin{align*}
       p_i(\delta_i,\delta_{i^+}) =\int_{\delta_i-\delta_{i^+}}^{\tau_1-\delta_{i^+}}{f_{P_{i^+}}(y)\int_{0}^{y-\delta_i+\delta_{i^+}}{f_{P_i}(x)}}\,dx\,dy
    \end{align*}
    \text{\quad\quad Splitting the outer integral of $p_{i^+}(\delta_i,\delta_{i^+})$, we have}
    \begin{align*}
         p_{i^+}(\delta_i,\delta_{i^+}) &=\int_{0}^{\delta_i-\delta_{i^+}}f_{P_{i^+}}(y)\int_{y-\delta_i+\delta_{i^+}}^{\tau_1-\delta_i}f_{P_i}(x)\,dx\,dy\ +\int_{\delta_i-\delta_{i^+}}^{\tau_1-\delta_{i^+}}f_{P_{i^+}}(y)\int_{y-\delta_i+\delta_{i^+}}^{\tau_1-\delta_i}f_{P_i}(x)\,dx\,dy\\
        p_i(\delta_i,\delta_{i^+}) + p_{i^+}(\delta_i,\delta_{i^+}) 
        &= \int_{\delta_i-\delta_{i^+}}^{\tau_1-\delta_{i^+}}{f_{P_{i^+}}(y)\int_{0}^{\tau_1-\delta_i}{f_{P_i}(x)}}\,dx\,dy+ \int_{0}^{\delta_i-\delta_{i^+}}f_{P_{i^+}}(y)\int_{0}^{\tau_1-\delta_i}f_{P_i}(x)\,dx\,dy\\
        &= \bigg(\int_{0}^{\tau_1}{f_{P_i}(x)}\,dx\bigg)\bigg(\int_{0}^{\tau_1}{f_{P_{i^+}}(y)}\,dy\bigg)\\
        &= q_i(\delta_i)q_{i^+}(\delta_{i^+})
    \end{align*}
\end{proof}
We defer the full proof of Lemma~\ref{lemma:3} to the Appendix~\ref{assec:main_proofs}.
 
\subsection{Utility Computation}
\label{ssec:utility_compt}
For ease of exposition in the analysis, we normalize the utilities with respect to $\mathfrak{U}$.\footnote{Multiplying utilities by $\frac{1}{\mathfrak{U}}>0$ does not change the equilibrium strategies~\cite{ijcai2024}.} Consequently, we have the following:
\begin{align}
\nonumber\mathfrak{R}^1_i &= \sum_{x,y=\thd}^{n}\sum_{w=x+y-n}^{min(x,y)}e(x,y,w)(q_{i}\hat{q}_{i^+})^{x-w} (q_{i^+}\hat{q}_{i})^{y-w} (\hat{q}_{i}\hat{q}_{i^+})^{n-(x+y-w)}\sum_{z=0}^{w}g(x,y,w,z)\left(1+ \frac{v(\delta_i)+v(\delta_{i^+})}{2}\right)\\
\nonumber &\text{where}\ e(x,y,w) = \binom{n}{w}\binom{n-w}{x-w}\binom{n-x}{y-w}, g(x,y,w,z) = \binom{w}{z}p_{i}^{z}p_{i^+}^{w-z} \frac{x-w+z}{x+y-w}\\ 
\nonumber\mathfrak{R}^2_i &= M_{\delta_i}^{\thd}(1- M_{\delta_{i^+}}^{\thd}) (1 +v(\delta_i))
\end{align}

\emph{Intuition behind $U_i(\cdot)$}
Let $x,y$ be the number of attestations received for $B_i, B_{i^+}$ respectively. Then, $\mathfrak{R}^1$ captures the expected utility when $x,y \ge \thd$ and $\mathfrak{R}^2$ captures the expected utility when $x \ge \thd\mbox{ and }y<\thd$ . 

(i)\ \underline{Case 1} When both the proposers achieve $x,y \geq \thd$ attestations, the total expected reward, due to the random block selection, is $\frac{1}{2}(1+v(\delta_i)) + \frac{1}{2}(1+v(\delta_{i^+}))$. The individual proposer's share $\frac{Y_i}{Y_i + Y_{i^+}}$ is given by $g$. Computing expected utility in this case requires finding different possible votes for $P_i$. Out of $n$ attestors, the minimum number of attestors that attest for both blocks would be $w_{min}=x+y-n$, and the maximum number of attestors that will attest for both would be $w_{max} = \min(x,y)$. Among $x$ and $y$ attestations, $x-w$ and $y-w$ attestations are exclusive to $B_i, B_{i^+}$ respectively. For each of these exclusive attestations, the blocks would have received the corresponding votes. However, among the $w$ common attesations which range from  $w_{min}$ to $w_{max}$, the votes can range from $z=0$ to $w$, depending on which block was received first. Thus, for each of the above cases, $R_i$ turns out to be $\frac{x-w+z}{x+y-w}$. So, we compute all possible $R_i$s and their respective probabilities leading to the first term.

(ii) \underline{Case 2} When $x \ge \thd,y < \thd$, $P_i$ gets the entire reward, i.e., $R_i =1$. The other terms represent the probability of this event. 

\begin{figure}[t]
    \centering
        \includegraphics[scale=0.7]{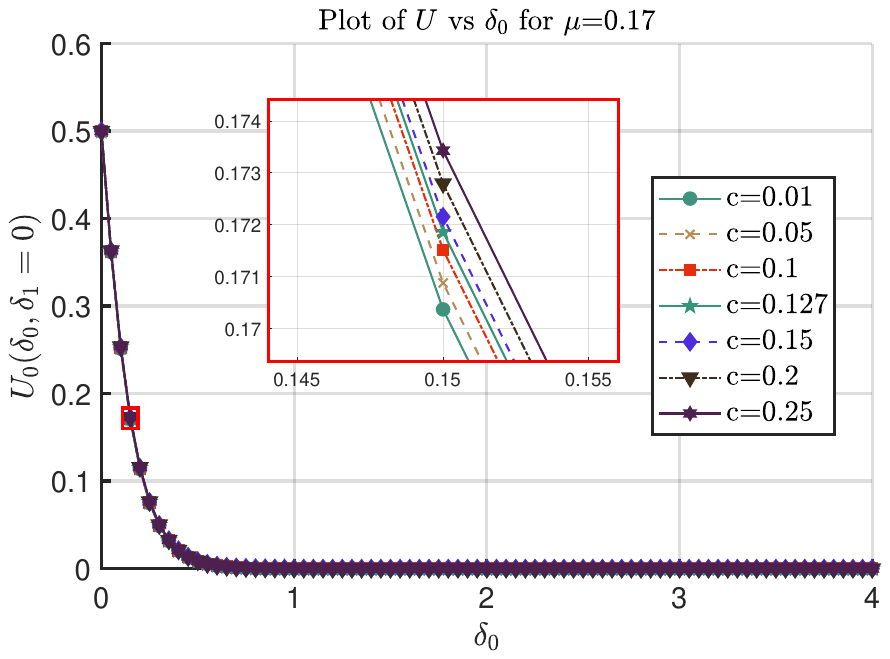}
        \caption{Proposer ($P_0$) Utility in Homogeneous Setting against an Honest Proposer ($P_1$) for $\mu =0.16$}
    \label{fig:homoMain}
\end{figure}

\subsection{Equilibrium Analysis}

To analyze \ourmech, we consider two network settings. If both the proposers for the current slot have the same network parameters, we refer to it as the \emph{homogeneous} setting. If the network parameters of proposers differ, we refer to it as the \emph{heterogeneous} setting.  

\subsubsection{Homogeneous Setting}
\label{ssec:homo_prop}
When proposers are homogeneous, we have $f_{P_0}=f_{P_1}=f$. Therefore, $\forall \delta\in[0,\tau_1], q_i(\delta) =q_{i^+}(\delta)$ and $p_i(\delta_i,\delta_{i^+}) = p_{i^+}(\delta_i,\delta_{i^+})$. 

\begin{theorem}
    \label{thm:nash}
    Under homogeneous proposers setting, with support $[0,\tau_1]$ for $f_{P_0}(f_{P_1})$, $(\delta^{NE}_{0},\delta^{NE}_{{1}}) = (0,0)$ constitutes a Nash Equilibrium of  $\hspace{0.3em}\Gamma^\mathscr{L} =\big<P,(S_i),(U_i) \big>$ in \ourmech\ is when $c\le 1$.
\end{theorem}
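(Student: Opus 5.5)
By symmetry of the homogeneous game (since $f_{P_0}=f_{P_1}=f$, the utilities satisfy $U_1(\delta_0,\delta_1)=U_0(\delta_1,\delta_0)$), it suffices to show that $P_0$ has no profitable unilateral deviation from $(0,0)$. Concretely, we aim to show $U_0(\delta,0)\le U_0(0,0)$ for every $\delta\in[0,\tau_1]$. Note that with support $[0,\tau_1]$ and $\delta_1=0$ we have $q_1(0)=1$. So whenever $P_1$ plays $0$, every attestor receives $B_1$ in time, $X_1=n\ge\thd$, and $M^{\thd}_{0}=1$. Hence $\mathfrak{R}^2_0(\delta,0)=0$, and $U_0(\delta,0)=\mathfrak{R}^1_0(\delta,0)$. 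The expression for $\mathfrak{R}^1_0$ also simplifies: $\hat q_1=0$ forces $y=n$ and $w=x$. Writing $q=q_0(\delta)$ and $p=p_0(\delta,0)$, Lemma~\ref{lemma:3} with $q_1=1$ gives $p_1(\delta,0)=q-p$. Thus $U_0(\delta,0)$ reduces to
\begin{equation*}
\Big(1+\tfrac{v(\delta)}{2}\Big)\sum_{x=\thd}^{n}\binom{n}{x}q^{x}(1-q)^{n-x}\sum_{z=0}^{x}\binom{x}{z}\Big(\tfrac{p}{q}\Big)^{z}\Big(1-\tfrac{p}{q}\Big)^{x-z}\frac{z}{n},
\end{equation*}
using $v(0)=0$ and the share $\frac{x-w+z}{x+y-w}=\frac{z}{n}$. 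The inner sum is a binomial mean, equal to $xp/(qn)$, so the double sum collapses to $\frac{p}{q}\cdot\frac{1}{n}\,\mathbb{E}[X\mathbf 1_{X\ge\thd}]$ with $X\sim\mathrm{Bin}(n,q)$, which is at most $p$. At $\delta=0$, symmetry and Lemma~\ref{lemma:3} give $p=\tfrac12$ and $q=1$, so $U_0(0,0)=\tfrac12$.

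Next we bound the deviation payoff. We have $U_0(\delta,0)\le (1+\tfrac{v(\delta)}{2})\,p_0(\delta,0)$. Using $v(\delta)=c\delta/\tau_1\le\delta/\tau_1$ (normalized, $c\le1$), it suffices to prove
\begin{equation*}
\Big(1+\frac{\delta}{2\tau_1}\Big)p_0(\delta,0)\le\frac12,\qquad \delta\in[0,\tau_1].
\end{equation*}
At $\delta=0$ this holds with equality. Define $h(\delta)=(1+\frac{\delta}{2\tau_1})p_0(\delta,0)$. By Lemma~\ref{lemma:2},
\begin{equation*}
h'(\delta)=\frac{p_0(\delta,0)}{2\tau_1}-\Big(1+\frac{\delta}{2\tau_1}\Big)\int_0^{\tau_1-\delta}f(x)f(x+\delta)\,dx .
\end{equation*}
The plan is to show $h'\le 0$, or at least $h(\delta)\le h(0)$ directly.

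At $\delta=0$, the negative term is $\int_0^{\tau_1}f^2=(L^2_f[0,\tau_1])^2\ge\frac{1}{4\tau_1}$ by the standing peakedness assumption, while the positive term is $\frac{1/2}{2\tau_1}=\frac{1}{4\tau_1}$. So $h'(0)\le0$, and this is exactly where the assumption $L^2_f[0,\tau_1]\ge\frac{1}{2\sqrt{\tau_1}}$ and the condition $c\le1$ are calibrated to meet. The main obstacle is extending this from a local to a global statement for $\delta>0$, since the autocorrelation $\int f(x)f(x+\delta)dx$ decays as $\delta$ grows.

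The first approach we would try avoids derivatives. We write $p_0(\delta,0)=\Pr[D_0+\delta<D_1]$ for i.i.d.\ delays $D_0,D_1\sim f$, so that $\frac12-p_0(\delta,0)=\Pr[D_1-\delta\le D_0<D_1]$. We would lower bound this by an integral of the autocorrelation, $\int_0^{\delta}\!\int f(x)f(x+s)\,dx\,ds$, and then use unimodality of $f$ to argue that this exceeds $\frac{\delta}{2\tau_1}p_0(\delta,0)$, for instance by comparing against the uniform density on $[0,\tau_1]$, which is the extremal case of the $L^2$ bound. If the global comparison fails, we would fall back to a cruder estimate: bound $p_0(\delta,0)\le\frac12(1-\Pr[|D_0-D_1|<\delta])$ and use $\Pr[|D_0-D_1|<\delta]\ge\frac{\delta}{\tau_1}$-type bounds. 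This fallback suffices because $\frac{1}{2}\cdot\frac{1}{1+\delta/(2\tau_1)}\ge\frac12(1-\frac{\delta}{2\tau_1})$.
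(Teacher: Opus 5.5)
Your reduction is correct and matches the paper's. With $\delta_1=0$ and support $[0,\tau_1]$, only the $y=n$, $w=x$ terms survive. The paper obtains $U_0(\delta,0)=p\,g\,(1+\frac{c\delta}{2\tau_1})$ with $g=\Pr[\mathrm{Bin}(n-1,q)\ge \thd-1]$, which is exactly your $\frac{p}{qn}\mathbb{E}[X\mathbf{1}_{X\ge\thd}]$; dropping $g\le 1$ is harmless. Your first route (show $h'(0)\le 0$ from the $L^2$ hypothesis, then extend) is also the paper's route. The paper gets past the local-to-global step only by asserting, without proof, that $\frac{1}{p}\int_0^{\tau_1-\delta}f(x)f(x+\delta)\,dx$ increases in $\delta$, so that checking $\delta=0$ suffices. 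That quantity is the hazard rate of $D_1-D_0$ at $\delta$. It is increasing for log-concave $f$, but not for every unimodal $f$: a narrow spike on top of a uniform makes it drop sharply just after $\delta=0$. So you have correctly located the real difficulty, and neither your first route nor the paper's argument settles it.

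Your fallback does settle it, but as written its one substantive input is missing. ``$\frac{\delta}{\tau_1}$-type bounds'' is not an argument. Moreover, the literal bound $\Pr[|D_0-D_1|<\delta]\ge\delta/\tau_1$ is false for general densities on $[0,\tau_1]$: put mass $\frac12$ near each endpoint and the probability stays near $\frac12$ for all $\delta<\tau_1$. What you need holds for every density. Split $[0,\tau_1]$ into $k=\lfloor\tau_1/\delta\rfloor+1$ intervals of length $<\delta$ with masses $\pi_j$. Two draws in the same interval are within $\delta$, so by Cauchy--Schwarz $\Pr[|D_0-D_1|<\delta]\ge\sum_j\pi_j^2\ge\frac1k\ge\frac{\delta}{\tau_1+\delta}$. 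By symmetry, $p_0(\delta,0)=\frac12\left(1-\Pr[|D_0-D_1|\le\delta]\right)$ exactly, hence
\[
U_0(\delta,0)\le\Bigl(1+\frac{\delta}{2\tau_1}\Bigr)\frac{\tau_1}{2(\tau_1+\delta)}=\frac{2\tau_1+\delta}{4(\tau_1+\delta)}\le\frac12=U_0(0,0).
\]
With this lemma inserted, your proof is complete and derivative-free. It uses neither unimodality nor the $L^2$ hypothesis, so it is more robust than the paper's argument.

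A side correction: under the support-$[0,\tau_1]$ hypothesis the $L^2$ assumption holds automatically with slack. By Cauchy--Schwarz, $\int_0^{\tau_1}f^2\ge 1/\tau_1=4\cdot\frac{1}{4\tau_1}$. So it is not ``exactly calibrated'' at $\delta=0$. The uniform density, the minimizer of $\int f^2$ among densities on $[0,\tau_1]$, attains $1/\tau_1$, not $\frac{1}{4\tau_1}$.
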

\begin{proof}[Proof Intuition]
    When both proposers have similar networks, the exponential decrease in the probability of reaching attestors dominates the linear increase in the block valuation $1+c\cdot \frac{\delta_i}{\tau_1}$, reducing the overall utility. Hence, it is best for the proposers not to delay.  
\end{proof}

\begin{figure}[t]
\centering
\begin{subfigure}{\linewidth}
    \centering
        \includegraphics[scale=0.4]{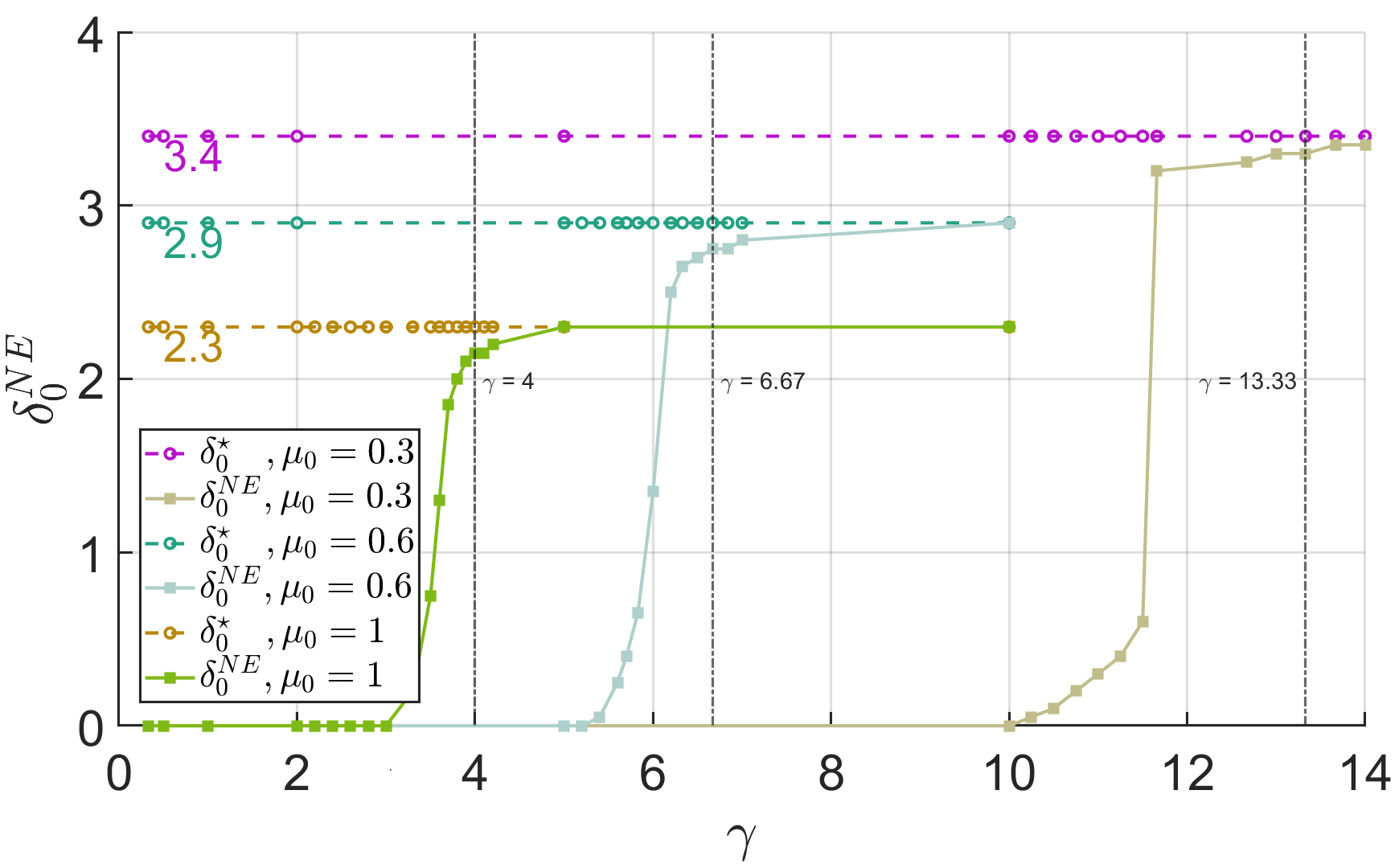}
    \caption{$\delta_0^{NE}$ vs $\gamma$ in ${D}_1$ ($P_0$ is fast)}
    \label{fig:D1}
\end{subfigure}

\begin{subfigure}{\linewidth}
    \centering
    \includegraphics[scale=0.4]{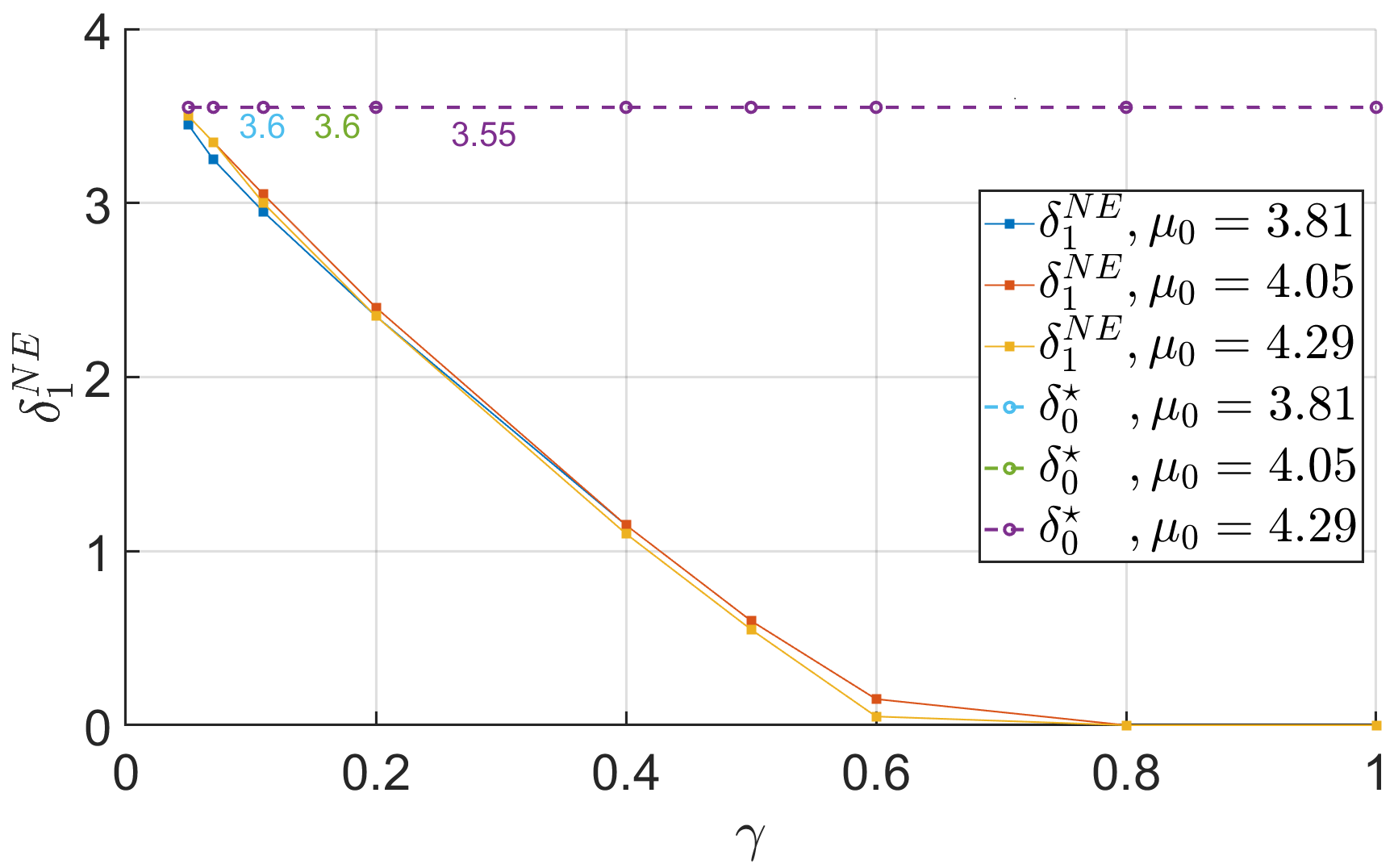}
    \caption{$\delta_1^{NE}$ vs $\gamma$ in ${D}_2$ ($P_0$ is slow)}
    \label{fig:D2}
\end{subfigure}
\caption{Equilibrium strategies for $\Gamma^{\mathscr{D}}$ for $\zeta=0.05$, $c = 1$}
\label{fig:rvsGamma}
\end{figure}

We defer the full proof of Theorem~\ref{thm:nash} to Appendix~\ref{assec:nash_equilibrium}. Figure \ref{fig:homoMain} shows the utility of the proposer in the homogeneous setting when the other proposer is honest or proposes its block on time. Observe that the utility of proposer is maximized for $\delta_0 = 0$ i.e., $U_i(\delta_i=0,\delta_{i^+}=0) > U_i(\delta_i^\prime,\delta_{i^+}=0), \forall \delta_i^\prime>0$. 

\subsubsection{Heterogeneous Setting}
\label{ssec:hetero_prop}
Computing the PSNE analytically for general network models in the heterogeneous setting is complex. Hence, we discretize the strategy space and model a Bi-matrix version of $\Gamma^{\mathscr{L}}$ as $\Gamma^{\mathscr{D}}=\langle\mathbf{P},(S_i^\zeta),(U_i)\rangle$, where $S_i^\zeta=\{0,\zeta,2\zeta,\dots,\tau_1\}$, is $\zeta-$discretized strategy space of $S_i$. We consider a small discretization step $\zeta << \frac{1}{\tau_1}$ for analysis to minimize discretization error. We assume $U_i$ is Lipschitz continuous, and therefore, $\Gamma^{\mathscr{D}}$ well approximates $\Gamma^{\mathscr{L}}$. More details about $U_i$ are provided in Appendix~\ref{assec:utility}.

As empirically observed in practice, typically, packet delays follow a Gamma distribution~\cite{gammap2p2007}. Hence, we assume $f_P$s follow the same. Let $\mu_0$ be the expected time required by $P_0$'s block to reach an attestor in the network. For analyzing $\Gamma^\mathscr{D}$, we compute the Nash Equilibrium of $\Gamma^\mathscr{D}$ under two scenarios $D_1,D_2$  with different network distributions for $P_0$, $P_1$ represented, with slight abuse of notation, by tuple $(f_{P_0},f_{P_1})$: 
\begin{itemize}
    \item ${D}_1=\{(f_{P_0}, f_{P_1}):$ $P_0$ is fast, with $\mu_0 \in \{0.075\tau_1,0.15\tau_1,0.25\tau_1\}$, and $P_1$ with $\mu_1=\gamma\mu_0, \gamma\in\mathbb{R}\}$ for carefully chosen $\gamma$ values.\
    \item ${D}_2=\{(f_{P_0},f_{P_1}) :$ $P_0$ is slow, with $\mu_0 \in \{0.95\tau_1,\tau_1,1.05\tau_1\}$ and $P_1$ with $\mu_1=\gamma\mu_0, \gamma\in\mathbb{R}\}$ for carefully chosen $\gamma$ values.
\end{itemize}

We generate utility matrices using Equation~\ref{eq:utility_high_level} for 75 games in $\Gamma^{\mathscr{D}}$ from both ${D}_1$ and ${D}_2$. We use $\tau_1=4$ as in Ethereum~\cite{ethereum_validator_phase0} and we list (i) $f_P$ parameters in detail, (ii) utilities, and (iii) $\delta_0^{NE},\delta_1^{NE}$ for these games in Appendix \ref{assec:curMechvsOurMech}. We observe that the proposers in $\Xi$ have a much higher utility by playing timing games compared to \ourmech. This reduction is due to the reward-sharing policy based on block reception.
From the analyses (Appendix \ref{assec:curMechvsOurMech}), we claim the following:

\begin{lemma}
In heterogeneous settings with distributions  $(f_{P_0},f_{P_1}) \in {D}_1 \cup {D}_2$, a proposer $P_i$ in \ourmech\ with $\mu_i > \mu_{i^+}$, the equilibrium delay  $\delta_{i}^{NE} =0$. 
\end{lemma}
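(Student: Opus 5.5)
This lemma is a claim about a finite family of numerically specified games, so the plan is to prove it by exhaustive computation, organised so that it is a certificate rather than an eyeball reading of Figure~\ref{fig:rvsGamma}. For each configuration $(f_{P_0},f_{P_1})\in D_1\cup D_2$, let $P_i$ be the slower proposer ($\mu_i>\mu_{i^+}$). We show that in every pure Nash equilibrium of the bi-matrix game $\Gamma^{\mathscr{D}}$ its delay is $0$. We use $\tau_1=4$, $\zeta=0.05$, $c=1$, the Gamma parametrisation listed in Appendix~\ref{assec:curMechvsOurMech}, and $n=127$, $\thd=\lfloor 2n/3\rfloor+1$.

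\textbf{Step 1 (utility tables).} First compute $q_j(\delta)$ for every grid point $\delta\in S^\zeta$ by the Gamma CDF evaluated at $\tau_1-\delta$. Then compute $p_j(\delta_j,\delta_{j^+})$ by one-dimensional quadrature of the derivative formula in Lemma~\ref{lemma:2}, integrated back from $\delta_j=\tau_1$, where $p_j=0$. Lemma~\ref{lemma:3} gives $p_{j^+}=q_jq_{j^+}-p_j$, which halves the work and serves as a consistency check. Substituting into $\mathfrak{R}^1_i,\mathfrak{R}^2_i$ and Equation~\ref{eq:utility_high_level} yields both payoff matrices.

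\textbf{Step 2 (reduction to a dominance statement).} Rather than enumerating equilibria, the cleaner claim is that for the slower proposer, $\delta_i=0$ is a best response to \emph{every} $\delta_{i^+}\in S^\zeta$. Equivalently, for every column,
\[
U_i(0,\delta_{i^+})\ \ge\ U_i(\delta_i,\delta_{i^+})\quad \forall\, \delta_i\in S^\zeta .
\]
Every equilibrium then has $\delta_i^{NE}=0$, and we pair it with the best response $\delta_{i^+}^{NE}\in\arg\max U_{i^+}(0,\cdot)$.

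To make this plausible and to reduce checking, split $U_i=\mathfrak{R}^1_i+\mathfrak{R}^2_i$. In $\mathfrak{R}^2_i$, the factor $M^{\thd}_{\delta_i}$ is a binomial tail at a $2/3$ threshold. By Lemma~\ref{lemma:1} it decreases in $\delta_i$, and it does so sharply whenever $q_i$ is near $\thd/n$, which is exactly the regime of a slow proposer. This decay dominates the at-most-doubling factor $1+v(\delta_i)\le 2$. In $\mathfrak{R}^1_i$, the share $g$ only gets worse by Lemma~\ref{lemma:2}, while the valuation gain is halved by randomisation.

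\textbf{Main obstacle.} The hard part is ruling out the regime where $P_i$ is slow but still very fast in absolute terms, the small-$\mu_0$ cases of $D_1$ with $\gamma$ close to $1$. There $M^{\thd}_{\delta_i}\approx 1$ for moderate $\delta_i$, so the tail decay cannot dominate, and the argument must rely on the loss in vote share $p_i/(p_i+p_{i^+})$. My first attempt would be a first-order bound near $\delta_i=0$: show $\partial_{\delta_i}$ of the expected share exceeds $c/(2\tau_1)$ using $\int f_{P_i}(x)f_{P_{i^+}}(x+\delta_i-\delta_{i^+})\,dx$ and the peakedness assumption $L^2_{f}[0,\tau_1]\ge 1/(2\sqrt{\tau_1})$. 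For the remaining grid points we would fall back on direct verification, bounding quadrature and discretisation error through the Lipschitz assumption on $U_i$ so that every inequality holds with strict margin.
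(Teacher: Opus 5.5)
Your overall route, settling the lemma by exact computation on the finitely many games in $D_1\cup D_2$, is the same as the paper's. The paper offers no argument beyond the numerical equilibrium computation for the bimatrix games reported in Appendix~\ref{assec:curMechvsOurMech}.

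\textbf{Step~2 fails.} You claim that $\delta_i=0$ is a best response of the slower proposer to \emph{every} column $\delta_{i^+}\in S^\zeta$. This is false, so the certificate you propose cannot exist. Take the column $\delta_{i^+}=\tau_1$:
\begin{itemize}
\item Then $q_{i^+}=0$, so $p_i=p_{i^+}=0$ and $M^{\thd}_{\delta_{i^+}}=0$.
\item Hence $\mathfrak{R}^1_i\equiv 0$ and $U_i(\delta_i,\tau_1)=M^{\thd}_{\delta_i}\,(1+v(\delta_i))$.
\item This is exactly the single-agent objective of $\Xi$, whose maximiser is $\delta_i^\star$, not $0$.
\end{itemize}
The paper's own tables show $\delta_i^\star>0$ for many slower proposers in $D_1$:
\begin{itemize}
\item $\mu_0=0.3$, $\gamma=2$: the slower $P_1$ has $\delta_1^\star=2.9$.
\item $\mu_0=0.3$, $\gamma=0.33$: the slower $P_0$ has $\delta_0^\star=3.4$.
\end{itemize}
By continuity, the same failure occurs against every column where the rival is very unlikely to reach $\thd$ attestors. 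Your verification would therefore return counterexamples in exactly the regime you flagged as the main obstacle. No first-order bound on the vote-share term can rescue those columns, because that term vanishes there. The lemma holds only because the faster proposer never \emph{chooses} such columns in equilibrium. It is an equilibrium statement, not a dominance statement.

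\textbf{The fix.} Certify the equilibrium property directly.
\begin{itemize}
\item Compute both players' best-response correspondences on the grid.
\item Check that every mutual best-response pair has $\delta_i=0$. This is what the paper's equilibrium computation amounts to.
\item If the lemma is also meant to cover mixed equilibria (the solver the paper uses returns them), a natural candidate certificate is iterated elimination of strictly dominated strategies. First remove the faster proposer's heavily delayed strategies; for example, $\delta_{i^+}=\tau_1$ yields utility $0$ and is strictly dominated by $\delta_{i^+}=0$. Then check that $\delta_i=0$ is the strict best response to all surviving columns. Whether this elimination actually terminates at $\{0\}$ for the slower player must itself be verified numerically.
\end{itemize}
Finally, even where your inequality holds, it is weak, and weak dominance does not imply that every equilibrium has $\delta_i=0$. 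You need $U_i(0,\delta_{i^+})>U_i(\delta_i,\delta_{i^+})$ for all $\delta_i\neq 0$ against the relevant columns.
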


\begin{lemma}
In heterogeneous settings with distributions $(f_{P_0},f_{P_1}) \in {D}_1 \cup {D}_2$, a fast proposer $P_i$ with $\mu_i < \mu_{i^+}$  and $\mu_{i^+}<<\tau_1 $, the equilibrium delay $\delta_{i}^{NE} =0$. 
\end{lemma}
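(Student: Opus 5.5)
This claim is empirical: the paper states it is drawn from the analyses in the appendix, over the finite families $D_1$ and $D_2$ with Gamma-distributed $f_P$. So the plan is to certify, for each game in $D_1\cup D_2$ satisfying $\mu_i<\mu_{i^+}$ and $\mu_{i^+}\ll\tau_1$, that $\delta_i=0$ is a best response in the discretized game $\Gamma^{\mathscr{D}}$ against the equilibrium strategy of $P_{i^+}$. By the preceding lemma, the slower proposer $P_{i^+}$ plays $\delta_{i^+}^{NE}=0$. It therefore suffices to show
\[
U_i(0,0)\ \ge\ U_i(k\zeta,0)\quad\text{for all } k\zeta\in S_i^\zeta .
\]

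The argument is organized around the decomposition $U_i=\mathfrak{R}^1_i+\mathfrak{R}^2_i$ from Equation~\ref{eq:utility_high_level}, and proceeds as follows.

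\textbf{Step 1: $\mathfrak{R}^2_i$ is negligible.} When $\mu_{i^+}\ll\tau_1$, the Gamma distribution puts almost all its mass on $[0,\tau_1]$, so $q_{i^+}(0)\approx 1$. Hence $M^{\thd}_{0}$ for $P_{i^+}$ is essentially $1$, and $\mathfrak{R}^2_i$, which carries the factor $1-M^{\thd}_{\delta_{i^+}}$, is negligible.

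\textbf{Step 2: $P_i$'s utility reduces to its share times the valuation.} Since both blocks almost surely collect $\thd$ attestations, $P_i$'s utility is approximately
\[
\mathbb{E}\!\left[\tfrac{Y_i}{Y_i+Y_{i^+}}\right]\Bigl(1+\tfrac{v(\delta_i)+v(0)}{2}\Bigr).
\]
When $q$ is close to $1$, almost every attestor receives both blocks, so $Y_i$ concentrates around $n\,p_i$. Using Lemma~\ref{lemma:3}, this gives $\mathbb{E}[R_i]\approx p_i/(q_iq_{i^+})$.

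\textbf{Step 3: compare derivatives.} By Lemma~\ref{lemma:2}, $\partial_{\delta_i}p_i=-\int f_{P_i}(x)f_{P_{i^+}}(x+\delta_i)\,dx$. Because both distributions are concentrated near small means, this derivative is of order $1/\mu$, i.e.\ large relative to $1/\tau_1$. The valuation, by contrast, grows only at rate $c/(2\tau_1)$ with $c\le1$. The loss in share should therefore dominate the gain $v(\delta_i)/2$ once $\delta_i$ exceeds a tiny threshold; concretely, check that $\partial_{\delta_i}\log\mathbb{E}[R_i]<-c/(2\tau_1)$.

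\textbf{Step 4: numerical certification.} Carry out the above comparison exactly on the utility matrices for every listed $(\mu_0,\gamma)$ configuration, computing $U_i$ via the closed forms with Gamma CDFs. Confirm that row $0$ (or column $0$) attains the maximum in the column (or row) of $\delta_{i^+}=0$, and that $(0,0)$ is the equilibrium found by the bimatrix solver.

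\textbf{Main obstacle.} The analytic part is Step 3 for small $\delta_i$. There $p_i$ starts near $q_iq_{i^+}/2$ or higher, because $P_i$ is faster, and its decrease is governed by the overlap $\int f_{P_i}f_{P_{i^+}}$. This overlap can be small when $\gamma$ is large, i.e.\ when the fast proposer is far ahead. This is exactly the regime where the hypothesis $\mu_{i^+}\ll\tau_1$ must be used quantitatively, and it is why the lemma excludes slow opponents near $\tau_1$. I would first try to lower-bound the overlap using unimodality and the restricted $L^2$ assumption $L^2_{f}[0,\tau_1]\ge 1/(2\sqrt{\tau_1})$, through Cauchy–Schwarz on the peaked region. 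If that fails, I would fall back on the case-by-case numerical verification of Step 4, which is what the lemma ultimately rests on.
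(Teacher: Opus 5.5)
Your Step 4 is what the paper does: the lemma has no analytic proof and is read off from the numerically computed equilibria of the discretized bimatrix game $\Gamma^{\mathscr{D}}$ for the listed Gamma configurations, so your proposal matches the paper's approach (only the $D_1$ instances meet the hypothesis, since in $D_2$ the slower proposer always has mean about $0.95\tau_1$ or more). Steps 1--3 are heuristics the paper does not attempt, and you are right to distrust them: Cauchy--Schwarz only upper-bounds the cross-overlap $\int f_{P_i}(x)f_{P_{i^+}}(x+\delta_i)\,dx$, so the restricted $L^2$ assumption cannot lower-bound it, and the tables show the fast proposer starting to delay once $\mu_{i^+}$ passes roughly $0.75$--$0.8\,\tau_1$, so no clean bound follows from a qualitative ``$\ll$''.
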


Note that if $\mu_{i^+}> \tau_1$ implies it is a very slow proposer, and the probability of obtaining $\thd$ attestations will be low.  For such parameter settings, \ourmech\ is approximately the same as $\Xi$ and hence $\Gamma^{\mathscr{D}}$ reduces to the single-agent decision problem (Section \ref{sssec:strategy_space}) and hence, a timing game would happen. We can expect $\delta_i^{NE}\rightarrow\delta_i^\star$, which is shown in Figures ~\ref{fig:D1} and~\ref{fig:D2}.

\section{Further Discussion}
\label{sec:discussion}

In the above discussions, we assumed that strategic players were not colluding. In this section, we state our claims about certain colluding attacks. 

\begin{proposition}[Validator Collusion]
   \label{lemma:adv_attack}
   The probability of a successful timing attack in \ourmech\ under validator collusion is quadratically smaller than in $\Xi$. Specifically, if the attack succeeds in $\Xi$ during a slot with probability $p_{\mathrm{s}}$, the attack succeeds in \ourmech\ with probability $p_{\mathrm{s}}^2$. 
\end{proposition}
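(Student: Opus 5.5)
We model a successful timing attack under validator collusion as the event that the colluding coalition ends the slot with its own delayed block confirmed and keeps the full reward from it, as it would in $\Xi$. The plan is to reduce the \ourmech\ success event to two independent copies of the $\Xi$ success event, each with probability $p_{\mathrm{s}}$, and then multiply.

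\textbf{Step 1: the event in $\Xi$.} In $\Xi$ the slot has a single proposer, selected by $\mathscr{V}(\mathfrak{L}^\ell,n)$. A colluding coalition can only exploit a slot when that proposer belongs to it, or is bribed by it, and its delayed block still gathers $\thd$ attestations. We take $p_{\mathrm{s}}$ to be the probability of exactly this event: an adversarial proposer is selected and the delayed block is confirmed.

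\textbf{Step 2: the event in \ourmech.} The two proposers are selected by two independent invocations of the same secure selection function. We then check what happens for each possible composition of $\{P_0,P_1\}$.
\begin{itemize}
\item If only $P_i$ is adversarial, the honest $P_{i^+}$ proposes at $t=0$. Then either (a) only $B_{i^+}$ reaches $\thd$, and $B_{i^+}$ is confirmed; or (b) both reach $\thd$, and the winning block is chosen u.a.r. by the unmanipulable oracle, with $P_i$'s share reduced to $Y_i/(Y_i+Y_{i^+})$ by Equation~\ref{eq:reward-sharing}. By Lemma~\ref{lemma:2}, delaying lowers $p_i$, so $\mathbb{E}[Y_i]$ is lower too. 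In neither sub-case does the attacker secure the full delayed reward.
\item Hence the attack succeeds, in the sense of matching $\Xi$, only when both proposers are colluding and both of their delayed blocks are confirmable.
\end{itemize}

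\textbf{Step 3: multiplying.} The two selections are independent draws, and the network propagation of the two blocks is independent as well, since each attestor's reception times come from separate densities $f_{P_0}$ and $f_{P_1}$. So the probability of the joint event factorizes as $p_{\mathrm{s}}\cdot p_{\mathrm{s}}=p_{\mathrm{s}}^2$, which is quadratically smaller than $p_{\mathrm{s}}$.

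\textbf{Main obstacle.} Step 2 is where the argument is weakest. A single colluding proposer facing an honest one still has a nonzero chance of extracting some delayed value, for instance when the honest block fails to reach $\thd$. A strict bound of $p_{\mathrm{s}}^2$ therefore requires the definition of ``success'' to demand matching the $\Xi$ payoff, or confirmation without honest competition. We would state that definition explicitly.

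If instead one wants a bound on any gain at all, we would use Lemma~\ref{lemma:3} to write the single-adversary payoff as $p_i$ times a term no larger than $q_iq_{i^+}$. We would then show this payoff is dominated by the no-delay payoff, using the same exponential-versus-linear comparison as in the proof of Theorem~\ref{thm:nash}. That would show the single-adversary configuration contributes no profitable attack, leaving $p_{\mathrm{s}}^2$ as the success probability.
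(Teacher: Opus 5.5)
Your skeleton is the paper's argument: only slots in which every proposer belongs to the coalition count, the two invocations of $\mathscr{V}$ are independent, so multiply. The refinement you add in Steps~1 and~3, however, makes the claimed equality false.

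Let $\pi$ be the probability that one invocation of $\mathscr{V}$ returns a colluding proposer, and let $M=M^{\thd}_{\delta}$. Your Step~1 then gives $p_{\mathrm{s}}=\pi M$. If both seats of \ourmech\ are colluding, the coalition does not need both delayed blocks to reach $\thd$. If exactly one does, it is confirmed with $R_i=1$. If both do, whichever block the oracle picks belongs to the coalition. Since $X_0$ and $X_1$ are independent in the paper's model, the success probability in this configuration is $\pi^2\bigl(1-(1-M)^2\bigr)=\pi^2M(2-M)$. This is strictly larger than $p_{\mathrm{s}}^2=\pi^2M^2$ whenever $\pi>0$ and $0<M<1$. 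The clause ``both delayed blocks confirmable'' is exactly what lets the product factor as $p_{\mathrm{s}}\cdot p_{\mathrm{s}}$, but it is not the coalition's success event.

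On top of that, as you note, the single-colluder configuration adds a term $2\pi(1-\pi)\,M\,(1-M^{\thd}_{0})$: the delayed block reaches $\thd$ while the honest block at zero delay does not. Your fallback cannot remove this term. Theorem~\ref{thm:nash} covers only homogeneous proposers. The paper's own heterogeneous computations show a fast proposer best-responding to an honest slow one with a large delay, e.g.\ $\delta_0^{NE}=2.5$ for $\mu_0=0.6$, $\gamma=6.2$, where the slow proposer's mean is still below $\tau_1$.

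The paper sidesteps all of this by keeping $p_{\mathrm{s}}$ a pure selection probability, $p_{\mathrm{s}}=\Pr(P\in V_{\mathrm{s}})$ under $\mathscr{V}$. It identifies ``timing games by the coalition in a slot'' with the event that every proposer of that slot lies in $V_{\mathrm{s}}$. Then $\Pr(\mathbf{P}\subseteq V_{\mathrm{s}})=\Pr(P_0\in V_{\mathrm{s}})\Pr(P_1\in V_{\mathrm{s}})=p_{\mathrm{s}}^2$ by independence of the two draws, each node being a distinct identity. No propagation or confirmation probabilities enter.

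To fix your argument, drop the confirmation factor from Step~1 and the ``both confirmable'' clause from Step~2, and state this definition explicitly as you suggest at the end; it then becomes the paper's. Be aware that the single-colluder case is then excluded by definition, not by any payoff argument. The proposition is therefore a statement about proposer selection, not about equilibrium behavior.
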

\begin{proof}
  Let $V_\mathrm{s}$ represent the set of colluding validators with combined stake $s$. Let the probability that a proposer $P \in V_s$ is selected by $\mathscr{V}$ be $p_\mathrm{s}$. Then the probability of timing games by colluding validators with stake $\mathrm{s}$ is
      $$p_{\mathrm{s}} = Pr(P\in V_\mathrm{s}, \mathscr{V})$$ 
    Similarly, the probability of timing games in \ourmech\ by colluding validators with stake $\mathrm{s}$ is 
    \begin{align*}
        Pr\left(\text{Timing games in \ourmech}\right) &= Pr(\mathbf{P} \in V_{\mathrm{s}},\mathscr{V})  \\
      &= Pr(P_0\in V_{\mathrm{s}},\mathscr{V})\times Pr(P_1 \in V_{\mathrm{s}},\mathscr{V})\\ &=p_{\mathrm{s}}^2
    \end{align*}
\end{proof}

Note that the implicit assumption of independence across the selection of proposers is because $\mathscr{V}$ in practice considers each node as a different identity.

\section{Related Work}
\label{sec:rel_work}

\textbf{Game Theory and Blockchains.}
\label{ssec:gt_blk}
The intersection of game theory and blockchain protocols has become a focal point for ensuring system stability and incentive alignment in blockchains. Daian et al.~\cite{daian2019flash} formalize Maximal Extractable Value (MEV) as profits arising from transaction reordering in decentralized exchanges and show how frontrunning and related strategies distort validator incentives and may threaten consensus stability. Following this work, Heimbach et. al.~\cite{heimbach2022} model sandwich attacks as a strategic game and propose mechanism adjustments that eliminate profitable attack equilibria. 

Other lines of work include MEV redistribution. Braga et. al.~\cite{braga2024} dynamic MEV-sharing mechanisms are proposed to allocate extractable value across participants over time, reducing concentration and improving incentive alignment. Further, another study by Tarun et. al.~\cite{chitra2022} studies MEV redistribution in Proof-of-Stake systems, analyzes how concentrating MEV rewards can weaken economic security, and proposes redistribution schemes to mitigate centralization risks. Rasheed et al.~\cite{Rasheed2025} propose a Shapley value-based framework that models revenue from private transaction matchmaking as a cooperative game and distributes earnings according to participants’ marginal contributions. Together, these works apply non-cooperative and cooperative game-theoretic tools to analyze MEV extraction, revenue allocation, and strategic manipulation in blockchain systems.

BitcoinF~\cite{BitcoinF2020} proposes a game-theoretic approach to prevent strategic mining deviations in the ``transaction fee only" model in Bitcoin. Building on the concept of equitable distribution, Jain et al.~\cite{NFairness2021} investigate network fairness and scalability, analyzing how network latencies can create unfair advantages. They propose mechanisms to prevent faster nodes from disproportionately dominating the consensus process. More recently, Damle et al.~\cite{damle2024no} expand the scope of Transaction Fee Mechanism design, introducing frameworks that achieve fairness even in the absence of traditional transaction fees, thereby reinforcing the robustness of decentralized systems against manipulative strategic players.

\textbf{Timing Games and Their Impact.}
\label{ssec:timing_impact}
\"{O}z et al.~\cite{BurakTiming2023} show that timing games are profitable for proposers and discuss how they increase the risk of missed slots. Further, Schilling et al.~\cite{schwarzschilling2023timemoneystrategictiming} show that, for any deadline enforced by the attestors, the proposers can delay until close to the deadline to achieve the threshold attestations.

\textbf{Proposer-Boost and Honest Reorgs.}
\label{ssec:boost_reorgs}
To encourage rational behavior (delaying the block proposal) toward honest behavior (proposing at the beginning of the slot), the Ethereum Foundation has proposed the Proposer Score Boost (PSB) and Honest Reorgs. Proposer Score Boost~\cite{githubLMDScoreBoosting} grants the proposer a fork-choice boost equivalent to 40\%  of the full attestation weight only for the duration of the slot. Honest Reorgs~\cite{githubHonestReorgs} allows honest proposers to use PSB during the slot to forcibly reorg (replace and remove) blocks with attestation weight below 20\%. Ever since PSB was implemented, there has been an increase in reorgs~\cite{reorgReorgpics}. Neither solution concretely addresses the problem, as the proposers can still strategically delay the block proposing to get the minimum required votes. 

\textbf{Proposer Committees.}
\label{ssec:prop_comittee}
Recent works~\cite{inclusion1,inclusion2} explored inclusion lists in which a committee of nodes provides only transaction inputs for block construction, and a single block producer constructs and publishes the block to the network. 
Subsequent works~\cite{stouka2025multiple,garimidi2025transaction} explored the design of a transaction fee mechanism (TFM) with inclusion lists and proposer committees. A transaction fee mechanism is the set of rules that determines how users pay to get their transactions included in a block, and how those fees are distributed.

All of the works mentioned above focus on preventing transaction censorship, not improving timing in blockchains. Hence, they differ from our work.

\textbf{Relay Enforcement.}
\label{ssec:relay_enforce}
In the case of Ethereum, the \emph{Proposer-Builder Separation}, shortly known as PBS \cite{ethereumEIP7732}, enabled the proposers to auction the right to build a block to entities known as \emph{builders}, which build profitable blocks and bid for inclusion by the proposer, often via third-party relays \cite{mevSaga,rasheed2024evolution}. The proposer selects the highest bid and commits to relay the block. A recent proposal~\cite{ethresearTimingGames} proposes enforcing honest block proposal timing by having relays refuse to forward late bids from builders. However, this approach is ineffective. Restrictive relays could result in the creation of new, less stringent relays trusted by builders and proposers. Relays only need the trust of builders and proposers; their alignment with broader network values is irrelevant. Thus, competitive pressures discourage relays from enforcing overly strict rules, and their capacity to regulate such timing games remains inherently limited. 

\textbf{Monitoring and Penalty for Missed Slots.}
\label{ssec:monitor}
In addition, two alternative solutions were proposed \cite{ethresearTimingGames}. The former proposes checking the consistency of each proposer's block proposal and cross-referencing it with relays to determine whether the proposer is indeed playing timing games. However, this method is only moderately effective. The latter proposes to strictly penalize proposers for missing blocks; however, this penalizes honest proposers who organically missed a slot.

\section{Conclusion }
\label{sec:conclusion}
In this work, we show that the timing games can be mitigated by inducing competition with just one additional proposer. We show the proposer's Nash equilibrium under different scenarios. In a homogeneous proposers setting, we show that not delaying the block proposal is the Nash Equilibrium. In a heterogeneous proposers setting, we show that not delaying the block proposal is the Nash Equilibrium for the faster proposer, provided the slow proposer is not extremely slow, i.e., the expected time to reach an attestation is not close to the attestation deadline. We believe our analysis provides new insights into modeling consensus protocols by accounting for timing.
\bibliographystyle{unsrt}  
\bibliography{references}
\appendix


\newpage
\section{Main Proofs}
\label{asec:proofs}

\subsection{Probability Proofs}
\label{assec:main_proofs}

\noindent\textbf{Lemma 3.} \emph{$p_{i}(\delta_i,\delta_{i^+}) +p_{i^+}(\delta_i,\delta_{i^+}) = q_i(\delta_i) q_{i^+}(\delta_{i^+})$} 

\begin{proof}

\begin{eqnarray*}
    p_i(\delta_i,\delta_{i^+}) &=& \int_{0}^{\tau_1-\delta_i}{f_{P_i}(x)\int_{x+\delta_i-\delta_{i^+}}^{\tau_{1}-\delta_{i^+}}{f_{P_{i^+}}(y)}}\,dy\,dx \\
    && 0 \le x \le \tau_1 -\delta_i\\
    && x+\delta_i-\delta_{i^+} \le y \le \tau_1-\delta_i\\
    && \implies x \le \min(\tau_1 -\delta_i, y-\delta_i+\delta_{i^+})\\
    && \delta_i -\delta_{i^+} \le y \le \tau_1 - \delta_{i^+}\\
    &=& \int_{\delta_i-\delta_{i^+}}^{\tau_1-\delta_{i^+}}{f_{P_{i^+}}(y)\int_{0}^{\min(y-\delta_i+\delta_{i^+},\tau_1-\delta_i)}{f_{P_i}(x)}}\,dx\,dy \quad (\text{By Fubini's Theorem}) \\
    &=& \int_{\delta_i-\delta_{i^+}}^{\tau_1-\delta_{i^+}}{f_{P_{i^+}}(y)\int_{0}^{y-\delta_i+\delta_{i^+}}{f_{P_i}(x)}}\,dx\,dy\\
    p_{i^+}(\delta_i,\delta_{i^+}) &=& \int_{0}^{\tau_{1}-\delta_{i^+}}{f_{P_{i^+}}(y)\int_{y+\delta_{i^+}-\delta_i}^{\tau_1-\delta_i}{f_{P_i}(x)}}\,dx\,dy \\
    &=& \int_{0}^{\delta_i-\delta_{i^+}}f_{P_{i^+}}(y)\int_{y-\delta_i+\delta_{i^+}}^{\tau_1-\delta_i}f_{P_i}(x)\,dx\,dy + \int_{\delta_i-\delta_{i^+}}^{\tau_1-\delta_{i^+}}f_{P_{i^+}}(y)\int_{y-\delta_i+\delta_{i^+}}^{\tau_1-\delta_i}f_{P_i}(x)\,dx\,dy\\
    p_i(\delta_i,\delta_{i^+}) + p_{i^+}(\delta_i,\delta_{i^+}) &=& \int_{\delta_i-\delta_{i^+}}^{\tau_1-\delta_{i^+}}{f_{P_{i^+}}(y)\int_{0}^{y-\delta_i+\delta_{i^+}}{f_{P_i}(x)}}\,dx\,dy + \int_{\delta_i-\delta_{i^+}}^{\tau_1-\delta_{i^+}}f_{P_{i^+}}(y)\int_{y-\delta_i+\delta_{i^+}}^{\tau_1-\delta_i}f_{P_i}(x)\,dx\,dy\\
    && + \int_{0}^{\delta_i-\delta_{i^+}}f_{P_{i^+}}(y)\int_{0}^{\tau_1-\delta_i}f_{P_i}(x)\,dx\,dy\\
    &=& \int_{\delta_i-\delta_{i^+}}^{\tau_1-\delta_{i^+}}{f_{P_{i^+}}(y)\int_{0}^{y-\delta_i+\delta_{i^+}}{f_{P_i}(x)}}\,dx\,dy + \int_{0}^{\delta_i-\delta_{i^+}}f_{P_{i^+}}(y)\int_{0}^{\tau_1-\delta_i}f_{P_i}(x)\,dx\,dy\\
    &=& \int_{\delta_i-\delta_{i^+}}^{\tau_1-\delta_{i^+}}{f_{P_{i^+}}(y)\int_{0}^{\tau_1-\delta_i}{f_{P_i}(x)}}\,dx\,dy + \int_{0}^{\delta_i-\delta_{i^+}}f_{P_{i^+}}(y)\int_{0}^{\tau_1-\delta_i}f_{P_i}(x)\,dx\,dy\\
    &=& \bigg(\int_{0}^{\tau_1}{f_{P_i}(x)}\,dx\bigg)\bigg(\int_{0}^{\tau_1}{f_{P_{i^+}}(y)}\,dy\bigg)\\
    &=& q_i(\delta_i)q_{i^+}(\delta_{i^+})
\end{eqnarray*}
\end{proof}

\subsection{Nash Equilibrium}
\label{assec:nash_equilibrium}
\noindent\textbf{Theorem 1.} \emph{Under homogeneous proposers setting, if support of $f_{P_0}(f_{P_1})$ is $[0,\tau_1]$, $(\delta^{NE}_{0},\delta^{NE}_{{1}}) = (0,0)$ constitutes a Nash Equilibrium of the Latency Game $\Gamma^\mathscr{L} =\big<P,(S_i),(U_i) \big>$ in \ourmech\ is  when $c\le 1$.}

\begin{proof}
For ease of exposition, let the normalized expected utility of $P_i$ at slot $\ell$ be given by:
\begin{align}
\label{eq:nexpu1}
U_{i}(\delta_i,\delta_{i^+})
&= \sum_{x,y=\thd}^{n}\sum_{w=x+y-n}^{\min(x,y)} \binom{n}{w}\binom{n-w}{x-w}\binom{n-x}{y-w}
   (q_{i}\hat{q}_{i^+})^{x-w} (q_{i^+}\hat{q}_{i})^{y-w}
   (\hat{q}_{i}\hat{q}_{i^+})^{n-(x+y-w)} \nonumber \\
&\qquad \times
   \sum_{z=0}^{w}\binom{w}{z} p_{i}^{z} p_{i^+}^{w-z}
   \frac{x-w+z}{x+y-w}
   \left(1+\frac{c(\delta_i + \delta_{i^+})}{2}\right) \nonumber \\
&\quad + \sum_{x=\thd}^{n}\binom{n}{x} q_{i}^{x}\hat{q}_{i}^{n-x}
   \left( 1- \sum_{y=\thd}^{n}\binom{n}{y}
   q_{i^+}^{y}\hat{q}_{i^+}^{n-y} \right)
   (1+c\delta_i)
\end{align}

\noindent\(
\text{Since, proposer are homogenous }f_{P_{0}} = f_{P_{1}} \text{ and thus, } q_i(\delta) = q_{i^+}(\delta). \text{ Further, } q_i(0)=q_{i^+}(0)=1\ \text{and } p_i(0)=p_{i^+}(0)=\frac{1}{2}\). Suppose \(\delta_{i^+}=0\), then for \(\forall \delta_i>0, q_i(\delta_i)<q_{i^+}(0)\) and \(p_i(\delta_i,0) < p_{i^+}(\delta_{i},0)\)

\noindent\(\forall x,y,w\) Equation~\ref{eq:nexpu1} is zero except for \(x+y-w=n\) and \(w=x\). Therefore,   

\begin{align*}
U_{i}(\delta_i,\delta_{i^+})&=\sum_{x=\thd}^{n}\binom{n}{x}(1-q_i)^{n-x}\sum_{z=0}^{x}\binom{x}{z}p_{i}^{z}p_{i^+}^{x-z} \frac{z}{n}
\left(1+ \frac{c\delta_i}{2}\right)&&&&&\\
&=\sum_{x=\thd}^{n}\binom{n}{x}(1-q_i)^{n-x}\frac{xp_i}{n}(p_i+p_{i^+})^{x-1}\left(1+ \frac{c\delta_i}{2}\right)&&&&&\\
& \text{From Prop.}~\ref{lemma:3}, \text{$p_i+p_{i^+} = q_iq_{i^+} = q_i$}&&&&&\\
&=p_i\sum_{x=\thd}^{n}\binom{n-1}{x-1}(1-q_i)^{n-x}(q_i)^{x-1}\left(1+ \frac{c\delta_i}{2}\right)&&&&&\\
&=p_i\sum_{h=\thd-1}^{n-1}\binom{n-1}{h}(1-q_i)^{(n-1)-h}(q_i)^{h}\left(1+ \frac{c\delta_i}{2}\right)&&&&&
\end{align*}

\noindent To show that \(\delta_i=0\) is best response to $\delta_{i^+}=0$. It is enough to show that \(\frac{d}{ds}U_i^{M,l}<0\)

\noindent Let \(g_i = \sum_{h=\thd-1}^{n-1}\binom{n-1}{h}(1-q_i)^{n-1-h}(q_i)^{h}\)

\noindent Then \(U^{M,l}_{i} = p_i\cdot g_i \cdot(1+\frac{c\delta_i}{2})\)

\begin{align*}
    & \frac{dU_i^{M,l}}{d\delta_i}<0&\label{eq:p3}\\
    &\implies p_ig_i\frac{c}{2}+g_i(1+\frac{c\delta_i}{2})\frac{dp_i}{d\delta_i}+p_i(1+\frac{c\delta_i}{2})\frac{dg_i}{d\delta_i}<0&\quad\\
    &\implies p_ig_i\frac{c\delta_i}{2}<(1+\frac{c\delta_i}{2})\left( g_i\int_0^{\tau_1-\delta_i}f(x)f(x+\delta_i)dx+p_i \frac{dg_i}{d\delta_i}\right)\\
\end{align*}
    
\(\because \left(1+\frac{c\delta_i}{2}\right) \ge 1, \text{ it is sufficient to show the following:}\)

\begin{align}
    p_ig_i\frac{c}{2}&<g_i\int_0^{\tau_1-\delta_i}f(x)f(x+\delta_i)dx+p_i\frac{dg_i}{dq_i}f(\tau_1-\delta_i)\\
    \frac{c}{2}&<\frac{1}{p_i}\int_0^{\tau_1-\delta_i}f(x)f(x+\delta_i)dx+\frac{1}{g_i}\frac{dg_i}{dq_i}f(\tau_1-\delta_i)\label{eq:p4}
\end{align} 

Substituting \(c \le \frac{1}{\tau_1}\) in Equation~\ref{eq:p4}

\begin{align}
    \frac{1}{2\tau_1} &<\frac{1}{p_i}\int_0^{\tau_1-\delta_i}f(x)f(x+\delta_i)dx+\frac{1}{g_i}\frac{dg_i}{dq_i}f(\tau_1-\delta_i)\label{eq:p5}
\end{align}

Both terms on the right-hand side of Equation~\ref{eq:p5} are positive and increase with $\delta_i$.

Substituting $\delta_i=0$ in Equation~\ref{eq:p5}

\begin{align}
    \frac{1}{2\tau_1} &<\frac{1}{p_i}\int_0^{\tau_1}f(x)^2dx+\frac{1}{g_i}\frac{dg_i}{dq_i}f(\tau_1-\delta_i)\quad\quad\quad\quad\quad\label{eq:p6}
\end{align}

\(\because \int_0^{\tau_1}f(x)^2 \ge \frac{1}{4\tau_1}\) and \(\frac{1}{p_i} \ge 2\), the following holds true \(\frac{1}{2\tau_1} <\frac{1}{p_i}\int_0^{\tau_1}f(x)^2dx\)

Therefore, Equation~\ref{eq:p6} holds and hence \(\frac{dU_i^{M,l}}{d\delta_i}<0\).
\end{proof}

\section{Additional Proofs}
\label{asec:add_proofs}

\subsection{On-chain Randomness}
\label{assec:onChain_random}

Let $\lambda \in \mathbb{N}_{\geq 1}$ be a security parameter. A cryptographic hash function is a one-way function defined as $\textsc{Hash}:{0,1}^* \rightarrow {0,1}^\lambda$. Such a function is said to be (i) \emph{collision-resistant} if the likelihood that two distinct inputs $x \neq y$ produce the same output is negligible, i.e., $\Pr[\textsc{Hash}(x) = \textsc{Hash}(y) \mid x \neq y] \leq \textsf{negl}(\lambda)$, and (ii) \emph{pre-image resistant} if the probability of finding any input $x$ such that $\textsc{Hash}(x)$ equals a given output is also negligible in $\lambda$. Here, $\textsf{negl}(\lambda)$ represents a negligible function in the security parameter. A typical example of such a function is SHA-256~\cite{gilbert2003security}.

If more than one block achieves $\thd$ attestations, one among them is selected using on-chain randomness by simulating an unbiased k-sided die roll. In the case of \ourmech\, the value of $k$ is 2. The roll outcome can be considered as follows:
\begin{align}
\label{eq:hash-1}\textsc{H}_f &= \textsc{H}\left(\big\|_{d\in\{0,\dots,k-1\}}\textsc{H}(root(B_d))\right)\\
\label{eq:hash-2}O(B_0,\ldots,B_{k-1}) &= {\arg\min}_{d\in\{0,\dots,k-1\}} d_H(\textsc{H}_f,\textsc{H}(root(B_d)))  
\end{align}
where $root(B_d)$ is the merkle root of block $B_d$, $H$ is a collision-resistant hash function, and $d_H$ is hamming distance function. For any two binary strings $\textsc{H}_1,\textsc{H}_2$ of $z$ bits $d_H(\textsc{H}_1,\textsc{H}_2) = \sum_{i\in [z]} \left[\textsc{H}_1^{(i)} \oplus \textsc{H}_2^{(i)}\right]$.

\begin{remark}
    Invoking $O(B_0,\ldots, B_{k-1})$ for blocks $B_0,\ldots, B_{k-1}$ is equivalent to selecting them via an unbiased k-sided die roll.
\end{remark}
\begin{proof}
    For any collision-resistant hash function $\textsc{H}:\{0,1\}^{*}\rightarrow\{0,1\}^{\lambda}$, the uniformity property implies $\textsc{H}(B_{k}) \in_{R} \{0,1\}^{\lambda}$~\cite{Robshaw2011HashOWF}. 

    \noindent From Equation~\ref{eq:hash-1} it can be inferred that $\forall d\in\{0,\dots,k\}, \textsc{H}(root(B_{d})) \in_{R} \{0,1\}^{\lambda}$ and therefore, $\textsc{H}\left(\big\|_{d\in\{0,\dots,k\}}\textsc{H}(root(B_d))\right) \in_{R} \{0,1\}^{\lambda}$.

    \noindent Since the hash invocation is for blocks that proposers commit, the outcome $O(B_1,\ldots, B_k) = \arg \min_{d\in[k]} d_H(\textsc{H}_f,\textsc{H}(root(B_d)))$ is random. We, therefore, get the equivalence by mapping this outcome to getting ``a face'' in an unbiased $k$-sided die roll.
\end{proof}

\subsection{Optimality of Broadcasting for  Transacting Entities}
\label{assec:broadcasting_optimality}
\noindent\textbf{Lemma 4.} \emph{The optimal strategy for a (rational) transaction-generating entity in \ourmech\ is to broadcast transactions to both proposers.} 
\begin{proof}
    \noindent Let $P^l_0, P^l_{1}$ be the proposers selected for slot $\ell$ with probability of reaching majority attestors $q_0$ and $q_{1}$ respectively. Let $t$ be the transaction sent to the network or the builder(s). 
    
    \noindent Let $\alpha_0, \alpha_1$ be the probability that transaction $t$ sent to the network is included by $P_0, P_1$ respectively. Then the probability of the transaction being included in the final block under \(P_0 \) and \(P_1\) is \(PoI_0 \) and \(PoI_1 \), respectively. 
    \begin{align*}
        PoI_0 = \frac{\alpha_0q_0}{2},&\quad PoI_1 = \frac{\alpha_1q_1}{2}
    \end{align*}
    \noindent Alternatively, let $\beta_0, \beta_1$ be the probability that the block sent by a builder is the highest bid for $P_0, P_1$, respectively. Then the probability of winning and therefore the probability of inclusion are 
    \begin{align*}
    PoI_0 = \frac{\beta_0q_0}{2},\quad PoI_1 = \frac{\beta_1q_1}{2}
    \end{align*}
    \noindent In both cases, the probability of inclusion by sending the transaction (or block) to both proposers is $PoI_0 + PoI_1 > PoI_{h}, \forall h \in\{0,1\}$. Thus, it is optimal to broadcast to both the proposers in \ourmech.     
\end{proof}

As in prior works~\cite{chen2022tips, wang2022transaction}, the inclusion probability depends on the inclusion strategy and the protocol parameters.

\section{Analysis}
\label{asec:empirical}

We perform our analysis using MATLAB and Python on a Windows 11 13th Gen Intel(R) Core(TM) i7-1335U processor at 1700 MHz with 16GB RAM. Utility computation in \ourmech\ is based on Optimization Toolbbox in MATLAB, while Nash equilibrium computation is based on the Python solver by \cite{knight2018nashpy}. Table~\ref{tab:params} shows the parameters selected for equilibrium analysis.

\subsection{Proposer Utilities in \ourmech\ under Homogenous Settings}
\label{assec:utilInHomo}
Figure \ref{fig:homogenous} shows the utility of $P_0$ in homogeneous settings when $\delta_1=0$. We analyze the homogeneous setting in multiple configurations where we fix the network parameters of $P_0$ and $ P_1$. We observe that the utility of proposer $P_0$ is maximum for $\delta_0=0$, when $P_1$ does not delay its block proposal, i.e., $\delta_1=0$.  These observations confirm Theorem \ref{thm:nash}.

\begin{figure*}[ht]
    \centering
    \begin{subfigure}{0.45\textwidth}
        \centering
        \includegraphics[width=\linewidth]{Figures/Results/Equal_Proposer/mean=0.16.pdf}
        \caption{$\mu_0=0.16$}
    \end{subfigure}
    \hfill
    \begin{subfigure}{0.45\textwidth}
        \centering
        \includegraphics[width=\linewidth]{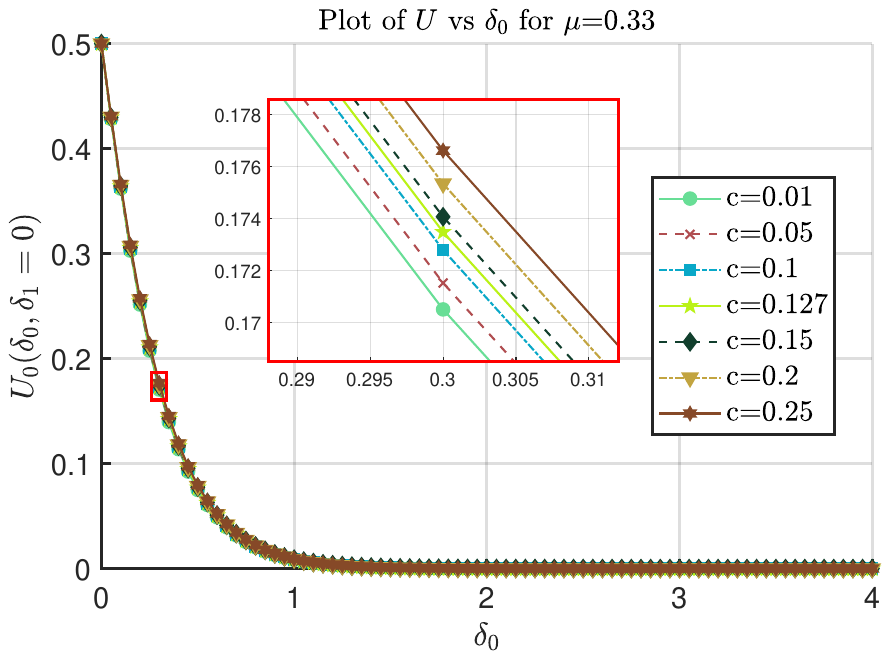}
        \caption{$\mu_0=0.33$}
        \end{subfigure}
    
    \vspace{0.5cm}
    
    \begin{subfigure}{0.45\textwidth}
        \centering
        \includegraphics[width=\linewidth]{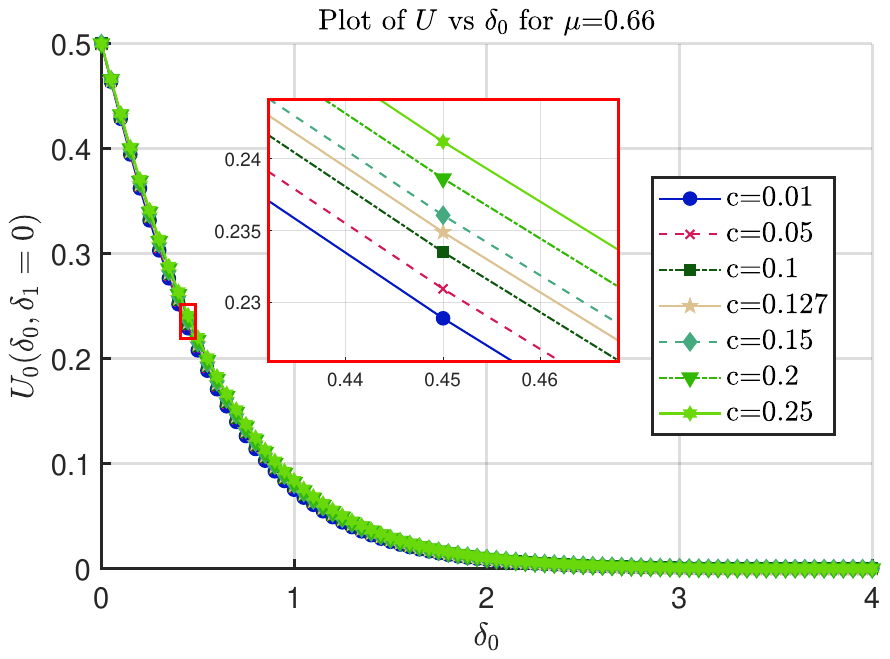}
        \caption{$\mu_0=0.66$}
    \end{subfigure}
    \hfill
    \begin{subfigure}{0.45\textwidth}
        \centering
    \includegraphics[width=\linewidth]{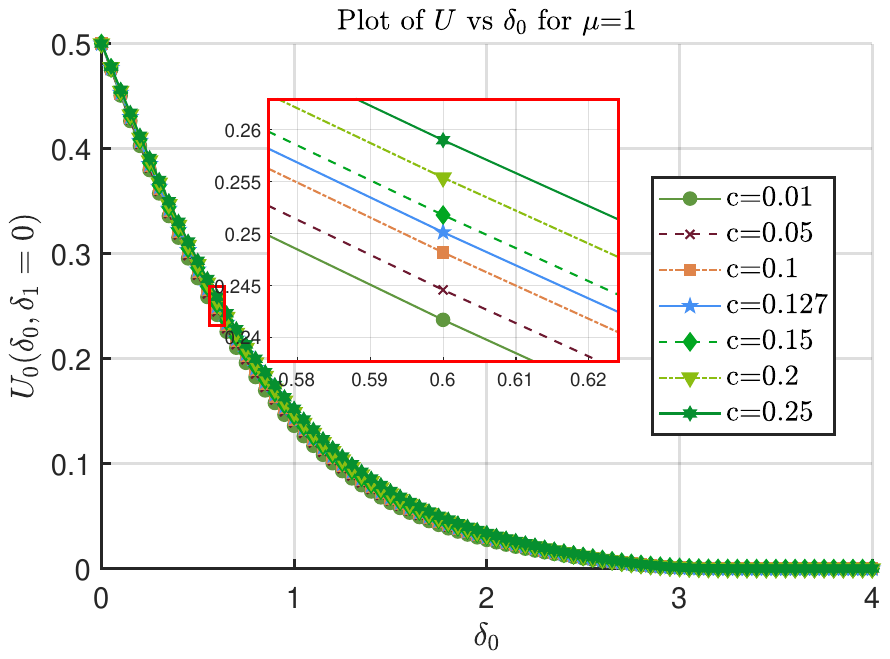}
        \caption{$\mu_0=1$}
    \end{subfigure}
    
    \caption{$U_0(\delta_0,\delta_1=0)$ vs $\delta_0$ in Homogenous Settings}
    \label{fig:homogenous}
\end{figure*}

\newpage
\subsection{Analysis \texorpdfstring{$\Xi$}{Xi} vs \ourmech\ under Heterogenous Settings}
\label{assec:curMechvsOurMech}

Table~\ref{tab:params} contains different parameters used to analyze \ourmech\ in the heterogeneous setting in two scenarios: $D_1$, where $P_0$ is fast, and $D_2$, where $P_0$ is slow. The parameters of $f_{P_0},f_{P_1}$ chosen in $D_1, D_2$ are such that the equilibrium analysis of $\Gamma^{\mathscr{D}}$ adequately captures all those cases where $P_0$ is competing against slower and faster proposers.

\begin{table}[h]
    \centering
    \caption{$f_{P_0},f_{P_1}$ parameters for Heterogenous Settings}
    \begin{tabular}{c c c c}
        \hline
         Case & $\alpha_0$ & $\lambda_0$ & $\gamma$\\
        \hline
         \multirow{3}{*}{$D_1$} & 1.5 & 5 & 0.33,0.5,1,2,5,10,10.25,10.5,10.75,11,11.25,11.5,11.66,12.67,13,13.33,13.67,14\\
          & 1.5 & 2.5 & 0.33,0.5,1,2,5,5.2,5.4,5.6,5.7,5.83,6,6.2,6.33,6.5,6.67,6.84,7,10\\
          & 2 & 2 & 0.33,0.5,1,2,2.2,2.4,2.6,2.8,3,3.3,3.5,3.6,3.7,3.8,3.9,4,4.1,4.2,5,10\\
        \hline
        \multirow{3}{*}{$D_2$} & 1.5 & 0.394 & 0.05,0.07,0.11,0.2,0.4,0.5,0.6,0.8,1,1.2,1.4\\
          & 1.5 & 0.37 & 0.05,0.07,0.11,0.2,0.4,0.5,0.6,0.8,1,1.2,1.4\\
        & 1.5 & 0.35 & 0.05,0.07,0.11,0.2,0.4,0.5,0.6,0.8,1,1.2,1.4\\
        \hline
    \end{tabular}
    \label{tab:params}
\end{table}

Figures~\ref{fig:utility_0.3},~\ref{fig:utility_0.6}, and~\ref{fig:utility_1} show the utility matrix, equilibrium delay, and utility at equilibrium (within a slot) in $\Xi$ and \ourmech\ in the case of ${D}_1$ for $\gamma$=10. Tables~\ref{tab:curVsMech1D1},~\ref{tab:curVsMech2D1}, and~\ref{tab:curVsMech3D1} show the equilibrium delay and corresponding utilities of proposer (within a slot) in $\Xi$ and \ourmech\ in the case of ${D}_1$. Tables~\ref{tab:curVsMech1D2},~\ref{tab:curVsMech2D2}, and~\ref{tab:curVsMech3D2} show the equilibrium delay and corresponding utilities of proposer (within a slot) in $\Xi$ and \ourmech\ in the case of ${D}_2$. For each scenario, the probability density function of the second proposer is fixed, and the utility of the first proposer is analyzed when it has slower, equal, and faster network connectivity.

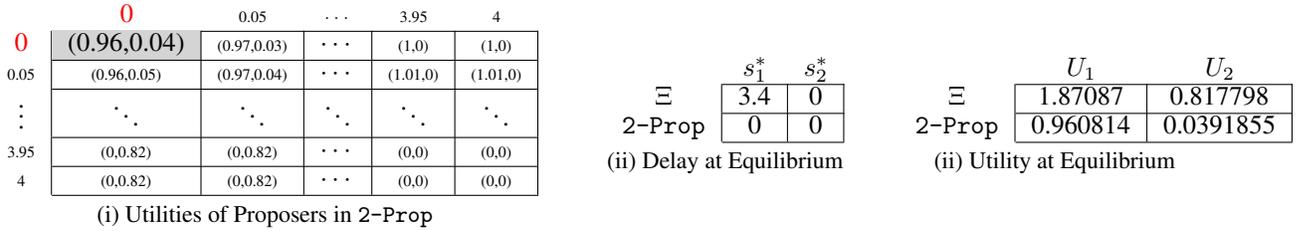
\begin{figure}[ht]
    \centering
    \input{Figures/Utilities/f1_mean=0.3/utility_f1_mean=0.3}
    \caption{Utility Matrix of Proposers in \ourmech\ with $\mu_0=0.3, \gamma=10$}
    \label{fig:utility_0.3}
\end{figure}

\begin{figure}[ht]
    \centering
    \input{Figures/Utilities/f1_mean=0.6/utility_f1_mean=0.6}
    \caption{Utility Matrix of Proposers in \ourmech\ with  $\mu_0=0.6, \gamma=10$}
    \label{fig:utility_0.6}    
\end{figure}

\begin{figure}[ht]
    \centering
    \input{Figures/Utilities/f1_mean=1/utility_f1_mean=1}
    \caption{Utility Matrix of Proposers in \ourmech\ with  $\mu_0=1, \gamma=10$}
    \label{fig:utility_1}
\end{figure}

\begin{table}[H]
    \centering
    \caption{\curmech\ vs \ourmech\ for $c=\frac{1}{\tau_1}=0.25, \mu_0=0.3 \text{ in } {D}_1$}
\input{Tables/D0/utility_lists_f1_mean=0.3,n=12,c=0.26,d=0.05}

    \label{tab:curVsMech1D1}
\end{table}

\begin{table}[H]
    \centering
    \caption{\curmech\ vs \ourmech\ for $c=\frac{1}{\tau_1}=0.25, \mu_0=0.6 \text{ in } {D}_1$}
    \input{Tables/D0/utility_lists_f1_mean=0.6,n=12,c=0.26,d=0.05}

    \label{tab:curVsMech2D1}
\end{table}

\begin{table}[H]
     \centering
    \caption{\curmech\ vs \ourmech\ for $c=\frac{1}{\tau_1}=0.25, \mu=1 \text{ in } {D}_1$}
     
    \input{Tables/D0/utility_lists_f1_mean=1,n=12,c=0.26,d=0.05}

    \label{tab:curVsMech3D1}
\end{table}

\begin{table}[H]
    \centering
    \caption{\curmech\ vs \ourmech\ for $c=\frac{1}{\tau_1}=0.25, \mu_0=3.81 \text{ in } {D}_2$}
    \input{Tables/D1/utility_lists_f1_mean=3.81,n=12,c=0.26,d=0.05}

    \label{tab:curVsMech1D2}
\end{table}

\begin{table}[H]
    \centering
    \caption{\curmech\ vs \ourmech\ for $c=\frac{1}{\tau_1}=0.25, \mu_0=4.05 \text{ in } {D}_2$}
    
    \input{Tables/D1/utility_lists_f1_mean=4.05,n=12,c=0.26,d=0.05}

    \label{tab:curVsMech2D2}
\end{table}

\begin{table}[H]
    \centering
    \caption{\curmech\ vs \ourmech\ for $c=\frac{1}{\tau_1}=0.25, \mu_0=4.29 \text{ in } {D}_2$}
    
    \input{Tables/D1/utility_lists_f1_mean=4.29,n=12,c=0.26,d=0.05}

    \label{tab:curVsMech3D2}
\end{table}

We observe that the proposers in $\Xi$ have a much higher utility by playing timing games compared to \ourmech. This reduction is due to the reward-sharing policy based on faster block reception. In addition, the sum of utilities in the case \ourmech\ in most cases is equal to one, indicating that no rewards are captured from the next slot.

\newpage
\section{Further Discussion}
\label{asec:further_discussion}

\subsection{Timing Games and Block Propagation}
\label{assec:block_propg}
Both builders and proposers have a mutual interest in colluding to maximize \mev\ (Maximal Extractable Value), as delaying block proposals gives them more time to analyze the transaction mempool, thereby increasing \mev\ capture. Such collusion can occur without requiring trust between the parties. Builders can continuously update their bids over time, and the proposer can select the best bid when they decide to confirm and propose their block. 
This system removes the advantage of being a large, reputable staker (since builders don't need to trust specific proposers) and levels the playing field for smaller or solo stakers. However, this also increases the likelihood that such strategies will be widely adopted, potentially undermining network fairness and throughput.
The proposer might earn the \mev\ that the next proposer could have earned by waiting longer. 
If a proposer waits too long, the block may not propagate quickly enough to the attestors, and it might not achieve a majority of votes. This leads to consensus degradation, more missed blocks, and incorrect attestations. 

The block propagation time analysis by Kiraly and Leonardo~\cite{codexBlockDiffusion} shows the probability distribution of block dissemination across different regions. This probability distribution is similar to the standard unimodal distribution assumed in other literature~\cite{misic2019block,kiffer2021under,cheng2022fault, hwerbi2024delay}. Figure \ref{fig:block_propg} shows this probability and the cumulative distribution observed for Sydney, Amsterdam, and San Francisco regions. 

Furthermore, the study by Kiffer et al.~\cite{kiffer2021under} on the gossip protocol discusses the probability of the time between subsequent receptions of a block from any peer after the first reception at their client node. This indeed can be interpreted as the probability distribution of the times at which the block information reaches the client.  Since the client learns about the block only when it receives these announcements, the distribution of these announcement times reflects the distribution of times the block reaches the client. Their analysis also shows a similar unimodal behavior.

\begin{figure}[ht]
    \centering
    \begin{subfigure}{0.6\textwidth}
        \includegraphics[width=\linewidth]{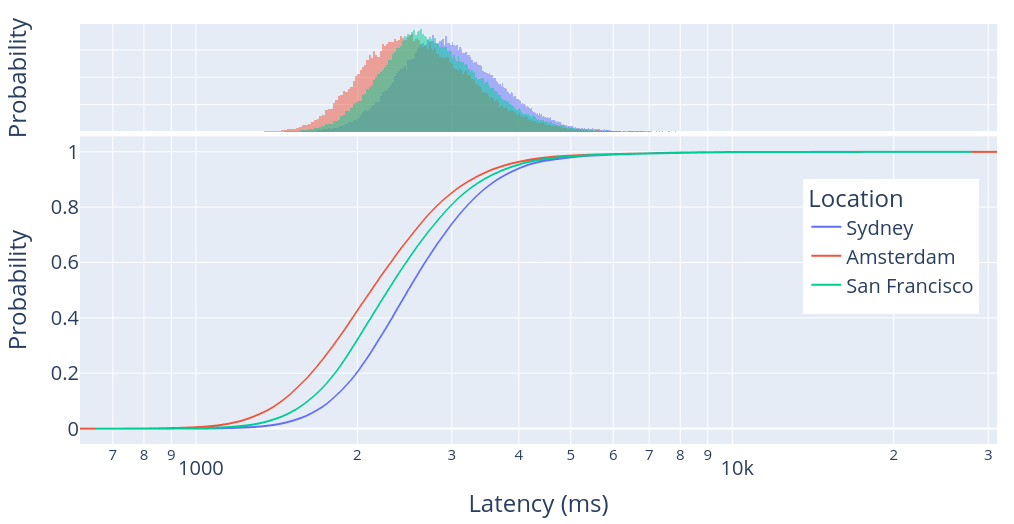}
        \caption{Block Propagation Distribution across different regions~\cite{codexBlockDiffusion}}
        \label{fig:block_propg}        
    \end{subfigure}
    \vfill
    \begin{subfigure}{0.6\textwidth}
        \includegraphics[width=\linewidth]{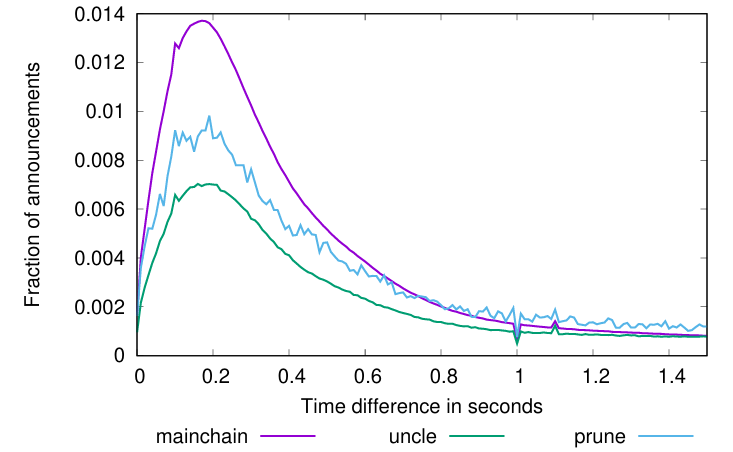}
        \caption{Distribution of Subsequent Receptions of Same Block~\cite{kiffer2021under}}
        \label{fig:block_propg_2}    
    \end{subfigure}
    \caption{Block Propagation Timing and Distribution in Ethereum}
\end{figure}

Figure~\ref{fig:delay_block_proposal} shows the percentage of blocks first received between 2024-07-13 and 2024-08-13, which are indicative of the block proposal delay. Observe that the Kiln validator proposes blocks with very high delay. Naturally, the Kiln validator turns out to be the proposer with the most missed slots (Figure~\ref{fig:missed_slots}).

\begin{figure}[ht]
    \centering
    \begin{subfigure}{0.7\textwidth}
       \includegraphics[width=\linewidth]{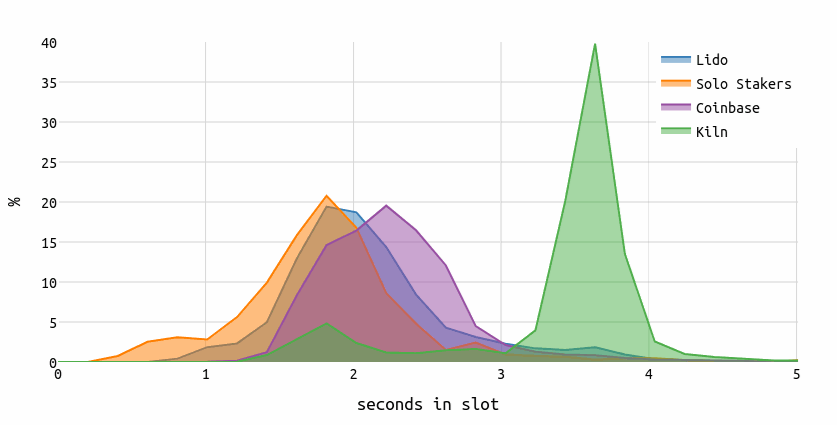}
        \caption{Percentage of Block First Seen ~\cite{ethresearAttestationsBlock}}
        \label{fig:delay_block_proposal}
    \end{subfigure}
    \vfill
    \begin{subfigure}{0.7\textwidth}
        \includegraphics[width=\linewidth]{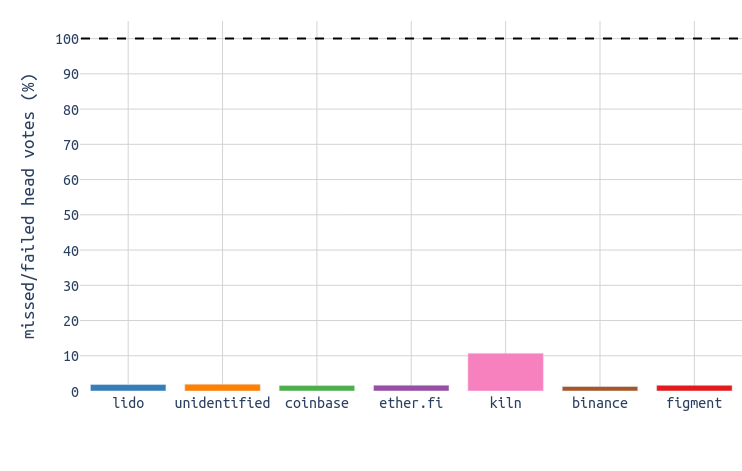}
        \caption{Percentage of Missed Slots over Proposers~\cite{ethresearAttestationsBlock}}
        \label{fig:missed_slots}
    \end{subfigure} 
    \caption{Shows the Correlation between block reception timings and Missed Slots}
\end{figure}

However, it has been observed that sophisticated fast proposers have optimized their timing strategies to minimize losses. Figure~\ref{fig:timingPerValidator} shows the fraction of slots timing games played by different proposers and the fraction of those slots they missed.

\begin{figure}[ht]
    \centering
    \includegraphics[scale=0.5]{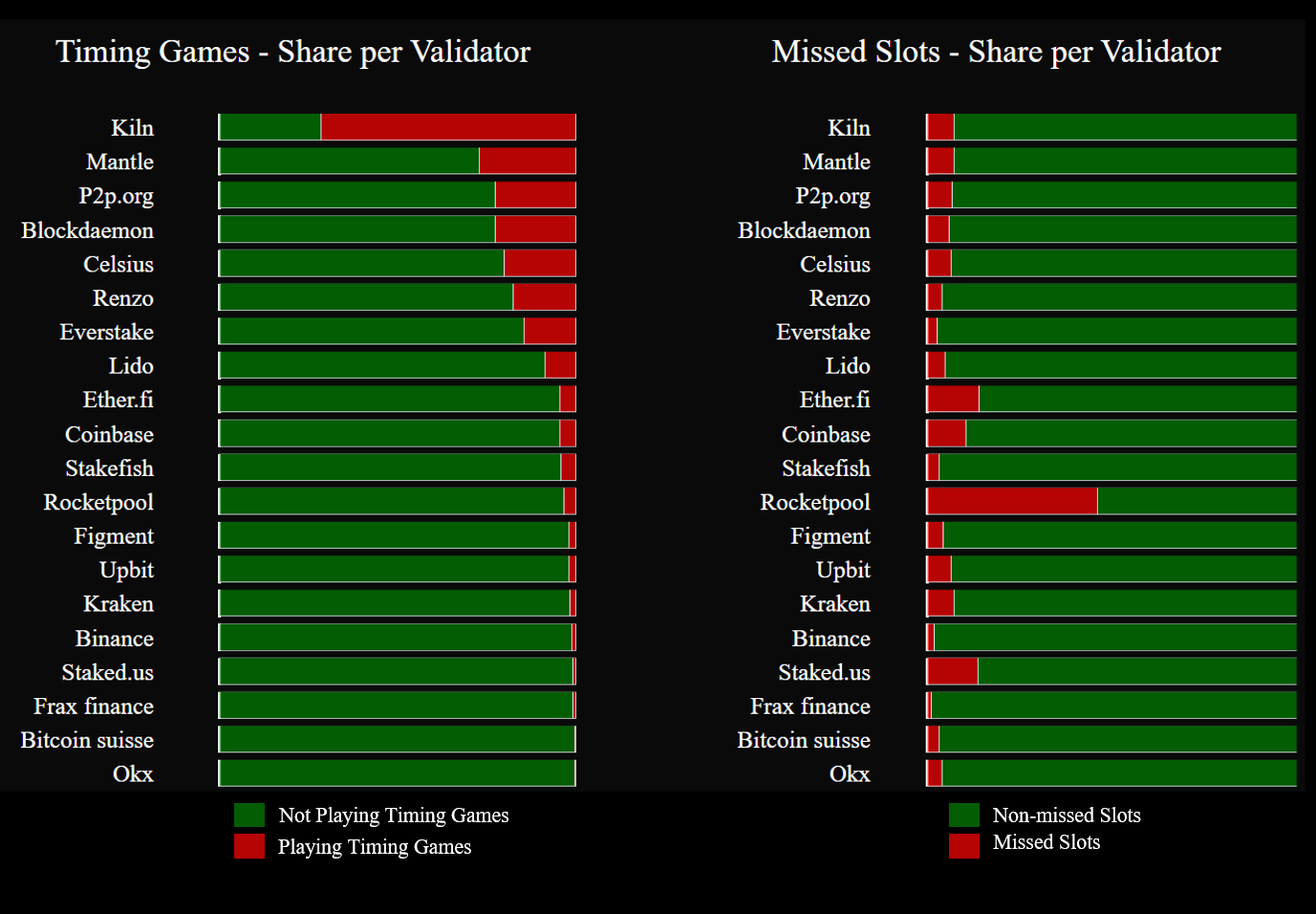}
    \caption{Timing Games Share by Different Proposers~\cite{timingTimingpics}}
    \label{fig:timingPerValidator}
\end{figure}

\newpage
\subsection{Block Valuation}
\label{assec:block_valuation}
\begin{figure}[ht]
    \centering
    \includegraphics[scale=0.6]{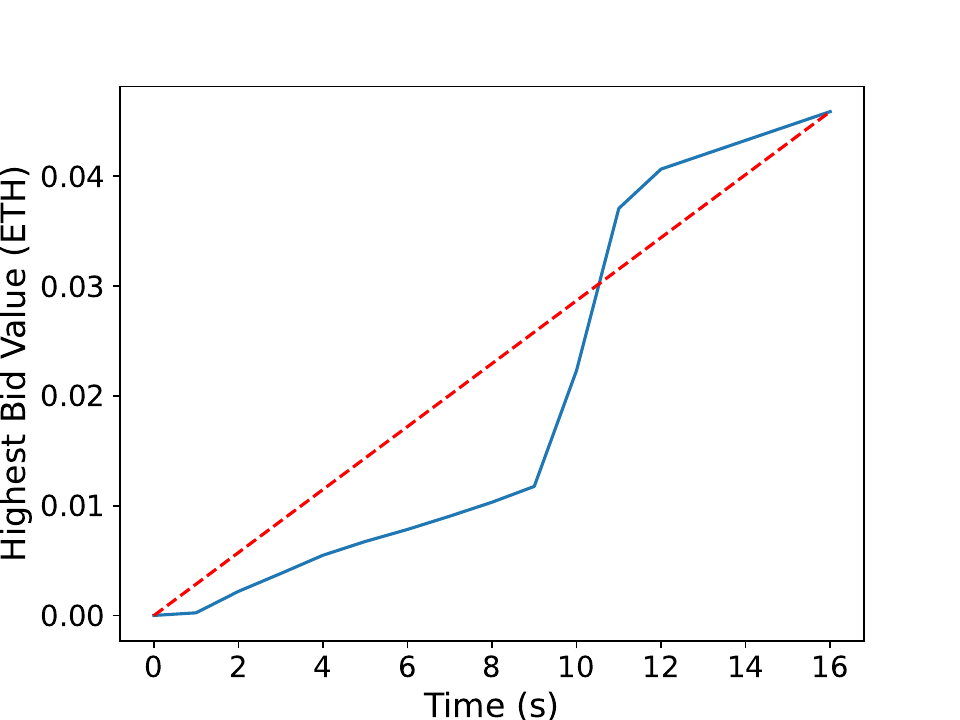}
    \caption{Average Block Valuation on Ethereum within Slot}    
\end{figure}
    
Figure \ref{fig:block_valuation_single_block} shows the variation in block valuation starting from $t\in[\tau_2,\tau]$ in the previous slot to $t\in[0,\tau_1]$ in the current slot for block 21158774 in Ethereum.

\begin{figure}[ht]
    \centering
    \includegraphics[width=\textwidth]{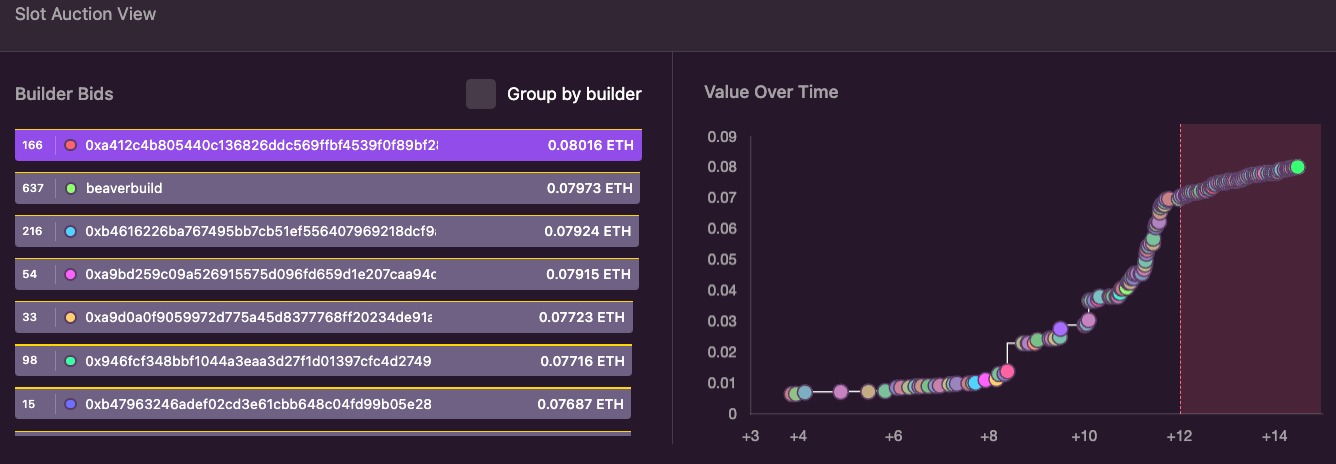}
    \caption{Block Valuation on Ethereum for block 21158774 \cite{sorellaLabs}}
    \label{fig:block_valuation_single_block} 
\end{figure}

Since an honest proposer proposes its block at $t=0$, transacting parties tend to aggressively bid to have their transaction(s) included towards the end of the previous slot. This can be observed in Ethereum Priority Gas Auctions \cite{daian2019flash} and also in current MEV-Boost auctions \cite{payloadEthereumData}. The bids from builders reduce as they shift to build the next block. Consequently, the last few bids could reach the proposer slightly delayed after the slot begins. The block value, therefore, does not increase very high relative to the block value at $t=\tau_1$ in the previous slot.

\subsection{Proposer Utility}
\label{assec:utility}

\begin{align*}
U_{i}(\delta_i,\delta_{i^+})&=\sum_{x,y=\thd}^{n}\sum_{w=x+y-n}^{min(x,y)} \binom{n}{w}\binom{n-w}{x-w}\binom{n-x}{y-w}(q_{i}\hat{q}_{i^+})^{x-w} (q_{i^+}\hat{q}_{i})^{y-w} (\hat{q}_{i}\hat{q}_{i^+})^{n-(x+y-w)}\\
&\qquad\times\sum_{z=0}^{w}\binom{w}{z}p_{i}^{z}p_{i^+}^{w-z} \frac{x-w+z}{x+y-w}\left(1+\frac{c(\delta_i +\delta_{i^+})}{2}\right)\\
&\quad+ \sum_{x=\thd}^{n}\binom{n}{x} q_{i}^{x}\hat{q}_{i}^{n-x} \left( 1- \sum_{y=\thd}^{n}\binom{n}{y} q_{i^+}^{y}\hat{q}_{i^+}^{n-y} \right) (1+c\delta_i)
\end{align*}

The expected utility, $U_i$ given above, is Lipschitz continuous, which helps the analysis of \ourmech\ in the discretized strategy space. This is because $U_i$ does not change faster than linearly and is unimodal. The unimodality stems from the fact that the maximum expected utility of $P_i$ can be either at zero delay or at some positive delay $<\tau_1$, depending on $f_{P_0}$ and $f_{P_1}$.
\newpage
\section{Code}
\label{assec:code}

\begin{listing}[ht]
\centering
\begin{lstlisting}[language=Matlab]
n=128;
T=4;
delta_res=0.05;
deltas = 0:delta_res:T;
numDeltas= length(deltas);

%%%% For Heterogeneous D1
f1_params = {[1.5,5],[1.5,2.5],[2,2]};
    f2_mean_ratios_gamma = {[0.33,0.5,1,2,5,10,10.25,10.5,10.75,11,11.25,11.5,11.66,12.67,13,13.33,13.67,14];
    [0.33,0.5,1,2,5,5.2,5.4,5.6,5.7,5.83,6,6.2,6.33,6.5,6.67,6.84,7,10];
    [0.33,0.5,1,2,2.2,2.4,2.6,2.8,3,3.3,3.5,3.6,3.7,3.8,3.9,4,4.1,4.2,5,10]};

%%%% For Heterogeneous D2
f1_params = {[1.5,0.394],[1.5,0.37],[1.5,0.35]};
f2_mean_ratios_gamma = {[0.05,0.07,0.11,0.2,0.4,0.5,0.6,0.8,1,1.2,1.4];

[0.05,0.07,0.11,0.2,0.4,0.5,0.6,0.8,1,1.2,1.4];

[0.05,0.07,0.11,0.2,0.4,0.5,0.6,0.8,1,1.2,1.4]};

f1_means = cellfun(@(x) round(x(1)/x(2),2), f1_params);

clist = [0.01,0.05,0.1,0.127,0.15,0.2,0.25];

\end{lstlisting}
\caption{Parameter initialization code}
\label{lst:parameters}
\end{listing}

\begin{listing}[ht]
\centering
\begin{lstlisting}[language=Matlab]
function compute_P(mean, s1, r1, s2, r2, T, n, deltas, delta_res,outputDir)
    %%%%%% Gamma PDFs
    f1 = @(x) gampdf(x, s1, 1/r1);
    f2 = @(x) gampdf(x, s2, 1/r2);
    ep_store = zeros(length(deltas),2);
    eq_store = zeros(length(deltas),2);
    
    q2 = integral(f2, 0, T);
    for dind = 1:1:length(deltas)
        d = deltas(dind);
        fprintf('d-value %g\n',d)
        
    %%%%%%%%% Computing p1, p2, q1, q2
        p1 = integral(@(x) f1(x) .* integral(@(y) f2(y), x+d, T, 'ArrayValued', true), 0, T-d, 'ArrayValued', true);        
        p2 = integral(@(y) f2(y) .* integral(@(x) f1(x), max(0,y-d), T-d, 'ArrayValued', true), 0, T, 'ArrayValued', true);
        q1 = integral(f1, 0, T-d);
        ep_store(dind,1) = p1;
        ep_store(dind,2) = p2;

        eq_store(dind,1) = q1;
        eq_store(dind,2) = q2; 
    end
    
    save_probabilities(ep_store,eq_store,mean,n,delta_res,outputDir);
end
\end{lstlisting}
\caption{Probability Computation code}
\label{lst:computeP}
\end{listing}

\begin{listing}[ht]
\centering
\begin{lstlisting}[language=Matlab]
function generate_proposer_instances(f2_mean_ratios_gamma, f1_params, f1_means, type)
    if strcmp(type,'heterogeneous')
        for i=1:length(f1_params)
           
            f2_mean_ratios = f2_mean_ratios_gamma{i};
            % Initialize gamma_params
            gamma_params = [];
            
            %f1_params
            s1 = f1_params{i}(1);
            r1 = f1_params{i}(2);
            mean = f1_means(i);

            fprintf('f1 mean %g\n',mean);
           
            %%% Computing Parameters
            s2 = s1;
            for r=1:length(f2_mean_ratios)
                r2 = s2/(f2_mean_ratios(r)*mean);
                gamma_params{r}=[[s1,r1];[s2,r2]];
                mean2= round(s2/r2,2);
                fprintf('mu1 %g mu2 %g ratio=%g\n',mean, mean2, round(mean2/mean,2));
            end

            save_command = sprintf('heterogeneous gamma_params f1 mean=%g.mat',mean);
            save(save_command,'gamma_params')
        end
    end
    
    if strcmp(type,'homogeneous')
        for i=1:length(f1_params)
           
            % Initialize gamma_params
            gamma_params = [];
            
            %f1_params
            s1 = f1_params{i}(1);
            r1 = f1_params{i}(2);
            mean = f1_means(i);

            fprintf('f1 mean %g\n',mean);
           
            %%% Computing Parameters
            s2 = s1;
            r2 = r1;
            
            gamma_params{r}=[[s1,r1];[s2,r2]];
            
            save_command = sprintf('homogeneous gamma_params f1 mean=%g.mat',mean);
            save(save_command,'gamma_params')
        end    
    end
    disp('Generated gamma_params');

end
\end{lstlisting}
\caption{Instance Generation code}
\label{lst:createGamma}
\end{listing}

\begin{listing}[ht]
\centering
\begin{lstlisting}[language=Matlab]
function compute_U(clist, n, mean, deltas,delta_res,outputDir)

    probmatrix= sprintf('proposer_prob_matrix f1 mean=%g, n=%d, d=%g.mat',mean, n,delta_res);
    pDirname= sprintf('%s/Probability/f1_mean=%g',outputDir,mean);
    
    pFileName = sprintf('%s/%s',pDirname,probmatrix);
    
    store = load(pFileName);
    ep_store = store.ep_store;
    eq_store = store.eq_store;
    U = zeros(length(clist),length(deltas));
    for cind = 1:length(clist) 
        c = clist(cind);
        fprintf('===> c-value %d\n', c)
        for dind = 1:length(deltas)
            d = deltas(dind);
            fprintf('d-value %g\n',d);
            p1 = ep_store(dind,1);
            p2 = ep_store(dind,2);
            q1 = eq_store(dind,1);
            q2 = eq_store(dind,2);
            
            gsum = 0;
            for k1 = floor(2*n/3):n
                g1 = 0;
                for k2 = floor(2*n/3):n
                    g2 = 0;
                    for k3 = max(0, k1 + k2 - n):min(k1, k2)
                        pv = (nchoosek(n,k3) * nchoosek(n-k3,k1-k3) * nchoosek(n-k1,k2-k3)) * ...
                             ((q1*(1-q2))^(k1-k3)) * ((q2*(1-q1))^(k2-k3)) * ...
                             ((1-q1)*(1-q2))^(n-(k1+k2-k3));
                        
                        sum1 = 0;
                        for k4 = 0:k3
                            term = nchoosek(k3, k4) * (p1^k4) * (p2^(k3-k4)) * ...
                                   ((k1-k3+k4)/(k1+k2-k3)) * (1 + (c*d)/2);
                            sum1 = sum1 + term;
                        end
                        g2 = g2 + (pv * sum1);
                    end
                    g1 = g1 + g2;
                end
                gsum = gsum + g1;
            end
            U(cind,dind) = gsum;
        end
    end
    save_utilties(U, mean,n,delta_res,outputDir);
end
\end{lstlisting}
\caption{Utility Computation code}
\label{lst:computeU}
\end{listing}

\end{document}

%% file: Figures/Utilities/f1_mean=0.3/utility_f1_mean=0.3.tex
\begin{subfigure}[h]{0.45\textwidth}
\centering
\begin{tabular}{c|c|c|c|c|c|}
\multicolumn{1}{c}{} & \multicolumn{1}{c}{\textcolor{red}{0}} & \multicolumn{1}{c}{\tiny 0.05}& \multicolumn{1}{c}{\tiny $\cdots$} & \multicolumn{1}{c}{\tiny 3.95} & \multicolumn{1}{c}{\tiny 4}\\
\cline{2-6}
{\textcolor{red}{0}}& \cellcolor{lightgray}{(0.96,0.04)} & {\tiny (0.97,0.03)} & $\cdots$& {\tiny (1,0)} & {\tiny (1,0)} \\
\cline{2-6}
{\tiny 0.05}& {\tiny (0.96,0.05)} & {\tiny (0.97,0.04)} & $\cdots$& {\tiny (1.01,0)} & {\tiny (1.01,0)} \\
\cline{2-6}
$\vdots$ & $\ddots$ & $\ddots$ & $\ddots$ & $\ddots$ & $\ddots$\\
\cline{2-6}
{\tiny 3.95}& {\tiny (0,0.82)} & {\tiny (0,0.82)} & $\cdots$& {\tiny (0,0)} & {\tiny (0,0)} \\
\cline{2-6}
{\tiny 4}& {\tiny (0,0.82)} & {\tiny (0,0.82)} & $\cdots$& {\tiny (0,0)} & {\tiny (0,0)} \\
\cline{2-6}
\end{tabular}

\smallskip

\text{\footnotesize (i) Utilities of Proposers in \ourmech}
\end{subfigure}
\hspace{0.01\textwidth}
\begin{subfigure}[h]{0.25\textwidth}
\centering
\begin{tabular}{c|c|c|}
\multicolumn{1}{c}{} & \multicolumn{1}{c}{$s_1^*$} & \multicolumn{1}{c}{$s_2^*$}\\
\cline{2-3}
\curmech & 3.4 & 0\\
\cline{2-3}
\ourmech & 0 & 0\\
\cline{2-3}
\end{tabular}

\smallskip
\text{\footnotesize (ii) Delay at Equilibrium}
\end{subfigure}\hspace{0.01\textwidth}
\begin{subfigure}[h]{0.25\textwidth}
\centering
\begin{tabular}{c|c|c|}
\multicolumn{1}{c}{} & \multicolumn{1}{c}{$U_1$} & \multicolumn{1}{c}{$U_2$}\\
\cline{2-3}
\curmech & 1.87087 & 0.817798\\
\cline{2-3}
\ourmech & 0.960814 & 0.0391855\\
\cline{2-3}
\end{tabular}

\smallskip

\text{\footnotesize (ii) Utility at Equilibrium}
\end{subfigure}

%% file: Figures/Utilities/f1_mean=0.6/utility_f1_mean=0.6.tex
\begin{subtable}[h]{0.45\textwidth}
\centering
\begin{tabular}{c|c|c|c|c|c|}
\multicolumn{1}{c}{} & \multicolumn{1}{c}{\textcolor{red}{0}} & \multicolumn{1}{c}{\tiny 0.05}& \multicolumn{1}{c}{\tiny $\cdots$} & \multicolumn{1}{c}{\tiny 3.95} & \multicolumn{1}{c}{\tiny 4}\\
\cline{2-6}
{\tiny 0}& {\tiny (0.99,0.01)} & {\tiny (1,0.01)} & $\cdots$& {\tiny (1,0)} & {\tiny (1,0)} \\
\cline{2-6}
{\tiny 0.05}& {\tiny (1.01,0.01)} & {\tiny (1.01,0.01)} & $\cdots$& {\tiny (1.01,0)} & {\tiny (1.01,0)} \\
\cline{2-6}
$\vdots$ & $\ddots$ & $\ddots$ & $\ddots$ & $\ddots$ & $\ddots$\\
\cline{2-6}
{\tiny 2.85}& {\tiny (1.62,0.07)} & {\tiny (1.63,0.07)} & $\cdots$& {\tiny (1.72,0)} & {\tiny (1.72,0)} \\
\cline{2-6}
{\textcolor{red}{2.9}}& \cellcolor{lightgray}{(1.62,0.07)} & {\tiny (1.63,0.07)} & $\cdots$& {\tiny (1.72,0)} & {\tiny (1.72,0)} \\
\cline{2-6}
{\tiny 2.95}& {\tiny (1.62,0.07)} & {\tiny (1.62,0.07)} & $\cdots$& {\tiny (1.72,0)} & {\tiny (1.72,0)} \\
\cline{2-6}
$\vdots$ & $\ddots$ & $\ddots$ & $\ddots$ & $\ddots$ & $\ddots$\\
\cline{2-6}
{\tiny 3.95}& {\tiny (0,0.08)} & {\tiny (0,0.08)} & $\cdots$& {\tiny (0,0)} & {\tiny (0,0)} \\
\cline{2-6}
{\tiny 4}& {\tiny (0,0.08)} & {\tiny (0,0.08)} & $\cdots$& {\tiny (0,0)} & {\tiny (0,0)} \\
\cline{2-6}
\end{tabular}

\smallskip

\text{\footnotesize (i) Utilities of Proposers in \ourmech}
\end{subtable}
\hspace{0.01\textwidth}
\begin{subtable}[h]{0.25\textwidth}
\centering
\begin{tabular}{c|c|c|}
\multicolumn{1}{c}{} & \multicolumn{1}{c}{$s_1^*$} & \multicolumn{1}{c}{$s_2^*$}\\
\cline{2-3}
\curmech & 2.9 & 0\\
\cline{2-3}
\ourmech & 2.9 & 0\\
\cline{2-3}
\end{tabular}

\smallskip

\text{\footnotesize (ii) Delay at Equilibrium}
\end{subtable}\hspace{0.01\textwidth}
\begin{subtable}[h]{0.25\textwidth}
\centering
\begin{tabular}{c|c|c|}
\multicolumn{1}{c}{} & \multicolumn{1}{c}{$U_1$} & \multicolumn{1}{c}{$U_2$}\\
\cline{2-3}
\curmech & 1.72355 & 0.0841559\\
\cline{2-3}
\ourmech & 0 & 0\\
\cline{2-3}
\end{tabular}

\smallskip

\text{\footnotesize (ii) Utility at Equilibrium}
\end{subtable}

%% file: Figures/Utilities/f1_mean=1/utility_f1_mean=1.tex
\begin{subtable}[h]{0.45\textwidth}
\centering
\begin{tabular}{c|c|c|c|c|c|}
\multicolumn{1}{c}{} & \multicolumn{1}{c}{\textcolor{red}{0}} & \multicolumn{1}{c}{\tiny 0.05}& \multicolumn{1}{c}{\tiny $\cdots$} & \multicolumn{1}{c}{\tiny 3.95} & \multicolumn{1}{c}{\tiny 4}\\
\cline{2-6}
{\tiny 0}& {\tiny (1,0)} & {\tiny (1,0)} & $\cdots$& {\tiny (1,0)} & {\tiny (1,0)} \\
\cline{2-6}
{\tiny 0.05}& {\tiny (1.01,0)} & {\tiny (1.01,0)} & $\cdots$& {\tiny (1.01,0)} & {\tiny (1.01,0)} \\
\cline{2-6}
$\vdots$ & $\ddots$ & $\ddots$ & $\ddots$ & $\ddots$ & $\ddots$\\
\cline{2-6}
{\tiny 2.25}& {\tiny (1.56,0)} & {\tiny (1.56,0)} & $\cdots$& {\tiny (1.56,0)} & {\tiny (1.56,0)} \\
\cline{2-6}
{\textcolor{red}{2.3}}& \cellcolor{lightgray}{(1.56,0)} & {\tiny (1.56,0)} & $\cdots$& {\tiny (1.56,0)} & {\tiny (1.56,0)} \\
\cline{2-6}
{\tiny 2.35}& {\tiny (1.56,0)} & {\tiny (1.56,0)} & $\cdots$& {\tiny (1.56,0)} & {\tiny (1.56,0)} \\
\cline{2-6}
$\vdots$ & $\ddots$ & $\ddots$ & $\ddots$ & $\ddots$ & $\ddots$\\
\cline{2-6}
{\tiny 3.95}& {\tiny (0,0)} & {\tiny (0,0)} & $\cdots$& {\tiny (0,0)} & {\tiny (0,0)} \\
\cline{2-6}
{\tiny 4}& {\tiny (0,0)} & {\tiny (0,0)} & $\cdots$& {\tiny (0,0)} & {\tiny (0,0)} \\
\cline{2-6}
\end{tabular}

\smallskip

\text{\footnotesize (i) Utilities of Proposers in \ourmech}
\end{subtable}
\hspace{0.01\textwidth}
\begin{subtable}[h]{0.25\textwidth}
\centering
\begin{tabular}{c|c|c|}
\multicolumn{1}{c}{} & \multicolumn{1}{c}{$s_1^*$} & \multicolumn{1}{c}{$s_2^*$}\\
\cline{2-3}
\curmech & 2.3 & 0\\
\cline{2-3}
\ourmech & 2.3 & 0\\
\cline{2-3}
\end{tabular}

\smallskip

\text{\footnotesize (ii) Delay at Equilibrium}
\end{subtable}\hspace{0.01\textwidth}
\begin{subtable}[h]{0.25\textwidth}
\centering
\begin{tabular}{c|c|c|}
\multicolumn{1}{c}{} & \multicolumn{1}{c}{$U_1$} & \multicolumn{1}{c}{$U_2$}\\
\cline{2-3}
\curmech & 1.56292 & 0.000421149\\
\cline{2-3}
\ourmech & 0 & 0\\
\cline{2-3}
\end{tabular}

\smallskip

\text{\footnotesize (ii) Utility at Equilibrium}
\end{subtable}

%% file: Tables/D0/utility_lists_f1_mean=0.3,n=12,c=0.26,d=0.05.tex

\begin{tabular}{lcccccccc}
\cline{1-9}
\multirow{2}{*}{$\gamma =\frac{\mu_2}{\mu_1}$} & \multicolumn{4}{c}{Equilibrium Delay} & \multicolumn{4}{c}{Equilibrium Utility}\\
\cline{2-9}
  & $\delta_0^\star$ & $\delta_0^{NE}$ & $\delta_1^\star$ & $\delta_1^{NE}$ & $u_0$ in $\Xi$ & $u_0$ in $\ourmech$ & $u_1$ in $\Xi$ & $u_1$ in $\ourmech$\\
\cline{1-9}
$\gamma=0.33$ & 3.4 & 0 & 3.75 & 0 & 1.87 & 0.2 & 1.97 & 0.8\\
\cline{1-9}
$\gamma=0.5$ & 3.4 & 0 & 3.7 & 0 & 1.87 & 0.29 & 1.95 & 0.71\\
\cline{1-9}
$\gamma=1$ & 3.4 & 0 & 3.4 & 0 & 1.87 & 0.5 & 1.87 & 0.5\\
\cline{1-9}
$\gamma=2$ & 3.4 & 0 & 2.9 & 0 & 1.87 & 0.71 & 1.72 & 0.29\\
\cline{1-9}
$\gamma=5$ & 3.4 & 0 & 1.6 & 0 & 1.87 & 0.89 & 1.34 & 0.11\\
\cline{1-9}
$\gamma=10$ & 3.4 & 0 & 0 & 0 & 1.87 & 0.96 & 0.82 & 0.04\\
\cline{1-9}
$\gamma=10.25$ & 3.4 & 0.05 & 0 & 0 & 1.87 & 0.96 & 0.79 & 0.04\\
\cline{1-9}
$\gamma=10.5$ & 3.4 & 0.1 & 0 & 0 & 1.87 & 0.97 & 0.77 & 0.05\\
\cline{1-9}
$\gamma=10.75$ & 3.4 & 0.2 & 0 & 0 & 1.87 & 0.97 & 0.74 & 0.06\\
\cline{1-9}
$\gamma=11$ & 3.4 & 0.3 & 0 & 0 & 1.87 & 0.97 & 0.71 & 0.08\\
\cline{1-9}
$\gamma=11.25$ & 3.4 & 0.4 & 0 & 0 & 1.87 & 0.98 & 0.69 & 0.09\\
\cline{1-9}
$\gamma=11.5$ & 3.4 & 0.6 & 0 & 0 & 1.87 & 0.99 & 0.66 & 0.12\\
\cline{1-9}
$\gamma=11.66$ & 3.4 & 3.2 & 0 & 0 & 1.87 & 1.11 & 0.53 & 0.51\\
\cline{1-9}
$\gamma=12.67$ & 3.4 & 3.25 & 0 & 0 & 1.87 & 1.16 & 0.5 & 0.48\\
\cline{1-9}
$\gamma=13$ & 3.4 & 3.3 & 0 & 0 & 1.87 & 1.21 & 0.46 & 0.45\\
\cline{1-9}
$\gamma=13.33$ & 3.4 & 3.3 & 0 & 0 & 1.87 & 1.26 & 0.43 & 0.42\\
\cline{1-9}
$\gamma=13.67$ & 3.4 & 3.35 & 0 & 0 & 1.87 & 1.3 & 0.4 & 0.39\\
\cline{1-9}
$\gamma=14$ & 3.4 & 3.35 & 0 & 0 & 1.87 & 1.3 & 0.4 & 0.39\\
\cline{1-9}
\end{tabular}

%% file: Tables/D0/utility_lists_f1_mean=0.6,n=12,c=0.26,d=0.05.tex

\begin{tabular}{lcccccccc}
\cline{1-9}
\multirow{2}{*}{$\gamma =\frac{\mu_2}{\mu_1}$} & \multicolumn{4}{c}{Equilibrium Delay} & \multicolumn{4}{c}{Equilibrium Utility}\\
\cline{2-9}
  & $\delta_0^\star$ & $\delta_0^{NE}$ & $\delta_1^\star$ & $\delta_1^{NE}$ & $u_0$ in  $\Xi$ & $u_0$ in $\ourmech$ & $u_1$ in  $\Xi$ & $u_1$ in $\ourmech$\\
\cline{1-9}
$\gamma=0.33$ & 2.9 & 0 & 3.6 & 0 & 1.72 & 0.2 & 1.92 & 0.8\\
\cline{1-9}
$\gamma=0.5$ & 2.9 & 0 & 3.4 & 0 & 1.72 & 0.29 & 1.87 & 0.71\\
\cline{1-9}
$\gamma=1$ & 2.9 & 0 & 2.9 & 0 & 1.72 & 0.5 & 1.72 & 0.5\\
\cline{1-9}
$\gamma=2$ & 2.9 & 0 & 2 & 0 & 1.72 & 0.71 & 1.46 & 0.29\\
\cline{1-9}
$\gamma=5$ & 2.9 & 0 & 0 & 0 & 1.72 & 0.91 & 0.82 & 0.09\\
\cline{1-9}
$\gamma=5.2$ & 2.9 & 0 & 0 & 0 & 1.72 & 0.91 & 0.78 & 0.09\\
\cline{1-9}
$\gamma=5.4$ & 2.9 & 0.05 & 0 & 0 & 1.72 & 0.92 & 0.74 & 0.09\\
\cline{1-9}
$\gamma=5.6$ & 2.9 & 0.25 & 0 & 0 & 1.72 & 0.93 & 0.69 & 0.11\\
\cline{1-9}
$\gamma=5.7$ & 2.9 & 0.4 & 0 & 0 & 1.72 & 0.94 & 0.67 & 0.13\\
\cline{1-9}
$\gamma=5.83$ & 2.9 & 0.65 & 0 & 0 & 1.72 & 0.95 & 0.64 & 0.17\\
\cline{1-9}
$\gamma=6$ & 2.9 & 1.35 & 0 & 0 & 1.72 & 0.97 & 0.6 & 0.28\\
\cline{1-9}
$\gamma=6.2$ & 2.9 & 2.5 & 0 & 0 & 1.72 & 1.02 & 0.56 & 0.45\\
\cline{1-9}
$\gamma=6.33$ & 2.9 & 2.65 & 0 & 0 & 1.72 & 1.05 & 0.53 & 0.45\\
\cline{1-9}
$\gamma=6.5$ & 2.9 & 2.7 & 0 & 0 & 1.72 & 1.1 & 0.5 & 0.43\\
\cline{1-9}
$\gamma=6.67$ & 2.9 & 2.75 & 0 & 0 & 1.72 & 1.14 & 0.46 & 0.4\\
\cline{1-9}
$\gamma=6.84$ & 2.9 & 2.75 & 0 & 0 & 1.72 & 1.19 & 0.43 & 0.37\\
\cline{1-9}
$\gamma=7$ & 2.9 & 2.8 & 0 & 0 & 1.72 & 1.22 & 0.4 & 0.35\\
\cline{1-9}
$\gamma=10$ & 2.9 & 2.9 & 0 & 0 & 1.72 & 1.62 & 0.08 & 0.07\\
\cline{1-9}
\end{tabular}

%% file: Tables/D0/utility_lists_f1_mean=1,n=12,c=0.26,d=0.05.tex

\begin{tabular}{lcccccccc}
\cline{1-9}
\multirow{2}{*}{$\gamma =\frac{\mu_2}{\mu_1}$} & \multicolumn{4}{c}{Equilibrium Delay} & \multicolumn{4}{c}{Equilibrium Utility}\\
\cline{2-9}
  & $\delta_0^\star$ & $\delta_0^{NE}$ & $\delta_1^\star$ & $\delta_1^{NE}$ & $u_0$ in  $\Xi$ & $u_0$ in  $\ourmech$ & $u_1$ in  $\Xi$ &  $u_1$ in $\ourmech$\\
\cline{1-9}
$\gamma=0.33$ & 2.3 & 0 & 3.35 & 0 & 1.56 & 0.16 & 1.86 & 0.84\\
\cline{1-9}
$\gamma=0.5$ & 2.3 & 0 & 3.1 & 0 & 1.56 & 0.26 & 1.79 & 0.74\\
\cline{1-9}
$\gamma=1$ & 2.3 & 0 & 2.3 & 0 & 1.56 & 0.5 & 1.56 & 0.5\\
\cline{1-9}
$\gamma=2$ & 2.3 & 0 & 1 & 0 & 1.56 & 0.74 & 1.17 & 0.26\\
\cline{1-9}
$\gamma=2.2$ & 2.3 & 0 & 0.75 & 0 & 1.56 & 0.77 & 1.1 & 0.23\\
\cline{1-9}
$\gamma=2.4$ & 2.3 & 0 & 0.5 & 0 & 1.56 & 0.8 & 1.03 & 0.2\\
\cline{1-9}
$\gamma=2.6$ & 2.3 & 0 & 0.3 & 0 & 1.56 & 0.82 & 0.96 & 0.18\\
\cline{1-9}
$\gamma=2.8$ & 2.3 & 0 & 0.1 & 0 & 1.56 & 0.84 & 0.9 & 0.16\\
\cline{1-9}
$\gamma=3$ & 2.3 & 0 & 0 & 0 & 1.56 & 0.86 & 0.83 & 0.14\\
\cline{1-9}
$\gamma=3.3$ & 2.3 & 0.2 & 0 & 0 & 1.56 & 0.9 & 0.72 & 0.14\\
\cline{1-9}
$\gamma=3.5$ & 2.3 & 0.75 & 0 & 0 & 1.56 & 0.93 & 0.63 & 0.2\\
\cline{1-9}
$\gamma=3.6$ & 2.3 & 1.3 & 0 & 0 & 1.56 & 0.96 & 0.59 & 0.28\\
\cline{1-9}
$\gamma=3.7$ & 2.3 & 1.85 & 0 & 0 & 1.56 & 1 & 0.54 & 0.35\\
\cline{1-9}
$\gamma=3.8$ & 2.3 & 2 & 0 & 0 & 1.56 & 1.04 & 0.5 & 0.35\\
\cline{1-9}
$\gamma=3.9$ & 2.3 & 2.1 & 0 & 0 & 1.56 & 1.08 & 0.46 & 0.33\\
\cline{1-9}
$\gamma=4$ & 2.3 & 2.15 & 0 & 0 & 1.56 & 1.12 & 0.42 & 0.31\\
\cline{1-9}
$\gamma=4.1$ & 2.3 & 2.15 & 0 & 0 & 1.56 & 1.16 & 0.38 & 0.28\\
\cline{1-9}
$\gamma=4.2$ & 2.3 & 2.2 & 0 & 0 & 1.56 & 1.2 & 0.35 & 0.26\\
\cline{1-9}
$\gamma=5$ & 2.3 & 2.3 & 0 & 0 & 1.56 & 1.41 & 0.15 & 0.11\\
\cline{1-9}
$\gamma=10$ & 2.3 & 2.3 & 0 & 0 & 1.56 & 1.56 & 0 & 0\\
\cline{1-9}
\end{tabular}

%% file: Tables/D1/utility_lists_f1_mean=3.81,n=12,c=0.26,d=0.05.tex

\begin{tabular}{lcccccccc}
\cline{1-9}
\multirow{2}{*}{$\gamma =\frac{\mu_2}{\mu_1}$} & \multicolumn{4}{c}{Equilibrium Delay} & \multicolumn{4}{c}{Equilibrium Utility}\\
\cline{2-9}
  & $\delta_0^\star$ & $\delta_0^{NE}$ & $\delta_1^\star$ & $\delta_1^{NE}$ & $u_0$ in  $\Xi$ & $u_0$ in  $\ourmech$ & $u_1$ in  $\Xi$ & $u_1$ in $\ourmech$\\
\cline{1-9}
$\gamma=0.05$ & 0 & 0 & 3.6 & 3.45 & 0.53 & 0.53 & 1.93 & 1.13\\
\cline{1-9}
$\gamma=0.07$ & 0 & 0 & 3.45 & 3.25 & 0.53 & 0.51 & 1.89 & 1.11\\
\cline{1-9}
$\gamma=0.11$ & 0 & 0 & 3.2 & 2.95 & 0.53 & 0.48 & 1.81 & 1.09\\
\cline{1-9}
$\gamma=0.2$ & 0 & 0 & 2.65 & 2.35 & 0.53 & 0.42 & 1.65 & 1.03\\
\cline{1-9}
$\gamma=0.4$ & 0 & 0 & 1.55 & 1.15 & 0.53 & 0.31 & 1.33 & 0.9\\
\cline{1-9}
$\gamma=0.5$ & 0 & 0 & 1.1 & 0.6 & 0.53 & 0.26 & 1.18 & 0.84\\
\cline{1-9}
$\gamma=0.6$ & 0 & 0 & 0.65 & 0.15 & 0.53 & 0.24 & 1.05 & 0.77\\
\cline{1-9}
$\gamma=0.8$ & 0 & 0 & 0 & 0 & 0.53 & 0.3 & 0.8 & 0.61\\
\cline{1-9}
$\gamma=1$ & 0 & 0 & 0 & 0 & 0.53 & 0.39 & 0.53 & 0.39\\
\cline{1-9}
$\gamma=1.2$ & 0 & 0 & 0 & 0 & 0.53 & 0.45 & 0.3 & 0.21\\
\cline{1-9}
$\gamma=1.4$ & 0 & 0 & 0 & 0 & 0.53 & 0.49 & 0.15 & 0.11\\
\cline{1-9}
\end{tabular}

%% file: Tables/D1/utility_lists_f1_mean=4.05,n=12,c=0.26,d=0.05.tex

\begin{tabular}{lcccccccc}
\cline{1-9}
\multirow{2}{*}{$\gamma =\frac{\mu_2}{\mu_1}$} & \multicolumn{4}{c}{Equilibrium Delay} & \multicolumn{4}{c}{Equilibrium Utility}\\
\cline{2-9}
  & $\delta_0^\star$ & $\delta_0^{NE}$ & $\delta_1^\star$ & $\delta_1^{NE}$ & $u_0$ in $\Xi$ & $u_0$ in $\ourmech$ & $u_1$ in $\Xi$ & $u_1$ in $\ourmech$\\
\cline{1-9}
$\gamma=0.05$ & 0 & 0 & 3.6 & 3.5 & 0.44 & 0.45 & 1.92 & 1.26\\
\cline{1-9}
$\gamma=0.07$ & 0 & 0 & 3.45 & 3.35 & 0.44 & 0.44 & 1.88 & 1.24\\
\cline{1-9}
$\gamma=0.11$ & 0 & 0 & 3.15 & 3.05 & 0.44 & 0.41 & 1.8 & 1.2\\
\cline{1-9}
$\gamma=0.2$ & 0 & 0 & 2.55 & 2.4 & 0.44 & 0.36 & 1.63 & 1.12\\
\cline{1-9}
$\gamma=0.4$ & 0 & 0 & 1.45 & 1.15 & 0.44 & 0.26 & 1.29 & 0.95\\
\cline{1-9}
$\gamma=0.5$ & 0 & 0 & 0.95 & 0.6 & 0.44 & 0.23 & 1.14 & 0.87\\
\cline{1-9}
$\gamma=0.6$ & 0 & 0 & 0.5 & 0.15 & 0.44 & 0.21 & 1 & 0.79\\
\cline{1-9}
$\gamma=0.8$ & 0 & 0 & 0 & 0 & 0.44 & 0.27 & 0.74 & 0.58\\
\cline{1-9}
$\gamma=1$ & 0 & 0 & 0 & 0 & 0.44 & 0.35 & 0.45 & 0.35\\
\cline{1-9}
$\gamma=1.2$ & 0 & 0 & 0 & 0 & 0.44 & 0.39 & 0.23 & 0.18\\
\cline{1-9}
$\gamma=1.4$ & 0 & 0 & 0 & 0 & 0.44 & 0.42 & 0.11 & 0.09\\
\cline{1-9}
\end{tabular}

%% file: Tables/D1/utility_lists_f1_mean=4.29,n=12,c=0.26,d=0.05.tex

\begin{tabular}{lcccccccc}
\cline{1-9}
\multirow{2}{*}{$\gamma =\frac{\mu_2}{\mu_1}$} & \multicolumn{4}{c}{Equilibrium Delay} & \multicolumn{4}{c}{Equilibrium Utility}\\
\cline{2-9}
  & $\delta_0^\star$ & $\delta_0^{NE}$ & $\delta_1^\star$ & $\delta_1^{NE}$ &  $u_0$ in $\Xi$ & $u_0$ in $\ourmech$ & $u_1$ in $\Xi$ & $u_1$ in $\ourmech$\\
\cline{1-9}
$\gamma=0.05$ & 0 & 0 & 3.55 & 3.5 & 0.37 & 0.38 & 1.92 & 1.36\\
\cline{1-9}
$\gamma=0.07$ & 0 & 0 & 3.4 & 3.35 & 0.37 & 0.37 & 1.87 & 1.34\\
\cline{1-9}
$\gamma=0.11$ & 0 & 0 & 3.1 & 3 & 0.37 & 0.34 & 1.78 & 1.29\\
\cline{1-9}
$\gamma=0.2$ & 0 & 0 & 2.5 & 2.35 & 0.37 & 0.29 & 1.61 & 1.19\\
\cline{1-9}
$\gamma=0.4$ & 0 & 0 & 1.3 & 1.1 & 0.37 & 0.22 & 1.25 & 0.98\\
\cline{1-9}
$\gamma=0.5$ & 0 & 0 & 0.8 & 0.55 & 0.37 & 0.19 & 1.1 & 0.89\\
\cline{1-9}
$\gamma=0.6$ & 0 & 0 & 0.35 & 0.05 & 0.37 & 0.17 & 0.95 & 0.79\\
\cline{1-9}
$\gamma=0.8$ & 0 & 0 & 0 & 0 & 0.37 & 0.24 & 0.67 & 0.55\\
\cline{1-9}
$\gamma=1$ & 0 & 0 & 0 & 0 & 0.37 & 0.3 & 0.37 & 0.3\\
\cline{1-9}
$\gamma=1.2$ & 0 & 0 & 0 & 0 & 0.37 & 0.34 & 0.18 & 0.15\\
\cline{1-9}
$\gamma=1.4$ & 0 & 0 & 0 & 0 & 0.37 & 0.36 & 0.08 & 0.07\\
\cline{1-9}
\end{tabular}

%% file: references.bib
@misc{nakamoto2008bitcoin,
  author       = {Satoshi Nakamoto},
  title        = {Bitcoin: A Peer-to-Peer Electronic Cash System},
  year         = {2008},
  howpublished = {\url{https://bitcoin.org/bitcoin.pdf}},
  note         = {Accessed: 2025-09-20}
}

@article{buterin2013ethereum,
  title={Ethereum white paper},
  author={Buterin, Vitalik and others},
  journal={GitHub repository},
  volume={1},
  pages={22--23},
  year={2013}
}

@misc{githubLMDScoreBoosting,
	author = {Ethereum},
	title = {Proposer LMD Score Boosting},
	howpublished = {\url{https://github.com/ethereum/consensus-specs/pull/2730}},
	year = {2021},
	note = {[Accessed 04-12-2024]},
}

@misc{githubHonestReorgs,
	author = {Ethereum},
	title = {Allow honest validators to reorg late blocks},
	howpublished = {\url{https://github.com/ethereum/consensus-specs/pull/3034}},
	year = {2023},
	note = {[Accessed 04-12-2024]},
}

@misc{ethereum2024a,
  author = {{Ethereum}},
  title = {Why is it okay to centralize block building?},
  howpublished = {\url{https://ethereum.org/en/roadmap/pbs/}},
  year = {2024},
  note = {Accessed: 2024-12-03}
}

@misc{ethereum2024b,
	author = {Ethereum},
	title = {{M}aximal extractable value ({M}{E}{V}) | ethereum.org --- ethereum.org},
	howpublished = {\url{https://ethereum.org/en/developers/docs/mev/}},
	year = {2024},
	note = {[Accessed 08-12-2024]},
}

@INPROCEEDINGS{daian2019flash,
  author={Daian, Philip and Goldfeder, Steven and Kell, Tyler and Li, Yunqi and Zhao, Xueyuan and Bentov, Iddo and Breidenbach, Lorenz and Juels, Ari},
  booktitle={2020 IEEE Symposium on Security and Privacy (SP)}, 
  title={Flash Boys 2.0: Frontrunning in Decentralized Exchanges, Miner Extractable Value, and Consensus Instability}, 
  year={2020},
  volume={},
  number={},
  address={Online},
  publisher={IEEE},
  pages={910-927},
  doi={10.1109/SP40000.2020.00040}}

@misc{eth2bookUpgradingEthereum,
	author = {Ben Edgington},
	title = {{U}pgrading {E}thereum | 3.7.1 {P}hase 0 {F}ork {C}hoice --- eth2book.info},
	howpublished = {\url{https://eth2book.info/capella/part3/forkchoice/phase0/}},
	year = {2023},
	note = {[Accessed 09-12-2024]},
}

@misc{payloadEthereumData,
	author = {Payload.de},
	title = {{E}thereum {B}lock {V}alue {A}nalytics --- payload.de},
	howpublished = {\url{https://payload.de/data/}},
	year = {2025},
	note = {[Accessed 12-01-2025]},
}

@misc{reorgReorgpics,
	author = {Toni Wahrstätter},
	title = {{R}eorg.pics --- reorg.pics},
	howpublished = {\url{https://reorg.pics/}},
	year = {2025},
	note = {[Accessed 16-01-2025]},
}

@misc{timingTimingpics,
	author = {Toni Wahrstätter},
	title = {{T}iming.pics --- timing.pics},
	howpublished = {\url{https://timing.pics/}},
	year = {2025},
	note = {[Accessed 16-01-2025]},
}

@inproceedings{kiffer2021under,
author = {Kiffer, Lucianna and Salman, Asad and Levin, Dave and Mislove, Alan and Nita-Rotaru, Cristina},
title = {Under the Hood of the Ethereum Gossip Protocol},
year = {2021},
isbn = {978-3-662-64330-3},
publisher = {Springer-Verlag},
address = {Berlin, Heidelberg},
url = {https://doi.org/10.1007/978-3-662-64331-0_23},
doi = {10.1007/978-3-662-64331-0_23},
booktitle = {Financial Cryptography and Data Security: 25th International Conference, FC 2021, Virtual Event, March 1–5, 2021, Revised Selected Papers, Part II},
pages = {437–456},
numpages = {20}
}

@article{cheng2022fault,
  title={Fault Detection Method for Wi-Fi-Based Smart Home Devices},
  author={Cheng, Kefei and Xu, Jiashun and Zhang, Liang and Xu, ChengXin and Cui, Xiaotong},
  journal={Wireless communications and mobile computing},
  volume={2022},
  number={1},
  pages={4328307},
  year={2022},
  publisher={Wiley Online Library}
}

@INPROCEEDINGS{hwerbi2024delay,
  author={Hwerbi, Khouloud and Amdouni, Ichrak and Adjih, Cédric and Jacquet, Philippe and Saidane, Leila Azouz and Laouiti, Anis},
  booktitle={2024 20th International Conference on Wireless and Mobile Computing, Networking and Communications (WiMob)}, 
  title={Delay Analysis of the BFT Blockchain Data Dissemination: Case of Narwhal Protocol}, 
  year={2024},
  volume={},
  number={},
  address={Paris, France},
  publisher={IEEE},
  pages={651-656},
  doi={10.1109/WiMob61911.2024.10770300}}

@INPROCEEDINGS{misic2019block,
  author={Misic, Jelena and Misic, Vojislav B. and Chang, Xiaolin and Motlagh, Saeideh G. and Ali, M. Zulfiker},
  booktitle={ICC 2019 - 2019 IEEE International Conference on Communications (ICC)}, 
  title={Block Delivery Time in Bitcoin Distribution Network}, 
  year={2019},
  publisher={IEEE},
  volume={},
  number={},
  address={Online},
  pages={1-7},
  doi={10.1109/ICC.2019.8761420}}

@misc{ethresearAttestationsBlock,
	author = {Ethresearcha},
	title = {{O}n {A}ttestations, {B}lock {P}ropagation, and {T}iming {G}ames --- ethresear.ch},
	howpublished = {\url{https://ethresear.ch/t/on-attestations-block-propagation-and-timing-games/20272}},
	year = {2024},
	note = {[Accessed 25-09-2025]},
}

@misc{schwarzschilling2023timemoneystrategictiming,
      title={Time is Money: Strategic Timing Games in Proof-of-Stake Protocols}, 
      author={Caspar Schwarz-Schilling and Fahad Saleh and Thomas Thiery and Jennifer Pan and Nihar Shah and Barnabé Monnot},
      year={2023},
      eprint={2305.09032},
      archivePrefix={arXiv},
      primaryClass={cs.GT},
      url={https://arxiv.org/abs/2305.09032}, 
}

@misc{ethresearTimingGames,
	author = {Ethresearchb},
	title = {{T}iming {G}ames: {I}mplications and {P}ossible {M}itigations --- ethresear.ch},
	howpublished = {\url{https://ethresear.ch/t/timing-games-implications-and-possible-mitigations/17612}},
	year = {2023},
	note = {[Accessed 09-02-2025]},
}

@inproceedings{yin2019hotstuff,
author = {Yin, Maofan and Malkhi, Dahlia and Reiter, Michael K. and Gueta, Guy Golan and Abraham, Ittai},
title = {HotStuff: BFT Consensus with Linearity and Responsiveness},
year = {2019},
isbn = {9781450362177},
publisher = {Association for Computing Machinery},
address = {New York, NY, USA},
url = {https://doi.org/10.1145/3293611.3331591},
doi = {10.1145/3293611.3331591},
booktitle = {Proceedings of the 2019 ACM Symposium on Principles of Distributed Computing},
pages = {347–356},
numpages = {10},
location = {Toronto ON, Canada},
series = {PODC '19}
}

@misc{wahrstatter2023timebribemeasuringblock,
      title={Time to Bribe: Measuring Block Construction Market}, 
      author={Anton Wahrstätter and Liyi Zhou and Kaihua Qin and Davor Svetinovic and Arthur Gervais},
      year={2023},
      eprint={2305.16468},
      archivePrefix={arXiv},
      primaryClass={cs.NI},
      url={https://arxiv.org/abs/2305.16468}, 
}

@article{chen2022tips,
  title={TIPS: Transaction inclusion protocol with signaling in DAG-based blockchain},
  author={Chen, Canhui and Chen, Xu and Fang, Zhixuan},
  journal={IEEE Journal on Selected Areas in Communications},
  volume={40},
  number={12},
  pages={3685--3701},
  year={2022},
  publisher={IEEE}
}

@article{wang2022transaction,
  title={Transaction pricing mechanism design and assessment for blockchain},
  author={Wang, Zhilin and Hu, Qin and Wang, Yawei and Xiao, Yinhao},
  journal={High-Confidence Computing},
  volume={2},
  number={1},
  pages={100044},
  year={2022},
  publisher={Elsevier}
}

@inproceedings{damle2024no,
author = {Damle, Sankarshan and Srivastava, Varul and Gujar, Sujit},
title = {No Transaction Fees? No Problem! Achieving Fairness in Transaction Fee Mechanism Design},
year = {2024},
isbn = {9798400704864},
booktitle={Proceedings of the 25th International Conference on Autonomous Agents and Multiagent Systems},
publisher = {International Foundation for Autonomous Agents and Multiagent Systems},
address = {Richland, SC},
pages = {2228–2230},
numpages = {3},
location = {Auckland, New Zealand},
series = {AAMAS '24}
}

@inproceedings{BurakTiming2023,
author = {\"{O}z, Burak and Kraner, Benjamin and Vallarano, Nicol\`{o} and Kruger, Bingle Stegmann and Matthes, Florian and Tessone, Claudio Juan},
title = {Time Moves Faster When There is Nothing You Anticipate: The Role of Time in MEV Rewards},
year = {2023},
isbn = {9798400702617},
publisher = {Association for Computing Machinery},
address = {New York, NY, USA},
url = {https://doi.org/10.1145/3605768.3623563},
doi = {10.1145/3605768.3623563},
booktitle = {Proceedings of the 2023 Workshop on Decentralized Finance and Security},
pages = {1–8},
numpages = {8},
location = {Copenhagen, Denmark},
series = {DeFi '23}
}

@Inbook{Robshaw2011HashOWF,
author="Robshaw, Matthew J. B.",
title="One-Way Function",
bookTitle="Encyclopedia of Cryptography and Security",
year="2011",
publisher="Springer US",
address="Boston, MA",
pages="887--888",
isbn="978-1-4419-5906-5",
doi="10.1007/978-1-4419-5906-5_467"
}

@misc{ethereumEIP7732,
	author = "Francesco, D'Amato and Barnabé, Monnot and Michael, Neuder and Potuz and Terence, Tsao",
	title = {{E}{I}{P}-7732: {E}nshrined {P}roposer-{B}uilder {S}eparation --- eips.ethereum.org},
	howpublished = {\url{https://eips.ethereum.org/EIPS/eip-7732}},
	year = "2024",
	note = {[Accessed 09-07-2025]},
}

@InProceedings{mevSaga,
author="Ramos, Simona
and Ellul, Joshua",
editor="Ruiz, Marcela
and Soffer, Pnina",
title="The MEV Saga: Can Regulation Illuminate the Dark Forest?",
booktitle="Advanced Information Systems Engineering Workshops",
year="2023",
publisher="Springer International Publishing",
address="Cham",
pages="186--196",
}

@inproceedings{sokMEV,
author = {Yang, Sen and Zhang, Fan and Huang, Ken and Chen, Xi and Yang, Youwei and Zhu, Feng},
title = {SoK: MEV Countermeasures},
year = {2024},
isbn = {9798400712272},
publisher = {Association for Computing Machinery},
address = {New York, NY, USA},
url = {https://doi.org/10.1145/3689931.3694911},
doi = {10.1145/3689931.3694911},
booktitle = {Proceedings of the Workshop on Decentralized Finance and Security},
pages = {21–30},
numpages = {10},
location = {Salt Lake City, UT, USA},
series = {DeFi '24}
}

@InProceedings{RegMEV,
author="Ji, Yan
and Grimmelmann, James",
editor="Budurushi, Jurlind
and Kulyk, Oksana
and Allen, Sarah
and Diamandis, Theo
and Klages-Mundt, Ariah
and Bracciali, Andrea
and Goodell, Geoffrey
and Matsuo, Shin'ichiro",
title="Regulatory Implications of MEV Mitigations",
booktitle="Financial Cryptography and Data Security. FC 2024 International Workshops",
year="2025",
publisher="Springer Nature Switzerland",
address="Cham",
pages="335--363",
}

@misc{randao,
	author = {Ben Edgington},
	title = {{U}pgrading {E}thereum --- eth2book.info},
	howpublished = {\url{https://eth2book.info/capella/}},
	year = {2024},
	note = {[Accessed 10-07-2025]},
}

@misc{inclusion1,
	author = {Neuder, M. and Buterin, V. and D’Amato, F. and Tsao, T. and Darji, M.},
	title = {{E}{I}{P}-7547: {I}nclusion lists --- eips.ethereum.org},
	howpublished = {\url{https://eips.ethereum.org/EIPS/eip-7547}},
	year = {2024},
	note = {[Accessed 10-07-2025]},
}

@misc{inclusion2,
	author = {Thiery, T. and D'Amato, F. and Ma, J. and Monnot, B. and Tsao, T. and Kaufmann, J. and Song, J.},
	title = {{E}{I}{P}-7805: {F}ork-choice enforced {I}nclusion {L}ists ({F}{O}{C}{I}{L}) --- eips.ethereum.org},
	howpublished = {\url{https://eips.ethereum.org/EIPS/eip-7805}},
	year = {2024},
	note = {[Accessed 10-07-2025]},
}

@misc{garimidi2025transaction,
      title={Transaction Fee Mechanism Design for Leaderless Blockchain Protocols}, 
      author={Pranav Garimidi and Lioba Heimbach and Tim Roughgarden},
      year={2025},
      eprint={2505.17885},
      archivePrefix={arXiv},
      primaryClass={cs.GT},
      url={https://arxiv.org/abs/2505.17885}, 
}

@misc{stouka2025multiple,
      title={Multiple Proposer Transaction Fee Mechanism Design: Robust Incentives Against Censorship and Bribery}, 
      author={Aikaterini-Panagiota Stouka and Julian Ma and Thomas Thiery},
      year={2025},
      eprint={2505.13751},
      archivePrefix={arXiv},
      primaryClass={cs.GT},
      url={https://arxiv.org/abs/2505.13751}, 
}

@INPROCEEDINGS{VRaas2024,
  author={Gorman, Jacob and Hanzlik, Lucjan and Kate, Aniket and Mangipudi, Easwar Vivek and Mukherjee, Pratyay and Sarkar, Pratik and Thyagarajan, Sri AravindaKrishnan},
  booktitle={2025 IEEE 38th Computer Security Foundations Symposium (CSF)}, 
  title={VRaaS: Verifiable Randomness as a Service on Blockchains}, 
  year={2025},
  volume={},
  number={},
  address={Santa Cruz, CA, USA},
  publisher={IEEE},
  pages={331-346},
  doi={10.1109/CSF64896.2025.00034}}

@InProceedings{ourborous,
author="Kiayias, Aggelos
and Russell, Alexander
and David, Bernardo
and Oliynykov, Roman",
editor="Katz, Jonathan
and Shacham, Hovav",
title="Ouroboros: A Provably Secure Proof-of-Stake Blockchain Protocol",
booktitle="Advances in Cryptology -- CRYPTO 2017",
year="2017",
publisher="Springer International Publishing",
address="Cham",
pages="357--388",
isbn="978-3-319-63688-7"
}

@inproceedings{ouroborosGenesis,
author = {Badertscher, Christian and Ga\v{z}i, Peter and Kiayias, Aggelos and Russell, Alexander and Zikas, Vassilis},
title = {Ouroboros Genesis: Composable Proof-of-Stake Blockchains with Dynamic Availability},
year = {2018},
isbn = {9781450356930},
publisher = {Association for Computing Machinery},
address = {New York, NY, USA},
url = {https://doi.org/10.1145/3243734.3243848},
doi = {10.1145/3243734.3243848},
booktitle = {Proceedings of the 2018 ACM SIGSAC Conference on Computer and Communications Security},
pages = {913–930},
numpages = {18},
location = {Toronto, Canada},
series = {CCS '18}
}

@InProceedings{ouroborosPraos,
author="David, Bernardo
and Ga{\v{z}}i, Peter
and Kiayias, Aggelos
and Russell, Alexander",
editor="Nielsen, Jesper Buus
and Rijmen, Vincent",
title="Ouroboros Praos: An Adaptively-Secure, Semi-synchronous Proof-of-Stake Blockchain",
booktitle="Advances in Cryptology -- EUROCRYPT 2018 ",
year="2018",
publisher="Springer International Publishing",
address="Cham",
pages="66--98",
isbn="978-3-319-78375-8"
}

@phdthesis{buchman2016tendermint,
  title={Tendermint: Byzantine fault tolerance in the age of blockchains},
  author={Buchman, Ethan},
  year={2016},
  school={University of Guelph}
}

@misc{rasheed2024evolution,
      title={MEV Ecosystem Evolution From Ethereum 1.0}, 
      author={Rasheed and Yash Chaurasia and Parth Desai and Sujit Gujar},
      year={2024},
      eprint={2406.13585},
      archivePrefix={arXiv},
      primaryClass={cs.CR},
      url={https://arxiv.org/abs/2406.13585}, 
}

@InProceedings{ProofofSpace,
author="Ateniese, Giuseppe
and Bonacina, Ilario
and Faonio, Antonio
and Galesi, Nicola",
editor="Abdalla, Michel
and De Prisco, Roberto",
title="Proofs of Space: When Space Is of the Essence",
booktitle="Security and Cryptography for Networks",
year="2014",
publisher="Springer International Publishing",
address="Cham",
pages="538--557",
isbn="978-3-319-10879-7"
}

@ARTICLE{repuCoin,
  author={Yu, Jiangshan and Kozhaya, David and Decouchant, Jeremie and Esteves-Verissimo, Paulo},
  journal={IEEE Transactions on Computers}, 
  title={RepuCoin: Your Reputation Is Your Power}, 
  year={2019},
  volume={68},
  number={8},
  pages={1225-1237},
  doi={10.1109/TC.2019.2900648}}

@misc{ethereum_validator_phase0,
  title        = {Ethereum Consensus Specs: Phase 0 -- Honest Validator},
  author       = {{Ethereum Foundation}},
  howpublished = {\url{https://github.com/ethereum/consensus-specs/blob/dev/specs/phase0/validator.md}},
  note         = {Accessed: 2025-07-29},
  year         = {2023},
  version      = {dev branch}
}

@misc{blazquez2023relays,
  author       = {Alvaro Blazquez and Sachin Lal},
  title        = {Relays are a Latency Game},
  howpublished = {\url{https://blog.metrika.co/relays-are-a-latency-game-303aadb393ce}},
  note         = {Published by Metrika on Medium. Accessed: 2025-07-29},
  year         = {2023}
}

@inproceedings{micali1999verifiable,
  author    = {Silvio Micali and Michael Rabin and Salil Vadhan},
  title     = {Verifiable Random Functions},
  booktitle = {Proceedings 40th Annual Symposium on Foundations of Computer Science (Cat. No.99CB37039)},
  year      = {1999},
  pages     = {120--130},
  doi       = {10.1109/SFFCS.1999.814584},
  publisher = {IEEE},
  address = {USA}
}

@book{myerson1997,
  author    = {Roger B. Myerson},
  title     = {Game Theory: Analysis of Conflict},
  publisher = {Harvard University Press},
  year      = {1997},
  address   = {Cambridge, MA},
  isbn      = {978-0674341166}
}

@article{nash1950equilibrium,
  title={Equilibrium points in n-person games},
  author={Nash Jr, John F},
  journal={Proceedings of the national academy of sciences},
  volume={36},
  number={1},
  pages={48--49},
  year={1950},
  publisher={national academy of sciences}
}

@inproceedings{karakostas2024blockchain,
author = {Karakostas, Dimitris and Kiayias, Aggelos and Zacharias, Thomas},
title = {Blockchain Bribing Attacks and the Efficacy of Counterincentives},
year = {2024},
isbn = {9798400706363},
publisher = {Association for Computing Machinery},
address = {New York, NY, USA},
url = {https://doi.org/10.1145/3658644.3670330},
doi = {10.1145/3658644.3670330},
booktitle = {Proceedings of the 2024 on ACM SIGSAC Conference on Computer and Communications Security},
pages = {1031–1045},
numpages = {15},
location = {Salt Lake City, UT, USA},
series = {CCS '24}
}

@techreport{buterin2018discouragement,
  author      = {Vitalik Buterin},
  title       = {Discouragement Attacks},
  institution = {Ethereum Foundation},
  year        = {2018},
  note        = {\url{https://eips.ethereum.org/assets/eip-2982/ef-Discouragement-Attacks.pdf}}
}

@article{knight2018nashpy,
  title={Nashpy: A Python library for the computation of Nash equilibria},
  author={Knight, Vincent and Campbell, James},
  journal={Journal of Open Source Software},
  volume={3},
  number={30},
  pages={904},
  year={2018},
  publisher={The Open Journal},
  doi={10.21105/joss.00904},
  url={https://doi.org/10.21105/joss.00904}
}

@book{tadelis2012game,
  title = {Game Theory: An Introduction},
  author = {Tadelis, Steven},
  year = {2012},
  publisher = {Princeton University Press},
  address = {Princeton, NJ},
  isbn = {9780691129082}
}

@misc{sorellaLabs,
	author = {{Sorella Labs}},
	title = {{S}orella {L}abs | {F}air {M}arkets for {A}ll {D}e{F}i {U}sers --- sorellalabs.xyz},
	howpublished = {\url{https://sorellalabs.xyz/}},
	year = {2024},
	note = {[Accessed 01-08-2025]},
}

@InProceedings{gilbert2003security,
author="Gilbert, Henri
and Handschuh, Helena",
editor="Matsui, Mitsuru
and Zuccherato, Robert J.",
title="Security Analysis of SHA-256 and Sisters",
booktitle="Selected Areas in Cryptography",
year="2004",
publisher="Springer Berlin Heidelberg",
address="Berlin, Heidelberg",
pages="175--193",
isbn="978-3-540-24654-1"
}

@ARTICLE{gammap2p2007,
       author = {{Guo}, Tong-qiang and {Weng}, Jian-guang and {Zhuang}, Yue-ting},
        title = "{Content subscribing mechanism in P2P streaming based on gamma distribution prediction}",
      journal = {Journal of Zhejiang University-SCIENCE A},
         year = 2007,
        month = nov,
       volume = {8},
       number = {12},
        pages = {1983-1989},
          doi = {10.1631/jzus.2007.A1983},
       adsurl = {https://ui.adsabs.harvard.edu/abs/2007JZUSA...8.1983G}
}

@techreport{avarikioti2022harmony,
    author = "Avarikioti, Zeta and Karakostas, Dimitris",
    title = "Harmony technical report",
    institution = "Harmony",
    year = {2022}
}

@inproceedings{camenisch2022internet,
  title={Internet computer consensus},
  author={Camenisch, Jan and Drijvers, Manu and Hanke, Timo and Pignolet, Yvonne-Anne and Shoup, Victor and Williams, Dominic},
  booktitle={Proceedings of the 2022 ACM Symposium on Principles of Distributed Computing},
  pages={81--91},
  year={2022}
}

@InProceedings{bentov2016snow,
author="Daian, Phil
and Pass, Rafael
and Shi, Elaine",
editor="Goldberg, Ian
and Moore, Tyler",
title="Snow White: Robustly Reconfigurable Consensus and Applications to Provably Secure Proof of Stake",
booktitle="Financial Cryptography and Data Security",
year="2019",
publisher="Springer International Publishing",
address="Cham",
pages="23--41",
isbn="978-3-030-32101-7"
}

@misc{codexBlockDiffusion,
	author = {Csaba, Kiraly and Leonardo, Bautista-Gomez},
	title = {{B}ig {B}lock {D}iffusion and {O}rganic {B}ig {B}locks on {E}thereum --- blog.codex.storage},
	howpublished = {\url{https://blog.codex.storage/big-blocks-on-mainnet/}},
	year = {2023},
	note = {[Accessed 25-09-2025]},
}

@InProceedings{ethGossip2021,
author="Kiffer, Lucianna
and Salman, Asad
and Levin, Dave
and Mislove, Alan
and Nita-Rotaru, Cristina",
editor="Borisov, Nikita
and Diaz, Claudia",
title="Under the Hood of the Ethereum Gossip Protocol",
booktitle="Financial Cryptography and Data Security",
year="2021",
publisher="Springer Berlin Heidelberg",
address="Berlin, Heidelberg",
pages="437--456",
isbn="978-3-662-64331-0"
}

@inproceedings{ijcai2024,
  title     = {Game Transformations That Preserve Nash Equilibria or Best-Response Sets},
  author    = {Tewolde, Emanuel and Conitzer, Vincent},
  booktitle = {Proceedings of the Thirty-Third International Joint Conference on
               Artificial Intelligence, {IJCAI-24}},
  publisher = {International Joint Conferences on Artificial Intelligence Organization},
  editor    = {Kate Larson},
  pages     = {2984--2993},
  year      = {2024},
  month     = {8},
  note      = {Main Track},
  doi       = {10.24963/ijcai.2024/331},
  url       = {https://doi.org/10.24963/ijcai.2024/331},
  address = {Jeju, Korea}
}

@inproceedings{BitcoinF2020,
author = {Siddiqui, Shoeb and Vanahalli, Ganesh and Gujar, Sujit},
title = {BitcoinF: Achieving Fairness For Bitcoin In Transaction Fee Only Model},
year = {2020},
isbn = {9781450375184},
publisher = {International Foundation for Autonomous Agents and Multiagent Systems},
address = {Richland, SC},
pages = {2008–2010},
numpages = {3},
location = {Auckland, New Zealand},
series = {AAMAS '20}
}

@inproceedings{NFairness2021,
  author       = {Anurag Jain and
                  Shoeb Siddiqui and
                  Sujit Gujar},
  editor       = {Frank Dignum and
                  Alessio Lomuscio and
                  Ulle Endriss and
                  Ann Now{\'{e}}},
  title        = {We might walk together, but {I} run faster: Network Fairness and Scalability
                  in Blockchains},
  booktitle    = {{AAMAS} '21: 20th International Conference on Autonomous Agents and
                  Multiagent Systems, Virtual Event, United Kingdom, May 3-7, 2021},
  pages        = {1539--1541},
  publisher    = {{ACM}},
  year         = {2021},
  url          = {https://www.ifaamas.org/Proceedings/aamas2021/pdfs/p1539.pdf},
  doi          = {10.5555/3463952.3464152},
  bibsource    = {dblp computer science bibliography, https://dblp.org}
}

@inproceedings{heimbach2022,
author = {Heimbach, Lioba and Wattenhofer, Roger},
title = {Eliminating Sandwich Attacks with the Help of Game Theory},
year = {2022},
isbn = {9781450391405},
publisher = {Association for Computing Machinery},
address = {New York, NY, USA},
url = {https://doi.org/10.1145/3488932.3517390},
doi = {10.1145/3488932.3517390},
booktitle = {Proceedings of the 2022 ACM on Asia Conference on Computer and Communications Security},
pages = {153–167},
numpages = {15},
location = {Nagasaki, Japan},
series = {ASIA CCS '22}
}

@inproceedings{braga2024,
author = {Braga, Pedro and Chionas, Georgios and Krysta, Piotr and Leonardos, Stefanos and Piliouras, Georgios and Ventre, Carmine},
title = {Who gets the Maximal Extractable Value? A Dynamic Sharing Blockchain Mechanism},
year = {2024},
isbn = {9798400704864},
publisher = {International Foundation for Autonomous Agents and Multiagent Systems},
address = {Richland, SC},
booktitle = {Proceedings of the 23rd International Conference on Autonomous Agents and Multiagent Systems},
pages = {2171–2173},
numpages = {3},
location = {Auckland, New Zealand},
series = {AAMAS '24}
}

@inproceedings{Rasheed2025,
author = {Rasheed and Desai, Parth Nimish and Chaurasia, Yash and Gujar, Sujit},
title = {Shapley Value-based Approach for Distributing Revenue of Matchmaking of Private Transactions in Blockchains},
year = {2025},
isbn = {9798400714269},
publisher = {International Foundation for Autonomous Agents and Multiagent Systems},
address = {Richland, SC},
booktitle = {Proceedings of the 24th International Conference on Autonomous Agents and Multiagent Systems},
pages = {2723–2725},
numpages = {3},
location = {Detroit, MI, USA},
series = {AAMAS '25}
}

@inproceedings{chitra2022,
author = {Chitra, Tarun and Kulkarni, Kshitij},
title = {Improving Proof of Stake Economic Security via MEV Redistribution},
year = {2022},
isbn = {9781450398824},
publisher = {Association for Computing Machinery},
address = {New York, NY, USA},
url = {https://doi.org/10.1145/3560832.3564259},
doi = {10.1145/3560832.3564259},
booktitle = {Proceedings of the 2022 ACM CCS Workshop on Decentralized Finance and Security},
pages = {1–7},
numpages = {7},
location = {Los Angeles, CA, USA},
series = {DeFi'22}
}
